\let\csname equation*\endcsname\relax
\let\csname endequation*\endcsname\relax
\theoremstyle{plain}
\newtheorem{thm}{Theorem}
\newtheorem{prp}[thm]{Proposition}
\newtheorem{lem}[thm]{Lemma}
\newtheorem{rem}[thm]{Remark}
\DeclareMathOperator{\tr}{\mathrm{Tr}}
\newcommand{\ol}{\overline}
\newcommand{\ep}{\epsilon}
\newcommand{\ve}{\varepsilon}
\newcommand{\lsrj}{\relbar\joinrel\twoheadrightarrow}
\newcommand{\lrto}{\leftrightarrow}
\newcommand{\midd}{\, | \,}
\newcommand{\tmix}{\text{mix}}
\newcommand{\bbC}{\mathbb{C}}
\newcommand{\bbE}{\mathbb{E}}
\newcommand{\bbR}{\mathbb{R}}
\newcommand{\bbV}{\mathbb{V}}
\newcommand{\bbZ}{\mathbb{Z}}
\newcommand{\frS}{\mathfrak{S}}
\newcommand{\clH}{\mathcal{H}}
\newcommand{\clU}{\mathcal{U}}
\newcommand{\clV}{\mathcal{V}}
\newcommand{\clX}{\mathcal{X}}
\newcommand{\sfP}{\mathsf{P}}
\newcommand{\clN}{\mathfrak{N}}
\DeclareMathOperator{\av}{av}
\DeclareMathOperator{\id}{\mathsf{id}}
\DeclareMathOperator{\SU}{SU}
\DeclareMathOperator{\Prb}{Pr}
\newcommand{\kb}[1]{\left| #1 \right> \! \left< #1 \right|}
\newcommand{\hg}[2]{{}_{#1} F_{#2}}
\newcommand{\abs}[1]{\left| #1 \right|}
\newcommand{\tket}[1]{| #1 \rangle}
\begin{document}
\title{Asymmetry activation and its relation to 
coherence under permutation operation}

\author{Masahito Hayashi}
\address{School of Data Science, The Chinese University of Hong Kong,
Shenzhen, Longgang District, Shenzhen, 518172, China}
\address{International Quantum Academy, Futian District, Shenzhen 518048, China}
\address{Graduate School of Mathematics, Nagoya University, Furo-cho, Chikusa-ku, Nagoya, 464-8602, Japan}
\orcid{0000-0003-3104-1000}
\email{hmasahito@cuhk.edu.cn, masahito@math.nagoya-u.ac.jp}
\maketitle

\begin{abstract}
A Dicke state and its decohered state are invariant for permutation.
However, when another qubits state to each of them is attached, the whole state is not invariant for permutation, and has a certain asymmetry for permutation.
The amount of asymmetry can be measured by the number of distinguishable states under the group action or the mutual information.
Generally, the amount of asymmetry of the whole state 
is larger than the amount of asymmetry of the added state.
That is, the asymmetry activation happens in this case.
This paper investigates the amount of the asymmetry activation
under Dicke states.
To address the asymmetry activation asymptotically, we introduce a new type of central limit theorem by using several formulas on hypergeometric functions.
We reveal that the amounts of the asymmetry and the asymmetry activation
with a Dicke state have a strictly larger order than those with the decohered state in a specific type of the limit.
\end{abstract}

\section{Introduction}\label{s:intro}
Asymmetry is a root of the variety of various physical phenomena.
Consider a perfectly symmetric state. Such a state does not reflect any changes so that 
it can realize no variety.
To realize rich variety, the state needs to have sufficient asymmetry.
In physics, people consider that symmetry breaking is needed for our world.
Therefore, to explain various phenomena, 
many researchers introduced spontaneous symmetry breaking, which plays 
an important role in various areas in physics,
nuclear star, superfluidity, superconductivity, cold atoms, Higgs boson, Nambu-goldstone boson.
When the degree of symmetry breaking is large, it realizes rich phenomena.
Since the amount of symmetry breaking can be restated as
the amount of asymmetry, it is a central topic to study the amount of asymmetry.

Fortunately, quantum information theory has strong tools to evaluate the amount of asymmetry \cite{VAW,Marvian}.
The variety generated by group symmetry under a given state
can be measured by the number of distinguishable states generated by group action to describe the 
symmetry.
This number is evaluated by 
the difference between 
the von Neumann entropy of the averaged state with respect to the group action
and the von Neumann entropy of the original state.
This fact can be shown by a simple application of quantum channel coding theorem \cite{Holevo,SW}
to the family of the states translated by the group action 
although this idea can be backed to various studies \cite{Hiroshima,H15,Korzekwa}.
Therefore, the difference between the von Neumann entropies of these two states
can be considered as the amount of the asymmetry of the given state.
Now, we focus on the permutation symmetry on qubits
similar to the recent paper \cite{ICM}.
When we apply the permutation of the state $\ket{1^M \, 0^N}$,
the averaged state $\rho_{mix,N+M,M}$ is permutation-invariant, and has no asymmetry.
However, it still works as a resource for asymmetry 
when it is attached to another qubits state $\ket{1^l \, 0^{k-l}}$.
When we have only the state $\ket{1^l \, 0^{k-l}}$,
the amount of asymmetry is measured by the von Neumann entropy $S(\rho_{mix,k,l})$
of $\rho_{mix,k,l}$.
When the state $\rho_{mix,N+M,M}$ is attached,
the amount of asymmetry is $S(\rho_{mix,k+N+M,l+M})-S(\rho_{mix,N+M,M})$,
which is larger than the asymmetry $S(\rho_{mix,k,l})$ of the added state.
In this paper, we focus on the increase of the amount of asymmetry by adding an invariant state,
and call this phenomena an asymmetry activation.
In fact, an asymmetry activation with various groups often
happens behind of symmetry breaking. 
In this paper, we study the asymmetry activation
under quantum information-theoretical objects
because no systematic research studied the asymmetry activation
in the area of quantum information.

Now, we discuss how quantum theory can improve 
the asymmetry activation
because the above state is classical.
To address this problem, we focus on the coherence, which is a key concept in
quantum theory \cite{Streltsov}.
That is, we consider 
a permutation-invariant state with great coherence.
We focus on the Dicke state $ \ket{\Xi_{N+M,M}}$ with the same weight as
the classical state $\rho_{mix,N+M,M}$.
This state has the largest coherence among 
invariant states with the same weight as the state $\rho_{mix,N+M,M}$
because any invariant state with the same weight has the same diagonal elements as 
the state $\rho_{mix,N+M,M}$.
A Dicke state has been playing an important role in 
the calculation of entanglement measures \cite{WG,HMMOV,WEGM,W,ZCH}, 
quantum communication, and quantum networking \cite{KSTSW,PCTPWKZ}. 
The experiments \cite{KSTSW,PCTPWKZ} are motivated by the fact the overlap
of symmetric Dicke states with biseparable states is close to
1/2 for large $N$ \cite{Toth}.
Multipartite entanglement has been detected in an ensemble of thousand of
atoms realizing a Dicke state \cite{LPVA}.
Some typical Dicke states have been realized in trapped atomic ions \cite{HCTW}. 
Recently, the multi-qubit Dicke state with half-excitations
has been employed to implement a scalable quantum search based on
Grover's algorithm by using adiabatic techniques \cite{IILV}. 
These studies show the importance of Dicke states

In particular, the Dicke state can be used for quantum metrology, which has been tested experimentally in cold gases of thousands of atoms \cite{LSKP,HGHB}.
The Dicke states are optimal in linear interferometers, in the sense,
that they saturate the inequality
$F_Q[J_x,\rho]+F_Q[J_y,\rho]+F_Q[J_z,\rho] \le N(N+2)/4$,
which are saturated also by GHZ states \cite{HLKW,Toth2}. Here, $F_Q$ is the quantum Fisher
information corresponding to a unitary dynamics.
Also the reference \cite{HO} studied the estimation of the same direction with Dicke states.
These studies consider the information only for the direction of the rotation by the special unitary group.
They do not focus on the direction of permutation because 
the Dicke state is invariant for permutation.
It is a completely new idea to use the information of the direction of permutation by modifying 
the Dicke state. 
To execute this new idea,
using non-negative integers $n,m,k,l$ as $M:=m-l$, $N:=n-m-k+l$, we consider
the asymmetry of the following state.
\begin{align}\label{eq:intro:Xi}
 \ket{\Xi_{n,m|k,l}} := \ket{1^l \, 0^{k-l}} \otimes \ket{\Xi_{N+M,M}}
 \in (\bbC^2)^{\otimes n}.
\end{align}
The amount of asymmetry is the von Neumann entropy
$S(\av_\pi (
\ket{\Xi_{n,m|k,l}}))$,
where $\av_\pi$ expresses the averaged state with respect to the permutation.

However, the calculation of this von Neumann entropy is not simple
while $S(\rho_{mix,k,l})$ is calculated to $\log \binom{k}{l}$, which is approximated to
$l h(k/l)$, where $h(x):=-x\log x-(1-x)\log (1-x)$ is the binary entropy.
This calculation is closely related to Schur-Weyl duality, one of the key 
structure in the representation theory.
When we make a measurement given by the decomposition defined by 
Schur-Weyl duality for the system whose state is
$\ket{\Xi_{n,m|k,l}}$,
we obtain the distribution of this measurement outcome.
This distribution takes a key role in this calculation,
The paper \cite{HHY1} 
discovered notable formulas for this distribution
by using Hahn and Racah polynomials, which are 
$\hg{3}{2}$- and $\hg{4}{3}$-hypergeometric orthogonal polynomials, respectively.
It is possible to numerically calculate
this von Neumann entropy by using these tools.

Even though its numerical calculation is possible, 
it is quite difficult to understand its behavior.
To grasp its trend, we consider two kinds of asymptotics, i.e., Type I and Type II limits.
Type I limit assumes that $k$ and $l$ are fixed and $m$ is linear for $n$, 
i.e., $m=\xi n$ with fixed ratio $\xi$, under $n \to \infty$.
Type II limit does that $k,l$ and $m$ are linear for $n$ under $n \to \infty$.
Type I limit corresponds to the situation of the law of small numbers,
and Type II limit corresponds to the situation of the central limit theorem.
For the relation of 
Type I and Type II limits
with the law of small numbers and 
the central limit theorem, see Appendix \ref{S-binomial}.
Based on these formulas, 
the paper \cite{HHY2} derived the asymptotic approximation of the distribution in the limit $n \to \infty$.

In Type I limit, we show that
the von Neumann entropy $S\bigl(\av_\pi[\Xi_{n,m|k,l}]\bigr)$ 
behaves  as
\begin{align}
\begin{split}
&S\bigl(\av_\pi[\Xi_{n,m|k,l}]\bigr)\\
= &((k-l)\xi+l(1-\xi)) \log n +O(1).
\end{split}
\label{eq:intro:S-asymp}
\end{align}
The same scaling $\log n$ appears in the bosonic coherent channel when 
the total power is fixed and the number $n$ of modes increases \cite{Ha10-1}.
Figure \ref{fig:intro:vN} shows an example of this approximation.
When we use the decohered state $\rho_{mix,N+M,M}$,
the amount of asymmetry is calculated as
\begin{align}
\begin{split}
&S(\rho_{mix,k+N+M,l+M})-S(\rho_{mix,N+M,M}) \\
=& k h( \xi)- h'( \xi)(\xi k-l)
+O(1) .\label{NHI1B}
\end{split}
\end{align}
The amount of asymmetry in \eqref{NHI1B}
is a constant for $n$
while the case with Dicke state $  \ket{\Xi_{N+M,M}}$ 
has the order $\log n $ as \eqref{eq:intro:S-asymp}.
In particular, 
in this scenario, while the size of the added system is finite,
the amount of activated asymmetry goes to infinity.
Therefore, 
the Dicke state $  \ket{\Xi_{N+M,M}}$ significantly improves
the amount of asymmetry over the decohered state
$\rho_{mix,N+M,M}$ under Type I limit.

In Type II limit, i.e., the limit $n \to \infty$ with $k,l$ and $m$ linear for $n$,
we need to treat many ratios and parameters,
which are summarized in Table \ref{tab:intro:IIprm}.
While 
the fixed ratios 1 and 2 are natural,
the fixed ratio 3 is useful for our purpose.

\begin{table}[htbp]
\begin{tabular}{l|l}
\hline
 fixed ratio 
 & $\alpha=\frac{l  }{n}$,~ $\beta=\frac{m-l}{n}$ \\
 set 1 
&   $\gamma=\frac{k-l}{n}$,~ $\delta=\frac{n-m-k+l}{n}$ \\
\hline
 fixed ratio 
 & $\alpha=\frac{l  }{n}$,~ $\xi=\frac{m}{n}$ \\
 set 2 &
   $\kappa=\frac{k  }{n}$ \\
   [1ex] 
 fixed ratio 
 & $\beta=\frac{m-l}{n}$,~  $\delta=\frac{n-m-k+l}{n}$\\
 set 3 &
   $\xi=\frac{m}{n}$  \\
   \hline 
   \hline 
 limit pdf 
 &$\mu := \frac{1-\sqrt{D}}{2}$ ,~
  $\sigma^2  := \frac{(1-\beta-\delta)\beta\delta}{D} $ \\
 parameters 
 & $D := 4 \beta \delta+ (2\xi-1)^2$\\
   \hline 
 \end{tabular}
\caption{Ratios and parameters for Type II limit}
\label{tab:intro:IIprm}
\end{table}

Then, we show that
\begin{align}
 S(\av_{\pi}[\Xi_{n,m|k,l}]) = n h(\mu)+o(n)\label{DSAP}.
\end{align}
Since the amount of asymmetry of the added system is 
$\log \binom{k}{l}$ and is approximated to
$n (\alpha+\gamma) h(\frac{\alpha}{\alpha+\gamma})+o(n)$, the amount of 
the asymmetry activation is approximated to
$n(h(\mu)-(\alpha+\gamma) h(\frac{\alpha}{\alpha+\gamma})) +o(n)$.

When we use the decohered state $\rho_{mix,N+M,M}$,
the amount of asymmetry is calculated as
\begin{align}
\begin{split}
&S(\rho_{mix,k+N+M,l+M})-S(\rho_{mix,N+M,M})\\
=&
n(h(\xi)-(\beta+\delta)h(\frac{\beta}{\beta+\delta}))
+o(n).
\end{split}
\label{NHI2Y}
\end{align}
Hence,
the amount of the asymmetry activation is
approximated to
$n(h(\xi)-(\beta+\delta)h(\frac{\beta}{\beta+\delta}))
-(\alpha+\gamma) h(\frac{\alpha}{\alpha+\gamma})) +o(n)$.
For this derivation, we employ a kind of central limit theorem
obtained in \cite{HHY2}.
Since the inequality $h(\mu)\ge h(\xi)-(\beta+\delta) h(\frac{\beta}{\beta+\delta})$
holds as shown later,
the Dicke state $  \ket{\Xi_{N+M,M}}$ significantly improves
the amount of asymmetry over the decohered state
$\rho_{mix,N+M,M}$ even under Type II limit.
In addition, similar characterization is also possible from the viewpoint of the number of distinguished states.

When the Dicke state $ \ket{\Xi_{N+M,M}}$
is decohered, the state changes to the classical state $\rho_{mix,N+M,M}$.
Since both states are invariant for permutation,
their difference is the existence/non-existence of the coherence.
Therefore, the difference between \eqref{eq:intro:S-asymp} and \eqref{NHI1B}
and the difference between \eqref{DSAP} and \eqref{NHI2Y}
can be considered as the improvement of 
asymmetry by using the coherence.
In addition, the formula \eqref{DSAP} has a very simple form 
and is characterized by the binary entropy of the quantity $\mu$.
Since the quantity $\mu$ has not appeared in physics, 
the formula \eqref{DSAP} means that
the quantity $\mu$ is a newly discovered quantity
that has a physical role related to asymmetry.
Therefore, we can expect that 
the new quantity $\mu$ takes certain roles in other topics.

The remaining part of this paper is organized as follows.
As a preparation, Section \ref{S15} discusses the general theory for the amount of asymmetry,
and prepares several formulas for the number of distinguished states.
Also, it explains the general theory for asymmetry activation.
Section \ref{S2-1} gives our problem setting, and introduces
a key distribution to determine the amount of asymmetry.
Section \ref{S2-15} discusses the amount of asymmetry for the decohered state.
Section \ref{S2-2} reviews the asymptotic behavior of the distribution related to
Schur-Weyl duality, which was obtained in the paper \cite{HHY2}.
Section \ref{S2-3} derives the asymptotic behavior of the von Neumann entropy of the average state, 
which expresses the asymptotic behavior.
Section \ref{S2-4} considers the  asymptotic behavior of 
the number of distinguishable states.
Section \ref{S6} gives another example for the asymmetry activation.
Section \ref{S5} gives the conclusion and discussion.

Appendix \ref{s:SS} gives the proofs of the statements given in Section \ref{S2-1}.
Appendix \ref{NVR} gives the derivation of the statement given in Section \ref{S2-15}.
Appendix \ref{NVR2} gives the derivation of the statements given in Sections \ref{S2-3} and \ref{S2-4}.
Appendix \ref{ss:II:sp} shows 
an asymptotic formula for the von Neumann entropy 
presented in Section \ref{S2-3}.

\section{General theory}\label{S15B}
\subsection{General theory for asymmetry}\label{S15}
First, we discuss a general theory for asymmetry when
a unitary representation $f$ of a finite group $G$ is given 
on $\clH$.
Since the invariance of a state $\rho$ under the $G$-action $f$ 
is expressed as $f(g) \rho f(g)^\dagger= \kb{\psi}$ for any $g \in G$,
the asymmetry of $\rho $ is considered 
as the degree of change caused by the $G$-action.
When a state $f(g) \rho  f(g)^\dagger$ is generated with equal probability $\frac{1}{\abs{G}}$,
the averaged state with respect to the representation $f$
is given as
\begin{align}
 \av_f[\rho ] := \sum_{g \in G} \frac{1}{\abs{G}} f(g) \rho  f(g)^\dagger.
\end{align}
In this case, the map $g \mapsto f(g) \rho  f(g)^\dagger$
is considered as a classical-quantum channel, and
the mutual information is given as
\begin{align}
\begin{split}
&S(\av_f[\rho ])-
\sum_{g \in G} \frac{1}{\abs{G}} 
S(f(g) \rho f(g)^\dagger) \\
=&
S(\av_f[\rho ])-S( \rho),\label{BIE}
\end{split}
\end{align}
where
the von Neumann entropy $S(\rho)$ is defined as
$-\tr \rho \log \rho$.
If we use the relative entropy $D(\rho\|\sigma)=\tr \rho(\log \rho-\sigma) $, 
the mutual information is 
rewritten as
\begin{align}
\sum_{g \in G} \frac{1}{\abs{G}} 
D(f(g) \rho f(g)^\dagger \| \av_f[\rho ])
=
D(\rho  \| \av_f[\rho ])
\end{align}
because $D(f(g) \rho f(g)^\dagger \| \av_f[\rho ])
=D(\rho  \| \av_f[\rho ])$.
That is, the asymmetry is measured by
$S(\av_f[\rho ])-S( \rho)=D(\rho  \| \av_f[\rho ])$.
Therefore, the mutual information \eqref{BIE}
can be considered as the degree of asymmetry of 
$\rho$ under the representation $f$.
This quantity was also introduced in the context of the resource theory of asymmetry
\cite{Marvian} and other context \cite{VAW}.
In particular, when $\rho$ is a pure state $\kb{\psi}$,
it is simplified to 
\begin{align}\label{eq:intro:vN}
 S(\av_f[\psi]) := -\tr_{\clH}\bigl(\av_f[\psi] \log \av_f[\psi]\bigr).
\end{align}
Further, when a channel $\Gamma$ satisfies the covariance condition:
\begin{align}
\Gamma( f(g) \rho f(g)^\dagger)=
f(g) \Gamma( \rho) f(g)^\dagger,\label{NFG2}
\end{align}
the information processing inequality for $\Gamma$ implies
\begin{align}
&S(\av_f[\rho ])-S( \rho) =
D(\rho  \| \av_f[\rho ])\nonumber\\
\ge & D(\Gamma (\rho)  \| \Gamma(\av_f[\rho ])) 
= D(\Gamma (\rho)  \| \av_f[\Gamma(\rho) ])
\nonumber\\
=&S(\av_f[\Gamma (\rho) ])-S( \Gamma (\rho)) .\label{NFG}
\end{align}
This fact shows that the application of a covariant channel
decreases the amount of asymmetry.

In fact, to connect the mutual information \eqref{BIE}
and the distinguishability, we need to consider the 
repetitive use of the classical-quantum channel defined
by $\{f(g) \rho f(g)^\dagger \}_{g \in G}$.
When our interest is the case when the above classical-quantum channel is used only once,
we need to prepare another measure.
In this case, we evaluate the number $M(\rho,\ep)$ of distinguished states among 
$\{f(g) \rho f(g)^\dagger \}_{g \in G}$ with error probability $\ep$.

Assume that the state $\rho$ is 
a constant times of a projection to a $v$-dimensional subspace
and the set
$\{f(g) \rho f(g)^\dagger \}_{g \in G}$ 
is composed of orthogonal states.
In this case
$\av_{f}[\rho]$ is a constant times of a projection to a $w$-dimensional subspace.
In this case, the number $M(\rho,0)$
of distinguishable state is $w/v$.
Also, $S(\av_{f}[\rho])- S(\rho)$ equals $\log w/v$.
When we allow error probability $1-\frac{1}{j}$ with a positive integer $j$,
the number 
of distinguishable inputs is $jw/v$.
This is because it is allowed that $j$ informations are represented to the same state.
Hence, we have the formula
\begin{align}
\log M(\rho, 1-\frac{1}{j})=
\log j + \log \frac{w}{v}.\label{NXU}
\end{align}

Next, we prepare a general formula for $M(\rho,\ep)$ for a general state $\rho$
when 
the set of states 
$\{f(g) \rho f(g)^\dagger \}_{g \in G}$ is given.
We prepare the following notation.
Given a Hermitian matrix $A$, the symbols $\{ A \ge 0\}$ and $\{ A > 0\}$
express the projections to the eigenspace of $A$ corresponding to non-negative
and strictly positive eigenvalues, respectively. 
For two Hermitian matrices $A$ and $B$, 
the symbol $\{A \ge B\}$ denotes the projection $\{A-B\ge 0\}$.
Other projections such as $\{A>B\}$ and $\{A \le B\}$ are similarly defined.
Then, we consider the information spectrum relative entropy $D_s^\delta$,
which is given by \cite{hayashi2003general}:
\begin{align}\label{eq:spectrum_rel_ent}
 D_s^\delta(\rho\|\rho')
 := \max \bigl\{\lambda\ \mid \tr \rho \{\rho\leq e^\lambda \rho'\} 
               \le \delta \bigr\}.
\end{align}

Since $\av_f[\rho]$ is invariant for $f$,
applying \cite[Lemma 4]{hayashi2003general} to the case $\sigma=\av_f[\rho]$, we have
\begin{align}\label{converse}
 \log M(\rho,\ep)
 \leq D_s^{\ep+\delta}( \rho \| \av_f[\rho]) + \log \frac{1}{\delta}.
\end{align}
The paper \cite[Theorem 2]{Korzekwa} derived the following evaluation:
\begin{align}\label{direct}
 \log M(\rho,\ep) \ge
 D_s^{\ep-\delta}(  \rho\| \av_f[\rho] )) - \log \frac{1}{\delta} ,
\end{align}
which is slightly better than \cite[Lemma 3]{hayashi2003general} in this special case.
We define 
\begin{align}
&   H_s^{\ep}(\rho)
:= \max \bigl\{\lambda \mid \tr \rho \{ - \log \rho \le \lambda \} \le \ep \bigr\} \nonumber\\
 =& \max \bigl\{\lambda \mid \tr \rho \{   \rho \ge e^{-\lambda} \} \le \ep \bigr\}.
\end{align}
When $\rho$ is a pure state, as shown Theorem \ref{TH2} in \ref{s:SS}, we have
\begin{align}
 M(\rho,\ep) & \ge H_s^{\ep-\delta_1-\delta_2}(\av_f[\rho])
                   - \log \frac{1}{\delta_1 \delta_2},   \label{EE9} \\
 M(\rho,\ep) & \le H_s^{\ep+\delta_1+2\delta_2}(\av_f[\rho]) 
                   + \log \frac{1}{\delta_1 \delta_2^2}. \label{EE10}
\end{align}

\subsection{General theory for asymmetry activation}\label{S15B}
The asymmetry activation is generally formulated as follows.
We consider two quantum systems ${\cal H}_1$ and 
${\cal H}_2$, and two unitary representation $f_1$ and $f_2$ of compact groups
$G_1$ and $G_2$ on ${\cal H}_1$ and ${\cal H}_2$, respectively.
Then, we consider a larger compact group $G_3$ that contains $G_1\times G_2$
and its unitary representation $f_3$ on ${\cal H}_1\otimes {\cal H}_2$, which satisfies
$f_3|_{G_1 \times G_2}= f_1 \otimes f_2$.
We choose an invariant state $\rho_1$ on ${\cal H}_1$
and a state $\rho_2$ on ${\cal H}_2$.
Then, the amount of asymmetry of the original state is 
$D(\rho_1\|\av_{f_1}[\rho_1])+D(\rho_2\|\av_{f_2}[\rho_2])
=D(\rho_2\|\av_{f_2}[\rho_2])$.
In contrast, the amount of asymmetry of the whole state is
$D(\rho_1\otimes \rho_2\|\av_{f_3}[\rho_1\otimes \rho_2])$.
Hence, the amount of the asymmetry activation is calculated as
\begin{align}
&D(\rho_1\otimes \rho_2\|\av_{f_3}[\rho_1\otimes \rho_2])-
D(\rho_2\|\av_{f_2}[\rho_2]) \notag\\
=&-S(\rho_1)-S( \rho_2)+S(av_{f_3}[\rho_1\otimes \rho_2])\notag\\
&+S(\rho_2)-S(\av_{f_2}[\rho_2]) \notag\\
=&-S(\rho_1)-S(\av_{f_2}[\rho_2])+S(av_{f_3}[\rho_1\otimes \rho_2])\notag\\
=&D(\rho_1\otimes \av_{f_2}[\rho_2]\|\av_{f_3}[\rho_1\otimes \rho_2]).
\end{align}

Now, we choose a TP-CP map $\Gamma_1$ on ${\cal H}_1$
and a TP-CP map $\Gamma_2$ on ${\cal H}_2$.
When the covariance condition
\begin{align}
\Gamma_1\otimes \Gamma_2 ( f_3(g) \rho f_3(g)^\dagger)=
f_3(g) \Gamma_1\otimes \Gamma_2( \rho) f_3(g)^\dagger \label{NFGR}
\end{align}
holds for $g \in G_3$, 
the information processing inequality for $\Gamma_1\otimes \Gamma_2 $ implies
\begin{align}
&D(\rho_1\otimes \av_{f_2}[\rho_2]\|\av_{f_3}[\rho_1\otimes \rho_2]) \notag\\
\ge 
&D(\Gamma_1[\rho_1]\otimes \Gamma_2[\av_{f_2}[\rho_2]]\|
(\Gamma_1\otimes \Gamma_2)[\av_{f_3}[\rho_1\otimes \rho_2]]) \notag\\
=&
D(\Gamma[\rho_1]\otimes \av_{f_2}[\Gamma_2[\rho_2]]\|
\av_{f_3} [\Gamma_1[\rho_1]\otimes \Gamma_2[\rho_2]]).
\label{NM67}
\end{align}
When the decoherence operator satisfies the above condition \eqref{NFGR},
the decoherence reduces the amount of the asymmetry activation.

\section{Formulation}\label{S2-1}
This paper investigates how an invariant state activates the amount of asymmetry.
To address this question, we focus on 
the $n$-fold tensor product 
$\clH=(\bbC^2)^{\otimes n}$ of the qubit system $\bbC^2=\bbC \ket{0} \oplus \bbC \ket{1}$,
and consider it as the representation space of the permutation group $G=\frS_n$ 
acting by permutation of tensor factors as 
\begin{align}\label{eq:intro:Sn-act}
\pi(g)(\ket{i_1 \dotsm i_n}) := \ket{i_{g^{-1}(1)} \dotsm i_{g^{-1}(n)}}, 
 \quad g \in \frS_n.
\end{align}
Then, we apply the general theory given in Subsection \ref{S15B}
when
${\cal H}_1=(\bbC^2)^{\otimes N+M}$,
${\cal H}_2=(\bbC^2)^{\otimes k}$,
$G_1=\frS_{N+M}$,
$G_2=\frS_{k}$,
$G_3=\frS_{N+M+k}$
and 
$\rho_2=\ket{1^l \, 0^{k-l}} \in (\bbC^2)^{\otimes k}$ is defined as
\begin{align}\label{eq:intro:1^a}
   \ket{1^l \, 0^{k-l}} 
:= \tket{\overbrace{1 \dotsm 1}^{l} \, \overbrace{0 \dotsm 0}^{k-l}}
 =  \ket{1}^{\otimes l} \otimes \ket{0}^{\otimes (k-l)} 
\end{align}
while the state 
$\ket{1^l \, 0^{k-l}}$ is asymmetric with respect to the permutation.

As one possible choice of $\rho_1$, we choose 
a Dicke state on $(\bbC^2)^{\otimes (N+M)}$, which 
is a typical invariant state because 
it is given as the permutation invariant state 
with fixed weights $N$ and $M$ as
\begin{align}
& \ket{\Xi_{N+M,M}} \nonumber\\
:=& \tbinom{N+M}{M}^{-1/2} 
 \bigl(\ket{1^M \, 0^N} + \text{permuted terms}\bigr).
\label{eq:0:Dicke}
\end{align}
Remember that the state $ \ket{\Xi_{n,m|k,l}}$ is defined as 
\begin{align}%\label{eq:intro:Xi}
 \ket{\Xi_{n,m|k,l}} = \ket{1^l \, 0^{k-l}} \otimes \ket{\Xi_{N+M,M}}
 \in (\bbC^2)^{\otimes n}.
\end{align}
In contrast, the decohered state of the Dicke state
is given as
\begin{align}\label{symB}
\begin{split}
   &\rho_{mix,N+M,M} \\
   :=& \tbinom{N+M}{M}^{-1}
 \bigl(\kb{1^M \, 0^N} + \text{permuted terms}\bigr).
\end{split}
\end{align}

To study the asymmetry, we focus on the $\SU(2)$-$\frS_n$ Schur-Weyl duality
on the tensor product system $\clH=(\bbC^2)^{\otimes n}$ depicted as
\begin{align}\label{eq:intro:SCS}
 \SU(2) \curvearrowright (\bbC^2)^{\otimes n} \curvearrowleft \frS_n.
\end{align}
The Schur-Weyl duality claims that we have the following decomposition of 
$(\bbC^2)^{\otimes n}$ into irreducible representations (irreps for short):
\begin{align}\label{eq:intro:SW}
 (\bbC^2)^{\otimes n} 
 = \bigoplus_{x=0}^{{n/2}} \clU_{(n-x,x)} \boxtimes \clV_{(n-x, x)}.
\end{align} 
Here $\boxtimes$ denotes the tensor product of linear spaces, equipped with 
$\SU(2)$-action on the left and $\frS_n$-action on the right factor,
$\clV_{(n-x, x)}$ denotes the $\frS_n$-irrep corresponding to the partition $(n-x,x)$, 
and $\clU_{(n-x,x)}$ denotes the highest weight $\SU(2)$-irrep of dimension $n-2x+1$.
Using the hook formula (see \cite[I.5, Example 2; I.7, (7.6)]{M} for example),
we can compute the dimension of $\clV_{(n-x, x)}$ as 
\begin{align}\label{eq:hook}
\begin{split}
& \dim \clV_{(n-x,x)} = \frac{k!}{\prod_{\square \in (n-x,x)} h(\square)} \\
 =& 
 \frac{n-2x+1}{n-x+1} \binom{n}{x} = \binom{n}{x} - \binom{n}{x-1},
\end{split}
\end{align}
where $\square \in (n-x,x)$ denotes a box in the Young diagram 
corresponding to the partition $(n-x,x)$.
Using this formula, we have the completely mixed state 
$\rho_{\tmix, \clV_{(n-x,x)}} = \frac{1}{\dim \clV_{(n-x,x)}} \id_{\clV_{(n-x,x)}}$.
The highest weight $\SU(2)$-irrep of dimension $2j+1$ has the standard basis 
$ \{ \ket{j,m} \mid m=-j,-j+1, \dotsc, j\}$, where
$j$ and $m$ express the total and z-direction angular momenta, respectively.

In our analysis, the projector $\sfP_{(n-x,x)}$ onto the isotypical component 
in the decomposition \eqref{eq:intro:SW}
plays a key role, and is defined as
\begin{align}\label{eq:intro:proj}
 \sfP_{(n-x,x)}\colon 
 (\bbC^2)^{\otimes n} \lsrj \clU_{(n-x,x)} \boxtimes \clV_{(n-x,x)}.
\end{align}
As the set of the projectors 
$\{\sfP_{(n-x,x)}\}_{x}$ forms a projective measurement,
we define the probability $p(x \midd n,m,k,l)$ as
\begin{align}\label{eq:intro:p(x)}
  p(x \midd n,m,k,l) := 
 \bra{\Xi_{n,m|k,l}} \sfP_{(n-x,x)} \ket{\Xi_{n,m|k,l}},
\end{align}
and denote the corresponding probability distribution by $P_{n,m,k,l}$.
Since we can switch $\ket{0} \lrto \ket{1}$ in the system, 
the probability mass function (pmf) enjoys the system symmetry
\begin{align}\label{eq:sym}
 p(x \midd n,m,k,l) = p(x \midd n,n-m,k,k-l), 
\end{align}
Since the distribution $P_{n,m,k,l}$ describes 
the averaged state $\av_\pi[\Xi_{n,m|k,l}]$ as 
\begin{align}
\begin{split}
& \av_\pi[\Xi_{n,m|k,l}] \\
=& \sum_{x=0}^{{n/2}} p(x \midd n,m,k,l) 
 \kb{\tfrac{n}{2}-x , \tfrac{n-2m}{2}} \\
& \boxtimes \rho_{\tmix, \clV_{(n-x,x)}},
\label{ALO}
\end{split}
\end{align}
our analysis is concentrated in the distribution
$P_{n,m,k,l}$.
We also denote by $X$ a discrete random variable distributed by 
the probability mass function (pmf for short) $p(x \midd n,m,k,l)$.
We choose integers $n,m,k$ and $l$ from the following set;
\begin{align}\label{eq:intro:clN}
\begin{split}
 &\clN \\
 :=& 
 \Big\{(n,m,k,l) \in \bbZ_{\ge 0}^4 \Big |
  \begin{array}{l}
  m, k \le n, \\
  \ m+k-n \le l \le m \wedge k
  \end{array}
    \Big\}.
\end{split}
\end{align}

Then, the von Neumann entropy is given by 
\begin{align}\label{eq:intro:vNav}
\begin{split}
 &S\bigl(\av_\pi[\Xi_{n,m|k,l}]\bigr) \\
 = &
 \sum_x  p(x \midd n,m,k,l) \\
 &\cdot\bigl(\log \dim \clV_{(n-x,x)}-\log p(x\midd n,m,k,l)\bigr),
\end{split}
\end{align}
which expresses the amount of asymmetry of $\ket{\Xi_{n,m|k,l}}$.
Furthermore, we can also consider the number of approximately orthogonal elements among 
the states $\{ \pi(g) \ket{\Xi_{n,m|k,l}} \}_{g \in \frS_n}$,
which is characterized by a slightly different quantity.
As explained in Section \ref{S2-4}, the number of distinguishable elements
with error probability $\ve>0$ is approximated by the following value:
\begin{align}\label{eq:intro:XAL}
\begin{split}
&\ol{H}_s^{\ve}(\av_\pi[\Xi_{n,m|k,l}])  \\
&:= \max\biggl\{ \lambda \bigg| 
P_{n,m,k,l}[\{x \mid \log \dim \clV_{(n-x,x)} \\
&        \hspace{10ex}        -\log p(x \midd n,m,k,l) \le \lambda\}] \le \ep \biggr\}.
\end{split}
\end{align}

\section{Asymmetry with decohered symmetric state}\label{S2-15}
First, we discuss the case with the decohered state 
$\rho_{mix,N+M,M}$.
Then, we have
\begin{align}
& S(\av_\pi[\kb{1^l \, 0^{k-l}}\otimes \rho_{mix,N+M,M}] )\nonumber \\
&-S(\kb{1^l \, 0^{k-l}}\otimes \rho_{mix,N+M,M} )\nonumber \\
=&S(\rho_{mix,k+N+M,l+M} )
-S(\rho_{mix,N+M,M} )\nonumber \\
=&\log \binom{l+M}{k+N+M}-\log \binom{M}{N+M}\nonumber \\
=&\log \binom{m}{n}-\log \binom{m-l}{n-k}.
\label{NBT}
\end{align}

To consider the relation with $ S\bigl(\av_\pi[\Xi_{n,m|k,l}]\bigr) $,
we consider the pinching map ${\cal E}$ defined as
\begin{align}
{\cal E}(\rho):= \sum_{\vec{x} \in \{0,1\}^n}
\kb{\vec{x}} \rho \kb{\vec{x}},
\end{align}
which expresses the decoherence process.
Since the pinching map ${\cal E}$ satisfies the covariance condition \eqref{NFG2}
with respect to permutation,
the relation \eqref{NFG} yields the inequality
\begin{align}
& S\bigl(\av_\pi[\Xi_{n,m|k,l}]\bigr) 
\nonumber \\
\ge &S(\rho_{mix,k+N+M,l+M} )
-S(\rho_{mix,N+M,M} ). \label{NMT}
\end{align}
This inequality also follows from 
the inequality \eqref{NM67} for the amount of asymmetry activation.
Therefore, the Dicke state realizes a larger amount of asymmetry than 
the decohered state.
However, this inequality does not explain how much 
the Dicke state improves it over the decohered state.
To answer this question, we need to consider the asymptotic behavior.

Here, we consider the number 
$M(\kb{1^l \, 0^{k-l}}\otimes \rho_{mix,N+M,M},\epsilon)$ as well.
Counting the possible combinatorics, we find that
the number of discriminated states is calculated as
$M(\kb{1^l \, 0^{k-l}}\otimes \rho_{mix,N+M,M},0)
= \binom{l+M}{k+N+M}/ \binom{M}{N+M}$.
Since the set
$\{\pi(g) \kb{1^l \, 0^{k-l}}\otimes \rho_{mix,N+M,M}
 \pi(g)^\dagger \}_{g \in \frS_n}$ 
is composed of orthogonal states, due to \eqref{NXU},
the difference between 
$M(\kb{1^l \, 0^{k-l}}\otimes \rho_{mix,N+M,M},0)$
and $M(\kb{1^l \, 0^{k-l}}\otimes \rho_{mix,N+M,M},\epsilon)$
with finite error probability $\ep$ is a constant.
Hence, we discuss the asymptotics of 
the value given in \eqref{NBT}.

In Type I limit, we have
\begin{align}
&\log \binom{m}{n}-\log \binom{m-l}{n-k}\nonumber \\
=&k h( \xi)- h'( \xi)(\xi k-l)
+o(1) .\label{NHI1}
\end{align}
That is, the amount of asymmetry is a constant for $n$ in this case.
In Type II limit, we have
\begin{align}
&\log \binom{m}{n}-\log \binom{m-l}{n-k}\nonumber \\
=&
n(h(\xi)-(\beta+\delta) h(\frac{\beta}{\beta+\delta}))
+\frac{1}{2}\log (\beta+\delta) \nonumber \\
&-\tfrac{1}{2}\log \bigl( 2\pi \xi(1-\xi)\bigr)  \nonumber\\
&+\tfrac{1}{2}\log \bigl( 2\pi \frac{\beta}{\beta+\delta}(1-\frac{\beta}{\beta+\delta})\bigr)+o(1) .\label{NHI2}
\end{align}
The relations \eqref{NHI1} and \eqref{NHI2} are shown in 
Appendix \ref{NVR}

\section{Asymptotics analysis on distribution $P_{n,m,k,l}$}
\label{S2-2}
To derive the asymptotic behaviors of  
$S\bigl(\av_\pi[\Xi_{n,m|k,l}]\bigr)$ and 
$\log M({\Xi_{n,m|k,l}},\ep)$, 
we need the asymptotic analysis of our distribution $P_{n,m,k,l}$
under the limit $n \to \infty$.  
The paper \cite{HHY2} studied the following two cases.
\begin{enumerate}
\item
 The limit $n \to \infty$ with $k,l$ and $m/n$ fixed. 
\item
 The limit $n \to \infty$ with $m/n$, $k/n$ and $l/n$ fixed. 
\end{enumerate}
We call them Type I and II limits, respectively.
The details are given in Subsection \ref{sss:intro:I} and 
Subsections \ref{sss:intro:II}. 
Hereafter we denote by $\Prb_n$ the probability for the distribution 
$P_{n,n \xi,n \kappa,n \alpha}$, and by $X$ a random variable distributed
subject to the distribution.
The contents of this section were already shown in the paper \cite{HHY2}.

\subsection{Asymptotic analysis of Type I limit}\label{sss:intro:I}
To address Type I limit, we employ the binomial distribution $B_{\xi,j}$ with $j$ trials
with successful probability $0 \le \xi \le 1$ 
whose pmf is given by $x \mapsto \binom{j}{x} \xi^x (1-\xi)^{j-x}$.
Then, we introduce the distribution
$ P_{R|\xi;k,l} := B_{\xi,k-l} * B_{1-\xi,l}$
where
$*$ denotes the convolution of probability distributions. 
Its pmf is denoted by $q(x \midd \xi;k,l)$
and written as 
\begin{align}\label{eq:I:q2}
\begin{split}
& q(x \midd \xi;k,l) \\
 =& \xi^{l-x} (1-\xi)^{k-l-x} \sum_{u=u_{*}}
 ^{u^* } \\
&\cdot \binom{k-l}{x-u} \binom{l}{u} \xi^{2(x-u)} (1-\xi)^{2u},
\end{split}
\end{align}
where 
$u_*:=\max (0 ,x-k+l)$
and $u^*:=\min( x , l)$.

\begin{thm}
\label{thm:intro:I}
In the limit $n \to \infty$ with $k,l$ and $\xi=m/n$ fixed, 
the distribution $P_{n,m,k,l}$ is approximated to 
the above-defined distribution
$ P_{R|\xi;k,l}$ 
up to $O(1/n)$.
In particular, the asymptotic expectation is given by
\begin{align}\label{eq:intro:Eq}
 \lim_{n \to \infty} \bbE[X] = (k-l)\xi+l(1-\xi).
\end{align}
\end{thm}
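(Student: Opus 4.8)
The plan is to recast $p(x \midd n,m,k,l)$ as a sequential angular-momentum coupling and then analyze it asymptotically. The key starting point is that the Dicke state $\ket{\Xi_{N+M,M}}$ is the fully symmetric (stretched) vector, i.e.\ it coincides with the highest-$\SU(2)$-weight representative $\ket{j_2,m_2}$ having $j_2 = (N+M)/2 = (n-k)/2$ and $m_2 = (N-M)/2$. Since the projector $\sfP_{(n-x,x)}$ selects total spin $j = n/2 - x$, and since by \eqref{ALO} the $z$-component is fixed to $(n-2m)/2$, the probability $p(x \midd n,m,k,l)$ is exactly the probability that coupling the $k$ computational-basis qubits of $\ket{1^l \, 0^{k-l}}$ onto $\ket{j_2,m_2}$ yields total spin $n/2-x$.

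First I would adjoin the $k$ qubits of $\ket{1^l \, 0^{k-l}}$ one at a time to $\ket{j_2,m_2}$, applying the spin-$\tfrac12$ Clebsch--Gordan rule at each step. The structural fact that makes this tractable is that the iterated Clebsch--Gordan (Gelfand--Tsetlin) basis is orthonormal, so the sequence of intermediate total spins is a genuine quantum label: distinct recoupling paths give orthogonal vectors and cannot interfere. Hence the law of the final total spin is that of a classical Markov chain in which each adjoined qubit changes $j$ by $\pm\tfrac12$, with transition probabilities equal to the squared Clebsch--Gordan coefficients: adjoining $\ket{0}$ gives $P(j\to j+\tfrac12)=\frac{j+m+1}{2j+1}$, $P(j\to j-\tfrac12)=\frac{j-m}{2j+1}$, while adjoining $\ket{1}$ gives $P(j\to j+\tfrac12)=\frac{j-m+1}{2j+1}$, $P(j\to j-\tfrac12)=\frac{j+m}{2j+1}$. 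If $U$ and $D=k-U$ denote the numbers of up- and down-steps, then the final spin is $j_2+(U-D)/2$, so $x = n/2-j = D$ equals the number of down-steps.

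Next I would pass to the limit. Over the fixed number $k$ of steps, $j$ and $m$ stay within $O(1)$ of $j_2$ and $m_2$, and a direct computation gives $m_2/j_2 = (1-2\xi)+O(1/n)$. Substituting $m/j\to 1-2\xi$ into the transition probabilities shows that a $\ket{0}$ qubit produces a down-step with probability $\xi+O(1/n)$ and a $\ket{1}$ qubit with probability $(1-\xi)+O(1/n)$, and that the increments become asymptotically independent because the limiting transition probabilities no longer depend on the current $(j,m)$. Since $\ket{1^l \, 0^{k-l}}$ contributes $k-l$ of the former and $l$ of the latter, the number of down-steps $x$ converges to $B_{\xi,k-l}*B_{1-\xi,l}=P_{R|\xi;k,l}$; propagating the per-step $O(1/n)$ error through the $O(1)$ many paths yields the stated $O(1/n)$ approximation, and the expectation then follows from additivity of the means of the two binomials, $\lim_{n\to\infty}\bbE[X]=(k-l)\xi+l(1-\xi)$.

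I expect the main obstacle to be the rigorous justification that no interference occurs---that $p(x \midd n,m,k,l)$ genuinely equals the path-sum of products of squared Clebsch--Gordan coefficients---together with making the $O(1/n)$ bound uniform over the finitely many paths. An alternative, purely computational route would start from the explicit Hahn-polynomial formula for $p(x \midd n,m,k,l)$ established in \cite{HHY1} and carry out the $n\to\infty$ asymptotics of the corresponding $\hg{3}{2}$ directly; there the difficulty instead lies in controlling the hypergeometric limit and extracting the convolution-of-binomials structure $q(x \midd \xi;k,l)$ from the answer.
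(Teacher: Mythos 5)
Your proof is correct, but it is not the paper's route: the paper does not prove Theorem \ref{thm:intro:I} internally at all --- Section \ref{S2-2} imports it from \cite{HHY2}, where the distribution $P_{n,m,k,l}$ is first written in closed form via Hahn and Racah polynomials ($\hg{3}{2}$- and $\hg{4}{3}$-hypergeometric sums, cf.\ \cite{HHY1}) and the Type I limit is then extracted from those expressions; that is precisely the path you relegate to your ``alternative, purely computational route.'' Your actual argument --- identify $\ket{\Xi_{N+M,M}}$ with the extremal-weight vector $\ket{j_2,m_2}$, $j_2=(n-k)/2$, $m_2=(N-M)/2$, adjoin the $k$ computational-basis qubits sequentially, and read off the law of the final total spin as a $k$-step Markov chain with squared Clebsch--Gordan transition probabilities --- is sound, and the two obstacles you flag are not genuine ones. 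The ``no interference'' property is the standard orthogonality of the iterated Clebsch--Gordan (Gelfand--Tsetlin) decomposition: paths with distinct intermediate-spin histories span mutually orthogonal subspaces, the later transition amplitudes depend only on the current $(j,m)$, and $\sfP_{(n-x,x)}$ is exactly the sum of the path projectors ending at $j=n/2-x$, so $p(x\midd n,m,k,l)$ is the path sum of products of squared Clebsch--Gordan coefficients with no cross terms. Uniformity of the $O(1/n)$ error is likewise immediate: there are at most $2^k$ paths with $k$ fixed, and each transition probability is a ratio of $\Theta(n)$ quantities perturbed by $O(1)$ along the walk, so each path weight is within $O(1/n)$ of its limiting value; convergence of the expectation then follows because the support $\{0,\dotsc,k\}$ is finite. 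The one step you should state explicitly is the coupling order: the tensor factors place the $k$ qubits first, so ``adjoining them onto the Dicke block'' requires the remark that the isotypical projector is canonical --- it commutes with the $\frS_n$-action and is independent of the choice of coupling tree --- which licenses coupling the last $N+M$ factors first. As for what each approach buys: yours is elementary, self-contained, and explains structurally why a convolution of two binomials appears (the $\ket{0}$- and $\ket{1}$-qubits contribute asymptotically independent down-step Bernoulli variables with success probabilities $\xi$ and $1-\xi$, respectively); the hypergeometric route of \cite{HHY1,HHY2} costs more machinery but yields exact finite-$n$ formulas (such as the one quoted in Proposition \ref{prp:k=l}) that the paper needs anyway for the Type II central-limit analysis, the exponential tail bound of Proposition \ref{prp:5.6.2}, and the constant-order expansion in Appendix \ref{ss:II:sp}.
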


In fact, it is known as the law of small numbers 
 that the distribution $B_{c/n,n}$ converges to a Poisson distribution
 as $n$ goes to infinity with a fixed number $c$.
In Type I limit, $k$ and $l$ are fixed, and correspond to 
the fixed number $c$ of the law of small numbers. 
That is, we can consider Type I limit as 
the law of small numbers in our setting.

\begin{rem}
Since the limit pmf $q(x \midd \xi,k,l)$ is a convolution of two binomial distributions.
it is a polynomial of $\xi$,
and this property was used in the paper \cite{HY}.
\end{rem}

\subsection{Asymptotic analysis of type II limit}\label{sss:intro:II}
We discuss Type II limit.
For this aim, we 
employ three kinds of parametrizations of the fixed ratios
as Table \ref{tab:intro:IIprm}.
For example, the condition $(n,m,k,l) \in \clN$ in \eqref{eq:intro:clN} is equivalent to 
\begin{align}\label{eq:intro:II:pasmp}
 \alpha, \beta, \gamma,\delta \ge 0. 
\end{align}
However, in the following, we employ the set of three parameters 
$\beta,\delta,\xi \ge 0$, which are also free parameters
under the conditions $ \xi \ge \beta $ and $1-\xi \ge \delta$.
Also, the following quantities are fundamental for the discussion:
\begin{align}\label{eq:intro:msD}
\begin{split}
    \mu &:= \frac{1-\sqrt{D}}{2}, ~
 \sigma := \sqrt{\frac{(1-\beta-\delta)\beta\delta}{D}}, \\
      D &:= 4 \beta \delta+ (2\xi-1)^2. 
\end{split}
\end{align}
These definitions of these parameters are summarized 
in Table \ref{tab:intro:IIprm}.

\begin{thm}
\label{thm:intro:E}
Consider the limit $n \to \infty$ with fixed ratios $\alpha=\frac{l}{n}$, 
$\beta =\frac{m-l}{n}$, $\gamma=\frac{k-l}{n}$ and $\delta=\frac{n-m-k+l}{n}$.
Then, for any $\ve \in \bbR_{>0}$, we have 
\begin{align}
 \lim_{n \to \infty} \Prb_n\Bigl[\abs{\tfrac{X}{n}-\mu}>\ve\Bigr]=0.
\end{align}
In particular, the expectation $\bbE[X]$ behaves as 
\begin{align}
 \bbE[X] = n \mu + o(n).
\end{align}
\end{thm}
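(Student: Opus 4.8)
The plan is to identify $X$ with a physical observable and then run the second-moment (Chebyshev) method. The projector $\sfP_{(n-x,x)}$ selects the total-spin sector with $j=\tfrac{n}{2}-x$, since $\dim\clU_{(n-x,x)}=n-2x+1=2j+1$. Hence $\{\sfP_{(n-x,x)}\}_x$ is exactly the projective measurement of the total angular momentum $\vec{J}^2=J_x^2+J_y^2+J_z^2$, and $P_{n,m,k,l}$ is the distribution of $\vec{J}^2$ in the state $\ket{\Xi_{n,m|k,l}}$, with the operator identity $\vec{J}^2=(\tfrac{n}{2}-X)(\tfrac{n}{2}-X+1)$ on the support. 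Thus every moment of $X$ is a vacuum expectation of a polynomial in $\vec{J}^2$, which is a sum of few-body spin correlators that factorize through the product structure $\ket{\Xi_{n,m|k,l}}=\ket{1^l\,0^{k-l}}\otimes\ket{\Xi_{N+M,M}}$.

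First I would compute the mean. Writing $\vec{J}^2=\sum_{i,i'}\vec S_i\cdot\vec S_{i'}$, the state is a $J_z$-eigenstate of weight $m$, so $\bbE[J_z^2]=\bigl(\tfrac{n-2m}{2}\bigr)^2=\tfrac{n^2}{4}(2\xi-1)^2$. For the transverse part, the self terms over all $n$ sites contribute $\tfrac{n}{2}$, while the off-diagonal ladder correlators $\tfrac12\bbE[S_i^+S_{i'}^-+S_i^-S_{i'}^+]$ vanish whenever $i$ or $i'$ lies in the computational-basis block (the raising/lowering operator has zero diagonal there) and also for cross-block pairs (by factorization). Only pairs inside the maximal-spin Dicke block survive, and via $\bbE[(J^{(2)}_x)^2+(J^{(2)}_y)^2]=j_2(j_2+1)-(J_z^{(2)})^2$ with $j_2=\tfrac{N+M}{2}$ these off-diagonal pairs sum to $NM=n^2\beta\delta$. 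Collecting, $\bbE[\vec{J}^2]=\tfrac{n^2}{4}\bigl(4\beta\delta+(2\xi-1)^2\bigr)+O(n)=\tfrac{n^2 D}{4}+O(n)$. Since $Y:=\tfrac{n}{2}-X\in[0,\tfrac n2]$ and $\vec{J}^2=Y(Y+1)$, this reads $\bbE[Y^2]=\tfrac{n^2 D}{4}+O(n)$; once concentration is in hand, $\bbE[Y]\sim\tfrac{n}{2}\sqrt D$ (the nonnegative root), so $X=\tfrac{n}{2}-Y$ concentrates at $\tfrac{n}{2}(1-\sqrt D)=n\mu$.

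To promote this first-moment identity to the stated law of large numbers I would bound the variance. Computing $\bbE[(\vec{J}^2)^2]=\sum_{i,i',i'',i'''}\bbE[(\vec S_i\cdot\vec S_{i'})(\vec S_{i''}\cdot\vec S_{i'''})]$ by the same factorization—cross-block and computational-block ladder contributions again drop out, leaving only a bounded number of surviving index patterns inside the Dicke block—I expect $\mathrm{Var}[\vec{J}^2]=O(n^3)=o(n^4)$. Because $y\mapsto y(y+1)$ is monotone with derivative of order $n$ on the high-probability range $Y=\Theta(n)$, this forces $\mathrm{Var}[X]=\mathrm{Var}[Y]=o(n^2)$ (indeed $O(n)$, consistent with $n\sigma^2$ from Table~\ref{tab:intro:IIprm}). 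Chebyshev's inequality then yields $\Prb_n[\abs{X/n-\mu}>\ve]\to0$, and since $X/n\in[0,\tfrac12]$ is bounded, bounded convergence upgrades this to $\bbE[X]=n\mu+o(n)$.

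The \emph{main obstacle} is the variance step: the four-point correlator $\bbE[(\vec{J}^2)^2]$ requires careful bookkeeping of the surviving index patterns plus a cancellation showing that its naive $O(n^4)$ leading term matches $(\bbE[\vec{J}^2])^2$ to the order needed for only the $O(n^3)$ fluctuation to remain. An alternative that sidesteps this combinatorics is to invoke the explicit Racah-polynomial expression for $p(x\midd n,m,k,l)$ from \cite{HHY1} together with its asymptotic analysis in \cite{HHY2}, from which both the location $n\mu$ and the $O(n)$ variance can be read off directly; the Casimir route above is, however, more transparent and essentially self-contained for the weak law alone.
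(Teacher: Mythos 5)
Your proposal is correct, and it takes a genuinely different route from the paper: the paper does not prove Theorem \ref{thm:intro:E} internally at all, but imports it from \cite{HHY2}, where the pmf $p(x \midd n,m,k,l)$ is written in closed form via Hahn/Racah ($\hg{3}{2}$- and $\hg{4}{3}$-type) orthogonal polynomials and the limit is extracted by special-function asymptotics. Your Casimir/second-moment argument replaces that machinery with elementary angular-momentum algebra, and the variance step you flag as the main obstacle actually goes through more easily than you anticipate, though your bookkeeping of "surviving patterns" is backwards. Decompose $\vec{J}^2 = J_z^2 + \tfrac{n}{2} + T_1 + T_{12} + T_2$, where $T_1$, $T_{12}$, $T_2$ collect the ladder terms within the computational block, across the two blocks, and within the Dicke block. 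Then $J_z^2$ is deterministic (the state is a weight vector), and so is $T_2$, because the Dicke factor is a simultaneous eigenstate of its block Casimir and block $J_z$; the fluctuations come \emph{entirely} from the terms you expected to drop out. Since $T_1\ket{\Xi_{n,m|k,l}}$ and $T_{12}\ket{\Xi_{n,m|k,l}}$ alter the computational-block weight, both have zero mean and are mutually orthogonal, giving exactly $\mathrm{Var}[\vec{J}^2] = \|T_1\ket{\Xi_{n,m|k,l}}\|^2 + \|T_{12}\ket{\Xi_{n,m|k,l}}\|^2 = l(k-l) + l(NM+N) + (k-l)(NM+M) = O(n^3)$, with no four-index cancellation analysis needed (the leading term $kNM \approx n^3(1-\beta-\delta)\beta\delta$ is consistent with $\mathrm{Var}[X]\approx n\sigma^2$ from Table \ref{tab:intro:IIprm}). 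Chebyshev, continuity of $y \mapsto \sqrt{y}$ applied to $Y=\tfrac{n}{2}-X=\sqrt{\vec{J}^2+\tfrac14}-\tfrac12$, and bounded convergence then yield both assertions, without needing assumption \eqref{eq:intro:II:asmp}. What your route buys is a short, self-contained proof of the weak law; what it does not give is the finer information of Theorem \ref{thm:intro:CLT} — the local Gaussian approximation of the pmf, the $O(1)$ correction $\phi$ to the mean, and the moment bounds \eqref{NVT} — which the paper needs downstream for the entropy expansions \eqref{NMT5} and \eqref{NHT}; for those, the hypergeometric analysis of \cite{HHY2} (your stated fallback) remains necessary.
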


Considering the definition of the variance $\sigma^2$
by using $\beta,\delta,\xi \ge 0$, we find the following fact.
When the relation $\alpha=\gamma=0$ nor $\beta \delta=0$
does not hold, i.e., 
\begin{align}\label{eq:intro:II:asmp}
 \alpha+\gamma=1-\beta-\delta, \beta, \delta > 0,
\end{align}
the variance $\sigma^2$ is well defined by \eqref{eq:intro:msD}
and is strictly positive.
Then, we have the following theorem.

\begin{thm}
\label{thm:intro:CLT}
Assume the condition \eqref{eq:intro:II:asmp}. 
Then, for any real $t<u$, the relation
\begin{align}
 \lim_{n \to \infty} 
 \Prb_n\Bigl[t \le \frac{X-n \mu}{\sqrt{n} \sigma} \le u\Bigr]
 = \frac{1}{\sqrt{2\pi}} \int_t^u e^{-s^2/2} \, d s
\end{align}
holds. In particular, the pmf $p(x):=p(x \midd n,m,k,l)$ behaves as 
\begin{align}\label{eq:intro:II}
 \begin{split}
   p(x)& = \Psi(x) \cdot \bigl(1+o(1)\bigr) \quad (n \to \infty), \\
 \Psi(x) &:= \frac{1}{\sqrt{2\pi n}\sigma}
 \exp\biggl(-\frac{1}{2}\Bigl(\frac{x-n \mu}{\sqrt{n} \sigma}\Bigr)^2\biggr).
\end{split}
\end{align}
In addition, we have
\begin{align}
& \bbE[X] = n \mu + \phi + o(1), \quad 
 \phi := \frac{\sigma^2-\mu}{1-2\mu}, \\
& \lim_{n \to \infty} \bbV \Bigl[\frac{X}{\sqrt{n}}\Bigr] = 
 \lim_{n \to \infty} \bbE \Bigl[\Big(\frac{X-n\mu}{\sqrt{n}}\Big)^2\Bigr] 
 = \sigma^2 \\ 
 & \lim_{n \to \infty} \bbE \Bigl[\Big(\frac{X-n\mu}{\sqrt{n}}\Big)^j\Bigr] 
<\infty \label{NVT}
\end{align}
for $j=3,4, \ldots.$
\end{thm}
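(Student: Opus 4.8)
The plan is to deduce all five assertions from a single \emph{local} central limit theorem, namely the uniform pmf estimate $p(x)=\Psi(x)(1+o(1))$ on the window $x=n\mu+O(\sqrt{n})$, since the integral CLT, the variance formula, and (with an additional tail bound) the moment convergence all follow from it. The starting point is the exact expression for $p(x\midd n,m,k,l)$ coming from the Hahn/Racah representation of \cite{HHY1}; schematically it is a finite sum of products of binomial coefficients in the variables $x$, $m-x$, $k$ and $n$. Building on Theorem~\ref{thm:intro:E}, which already confines the mass to $x\approx n\mu$, I would only need to control the pmf on the relevant $O(\sqrt{n})$-scale.

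First I would pass from the exact formula to an asymptotic one by applying Stirling's approximation uniformly for $x=n\mu+O(\sqrt n)$, writing
\begin{align}
\log p(x)= n\, g(x/n)+\tfrac12\log\bigl(c(x/n)/n\bigr)+O(1/n),\nonumber
\end{align}
where $g$ is an explicit combination of the binary entropy $h$ and its arguments and $c$ is a smooth prefactor. A cleaner, essentially equivalent route is to study the ratio $r(x):=p(x+1)/p(x)$, which the explicit formula renders a rational function of $x$ and $n$; the equation $r(x)=1$ then locates the mode. Setting $g'(z)=0$ (equivalently solving $r(nz)=1$ to leading order) produces a quadratic equation $z^2-z+\tfrac{1-D}{4}=0$ whose discriminant is exactly $D=4\beta\delta+(2\xi-1)^2$, and the admissible root is $z=\mu=(1-\sqrt D)/2$. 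The second derivative $g''(\mu)$ fixes the Gaussian curvature, giving $-1/g''(\mu)=\sigma^2=(1-\beta-\delta)\beta\delta/D$; substituting back yields $p(x)=\Psi(x)(1+o(1))$ uniformly on the central window, and summing this Riemann-sum-style approximation against a test interval gives the stated integral CLT. The variance formula $\lim\bbV[X/\sqrt n]=\sigma^2$ is then the second moment of the limiting Gaussian.

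The refined expectation $\bbE[X]=n\mu+\phi+o(1)$ with $\phi=(\sigma^2-\mu)/(1-2\mu)$ is the delicate point: the leading Gaussian is symmetric, so it contributes only $n\mu$ plus a vanishing odd moment, and the surviving $O(1)$ constant $\phi$ is produced jointly by (i) the $O(1)$ gap between the true mode and $n\mu$, and (ii) the skewness correction, i.e.\ the cubic term $g'''(\mu)$ together with the first Stirling correction. Concretely I would expand $g$ to third order and multiply in the $O(1/n)$ prefactor correction, then evaluate $\bbE[X-n\mu]$ against this corrected density; the nonzero contribution comes from pairing the cubic term with the Gaussian second moment, and the denominator $1-2\mu=\sqrt D$ appearing in $\phi$ is precisely the curvature scale already visible in $\sigma^2\propto 1/D$, confirming that $\phi$ originates from the interplay of $g''$ and $g'''$. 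This step leaves no room for the usual ``lower order'' shortcut, so matching the exact constant is where the real work lies.

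Finally, for the moment bounds \eqref{NVT} convergence in distribution is not enough; I need uniform integrability of $((X-n\mu)/\sqrt n)^j$. The plan is to upgrade the central estimate to a \emph{global} sub-Gaussian domination $p(x)\le C\exp\bigl(-c\,(x-n\mu)^2/n\bigr)$ valid for every admissible $x$, which I would obtain from monotonicity of the ratio $r(x)$ away from the mode (the log-concavity-type behaviour forced by the binomial structure). That domination justifies interchanging the limit with the summation, so each rescaled moment converges to the corresponding finite Gaussian moment. The main obstacles are thus twofold: extracting the exact $O(1)$ constant $\phi$ from the higher-order Stirling and cubic terms, and establishing the uniform Gaussian tail bound all the way to the edges of the support, where the Stirling expansion degenerates.
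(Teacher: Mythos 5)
The paper itself does not prove Theorem \ref{thm:intro:CLT}: Section \ref{S2-2} states that its contents ``were already shown in the paper \cite{HHY2}'', whose derivation rests on the exact Hahn/Racah hypergeometric formulas for $p(x \midd n,m,k,l)$ obtained in \cite{HHY1}. So the comparison is really between your sketch and that cited machinery, and your sketch has a structural gap at its very first step. Under the hypothesis \eqref{eq:intro:II:asmp} the theorem covers the generic case $\alpha,\gamma>0$, and there the pmf is \emph{not} a single product of binomial coefficients: it is (up to dimension-type weights) the square of a terminating $\hg{3}{2}$- or $\hg{4}{3}$-type sum (a Hahn or Racah polynomial evaluation), whose terms alternate in sign and can cancel. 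Only in the degenerate case $\gamma=0$ does it collapse to the product form \eqref{eq:ac=0:p} of Proposition \ref{prp:k=l}. Consequently both of your key devices fail in the required generality: Stirling's formula cannot be applied ``uniformly'' because the logarithm of a sum is not the sum of the logarithms of its terms, and the consecutive ratio $r(x)=p(x+1)/p(x)$ is not a rational function of $x$ and $n$, so the mode-location and log-concavity arguments have no footing. Proving that this squared oscillating sum does not oscillate on the $\sqrt{n}$ scale—which is exactly what the local estimate \eqref{eq:intro:II} asserts—is the real content of \cite{HHY2}; your consistency checks (the quadratic $z^2-z+\tfrac{1-D}{4}=0$, the root $\mu$, the curvature giving $\sigma^2$) are necessary conditions, not a derivation.

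The same gap undermines your treatment of the moments \eqref{NVT}. You propose a global sub-Gaussian domination $p(x)\le C\exp\bigl(-c(x-n\mu)^2/n\bigr)$ obtained from monotonicity of $r(x)$, i.e.\ a log-concavity property ``forced by the binomial structure''; but since the pmf is a squared hypergeometric sum, log-concavity is neither evident nor established. Note that the paper itself asserts exponential tail decay only in the special case $\gamma=0$ (Proposition \ref{prp:5.6.2}), which is a warning that global tail control in the general case is not routine. Your remaining reductions—integral CLT from the local CLT by Riemann summation, the variance from the Gaussian second moment plus uniform integrability, and $\phi$ from the third-order expansion together with the $O(1)$ mode shift—are sound in outline, but they all sit on top of the unproved local estimate, so the proposal as written does not close.
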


In the special case $\alpha \gamma=0$,
we can further discuss the tight exponential evaluation of the tail probability as follows.
\begin{prp}\label{prp:5.6.2}
In Type II limit with $\alpha \beta \delta >0$, $\gamma=0$ 
and $0< \xi = \alpha+\beta \le 1/2$, 
the probability $\Prb_n \bigl[| \tfrac{X}{n} - \mu |\ge \epsilon \bigr]$
goes to zero exponentially for any $\epsilon >0$.
\end{prp}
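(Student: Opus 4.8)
The plan is to reduce the tail estimate to a large-deviation (Laplace) analysis of an explicit probability mass function, which becomes available precisely because $\gamma=0$. Since $\gamma=0$ forces $k=l$, we have $N=n-m$, $M=m-l$, and $\ket{\Xi_{n,m|k,l}}=\ket{1^l}\otimes\ket{\Xi_{N+M,M}}$, where $\ket{1^l}$ is the fully symmetric highest-weight vector of the spin-$\tfrac l2$ representation. Hence $p(x\midd n,m,k,l)$ is the weight this product vector places on total spin $J=\tfrac n2-x$, i.e.\ the squared Clebsch--Gordan coefficient coupling $j_1=\tfrac l2$ at top weight $m_1=j_1$ with $j_2=\tfrac{N+M}2$ at weight $m_2=\tfrac{M-N}2$ into $J$. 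Because $m_1=j_1$, the Racah summation collapses to its single term, and I would record the resulting closed form
\begin{align}
p(x)=(n-2x+1)\binom{l}{x}\frac{(n-l-x)!\,(n-m)!\,(m-x)!}{(n-x+1)!\,(n-m-x)!\,(m-l)!},\nonumber
\end{align}
valid for $0\le x\le \min(l,n-m)$; the same formula follows by specializing the Hahn-polynomial pmf of \cite{HHY1} to $k=l$.

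Next I would pass to the exponential scale. Writing $x=\lfloor yn\rfloor$ and applying Stirling uniformly on compact subsets of the open support, the factor $(n-2x+1)$ and the shift $+1$ contribute only $O(n^{-1}\log n)$, while the $\log n$-order and linear-order Stirling terms cancel after collecting signed coefficients, so that $\tfrac1n\log p(\lfloor yn\rfloor)\to-\Lambda(y)$ with
\begin{align}
\Lambda(y)&=y\log y+(\alpha-y)\log(\alpha-y)+(1-y)\log(1-y)\nonumber\\
&\quad+(1-\xi-y)\log(1-\xi-y)-(\xi-y)\log(\xi-y)\nonumber\\
&\quad-(1-\alpha-y)\log(1-\alpha-y)+C\nonumber
\end{align}
for an explicit constant $C$. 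Since $p\le1$ we automatically have $\Lambda\ge0$, and by the central-limit behavior of Theorem \ref{thm:intro:CLT} the mass peaks at $y=\mu$ at order $n^{-1/2}$, forcing $\Lambda(\mu)=0$.

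The decisive step is to show that $\mu$ is the \emph{unique} minimizer inside the support. Differentiating gives $\Lambda'(y)=\log\frac{y(1-\alpha-y)(\xi-y)}{(\alpha-y)(1-y)(1-\xi-y)}$, so stationary points solve the cubic $y(1-\alpha-y)(\xi-y)-(\alpha-y)(1-y)(1-\xi-y)=0$, which I would factor as $(2y-1)\bigl(y^2-y+\alpha(1-\xi)\bigr)=0$. Using $\delta=1-\xi$ when $\gamma=0$, one checks $D=1-4\alpha(1-\xi)$, so the roots are $\tfrac12$, $\mu=\tfrac{1-\sqrt D}2$, and $\tfrac{1+\sqrt D}2$. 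The hypotheses $\beta,\delta>0$ and $\xi\le\tfrac12$ guarantee $0<\mu<\min(\alpha,\delta)\le\alpha<\tfrac12$, so both $\tfrac12$ and $\tfrac{1+\sqrt D}2$ fall outside the support and $\mu$ is the only interior critical point. Hence $\Lambda$ is strictly decreasing on $(0,\mu)$ and strictly increasing on $(\mu,\min(\alpha,\delta))$, giving $c(\epsilon):=\inf_{|y-\mu|\ge\epsilon}\Lambda(y)=\min\bigl(\Lambda(\mu-\epsilon),\Lambda(\mu+\epsilon)\bigr)>0$. Bounding the tail sum by its at most $n+1$ terms times the largest summand then yields $\Prb_n[\,|X/n-\mu|\ge\epsilon\,]\le(n+1)e^{-nc(\epsilon)+o(n)}$, which decays exponentially.

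The main obstacle is really this last identification: obtaining the explicit pmf is the enabling reduction, but the exponential rate is worthless without knowing that $\Lambda$ has no spurious zero in the support. The pleasant factorization $(2y-1)\bigl(y^2-y+\alpha\delta\bigr)$, together with $\xi\le\tfrac12$ and $\beta,\delta>0$, is exactly what rules this out and thereby explains the role of each hypothesis; I would also need the routine but slightly delicate verification that the Stirling approximation stays uniform as $y$ approaches the endpoints of the support, where some factorial arguments are only $o(n)$.
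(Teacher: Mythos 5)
Your proposal is correct, but it necessarily takes a different route from the paper, because the paper contains no proof of Proposition \ref{prp:5.6.2} at all: Section \ref{S2-2} imports it, together with Theorems \ref{thm:intro:E} and \ref{thm:intro:CLT}, from \cite{HHY2}. What you have written is a self-contained large-deviation argument whose enabling ingredient---the closed-form pmf at $\gamma=0$---is precisely the formula the paper records as Proposition \ref{prp:k=l}, Eq.~\eqref{eq:ac=0:p} (quoted there from \cite{HHY1}): your expression $(n-2x+1)\binom{l}{x}\frac{(n-l-x)!\,(n-m)!\,(m-x)!}{(n-x+1)!\,(n-m-x)!\,(m-l)!}$ is algebraically identical to it, and your Clebsch--Gordan derivation (stretched coupling $m_1=j_1$ collapsing the Racah sum) is a legitimate independent way to obtain it. The paper uses that formula only for the refined entropy expansion of Proposition \ref{prp:ProPH}, whereas you additionally extract the exponential tail bound from it; your route thus makes Appendix \ref{ss:II:sp} logically self-contained and produces an explicit rate $c(\epsilon)$, which the citation route does not. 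The computations check out: the $\log n$-order and linear Stirling terms cancel because the factorial arguments in numerator and denominator both sum to $2n-2x$; the stationarity cubic indeed factors as $(2y-1)\bigl(y^2-y+\alpha(1-\xi)\bigr)$, which is the same cubic the paper uses in proving $C_1=h(\mu)$; and $\beta>0$ gives $\mu<\alpha$, so $\mu$ is the unique interior critical point of $\Lambda$ on the support $[0,\alpha]$. Two remarks. First, your appeal to Theorem \ref{thm:intro:CLT} to pin $\Lambda(\mu)=0$ is dispensable: $p\le 1$ gives $\Lambda\ge 0$, normalization gives $\max_x p(x)\ge (n+1)^{-1}$ hence $\inf\Lambda=0$, and your monotonicity statement (decreasing on $(0,\mu)$, increasing on $(\mu,\alpha)$) already forces the minimum, hence the zero, to sit at $\mu$. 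Second, the endpoint-uniformity issue you flag is genuinely routine for the reason you suspect: since the $\log n$ and linear terms cancel identically rather than only asymptotically, the bound $p(x)\le\exp\bigl(-n\Lambda(x/n)+O(\log n)\bigr)$ holds uniformly over the whole support with the convention $0\log 0=0$, which is all the union bound needs.
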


\section{Asymptotics analysis on $ S\bigl(\av_\pi[\Xi_{n,m|k,l}]\bigr) $}\label{S2-3}
As an interlude, let us derive the asymptotic formula \eqref{eq:intro:S-asymp}
of $S(\av_{\pi}[\Xi_{n,m|k,l}])$ for the state $\Xi_{n,m|k,l}$ 
with respect to the distribution $P_{n,m,k,l}$ in Type I limit
$n \to \infty$ with fixed $\xi=m/n$, $k$ and $l$.
We denote the pmf by $p(x):=p(x \midd n,m,k,l)$,
and the limit pmf by $q(x):=q(x \midd \xi;k,l)$ for simplicity.
Thus, we have $p(x) = q(x) + O(1/n)$ by Theorem \ref{thm:intro:I}.
Now, recall the expression \eqref{eq:intro:vNav} of $S(\av_{\pi}[\Xi_{n,m|k,l}])$,
which includes the dimension formula 
$\dim \clV_{(n-x,x)}=\binom{n}{x}\frac{n-2x+1}{n-x+1}$ in \eqref{eq:hook}.
Let us cite from \cite[Chap.\ 5, (5.12)]{Sp} 
the asymptotic formula of binomial coefficient $\binom{n}{x}$ for fixed $x$:
\begin{align}\label{eq:intro:binom}
\begin{split}
& \log \tbinom{n}{x} \\
=& 
 x \log n + x - (x+\tfrac{1}{2})\log x - \log \sqrt{2 \pi} + o(1). 
\end{split}
\end{align}
Applying it to the dimension formula, we show \eqref{eq:intro:S-asymp} as
\begin{align}\label{eq:S-asymp}
\begin{split}
& S(\av_{\pi}[\Xi_{n,m|k,l}]) \\
=&\sum_{x=0}^k p(x) \bigl(\log \dim \clV_{(n-x,x)} - \log p(x)\bigr) \\
=&\sum_{x=0}^k q(x) \bigl(x \log n + x - (x+\tfrac{1}{2})\log x \\
&- \log \sqrt{2\pi} - \log q(x) \bigr) + o(1) \\
=&u \log n + u - \log \sqrt{2\pi} \\
&+ 
 \sum_{x=0}^k q(x) \bigl((x+\tfrac{1}{2}) \log x-\log q(x) \bigr) + o(1),
\end{split}
\end{align}
where we used the asymptotic expectation $u:=(k-l)\xi+l(1-\xi)$
in \eqref{eq:intro:Eq}.
The last line in \eqref{eq:S-asymp} shows the conclusion \eqref{eq:intro:S-asymp}.
Therefore, 
with the comparison with \eqref{NBT} and \eqref{NHI1},
Dicke state $  \ket{\Xi_{N+M,M}}$ significantly improves
the amount of asymmetry over the classical symmetric state
$\rho_{mix,N+M,M}$.

\begin{figure}[htbp]
\centering
\includegraphics[width=.95\linewidth]{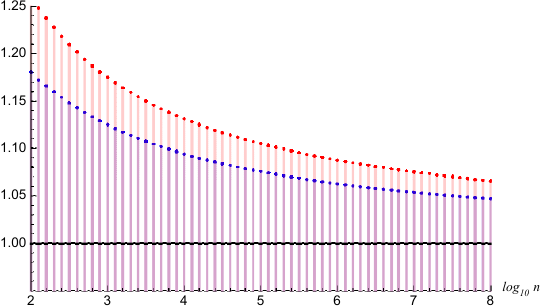}
\caption{Plots of $S\bigl(\av_\pi[\Xi_{n,m|k,l}]\bigr)/\log n$ (blue dots) 
 and the approximation $a(n,\xi,k,l)/\log n$ (red dots) 
 for $(\xi=m/n,k,l)=(0.5,2,1)$. 
This case implies $u:=(k-l)\xi+l(1-\xi)=1$. 
Here, $a(n,\xi,k,l)$ is defined as 
$u \log n + u - \log \sqrt{2\pi} 
+  \sum_{x=0}^k q(x) \bigl((x+\tfrac{1}{2}) \log x-\log q(x) \bigr)$ by using \eqref{eq:S-asymp}.
 The horizontal axis is $\log_{10} n$, and 
 the black line shows the limit value $u=1$.}
\label{fig:intro:vN}
\end{figure}

Continuing the interlude, we consider Type II limit of $S(\av_{\pi}[\Xi_{n,m|k,l}])$.
When $\beta=0$, i.e., $m-l=0$,  
we have $\av_\pi[\Xi_{n,m|k,l}]=\rho_{mix,n, m}$.
Hence, we have
\begin{align}
S(\av_\pi[\Xi_{n,m|k,l}])=S(\rho_{mix,n, l})=\log \binom{m}{n},\label{NZT}
\end{align}
which was already discussed in Section \ref{S2-15}.
The same discussion can be applied to the case with $\delta=0$.
When $\alpha+\gamma=0$, 
we have $\ket{\Xi_{n,m|k,l}}=
\ket{\Xi_{n,m}}$ so that $S(\av_\pi[\Xi_{n,m|k,l}])=
S(\ket{\Xi_{n,m}})=0$.
In this case, $S(\av_\pi[\kb{1^l \, 0^{k-l}}\otimes \rho_{mix,N+M,M}] )
-S(\kb{1^l \, 0^{k-l}}\otimes \rho_{mix,N+M,M} )$ is also zero.
These cases have no difference from the case in Section \ref{S2-15}.
Therefore, we assume the assumption \eqref{eq:intro:II:asmp} in the following.

We have
\begin{align}
\begin{split}
& S(\av_{\pi}[\Xi_{n,m|k,l}]) \\
 =&\sum_{x=0}^k p(x) 
 \Big(n h(\mu)+h'(\mu) (x-n\mu)\\
& +O((\frac{x-n\mu}{\sqrt{n}})^2)\Big)
 +O(\log n) ,\label{NBY4}
\end{split}
\end{align}
which is shown in Appendix \ref{NVR2}.
Theorem \ref{thm:intro:CLT} implies that
\begin{align}
 S(\av_{\pi}[\Xi_{n,m|k,l}]) =n h(\mu)+O(\log n).\label{NMT5}
\end{align}
When the relation $\alpha\gamma=0$ holds additionally,
the relation \eqref{NMT5}
is replaced by a more precise form as follows
\begin{align}
 S(\av_{\pi}[\Xi_{n,m|k,l}]) = n h(\mu)+C_0+o(1).\label{NHT}
\end{align}
This relation will be shown in Proposition \ref{prp:ProPH} in \ref{ss:II:sp},
and the the precise form of the constant $C_0$ is given in Proposition \ref{prp:ProPH}.

Due to \eqref{NBT} and \eqref{NHI2},
the comparison between the Dicke state and the decohered state is summarized to 
the comparison between their first-order terms
$h(\mu)$ and $h(\xi)-(\beta+\delta) h(\frac{\beta}{\beta+\delta})$.
In fact, the limit $n\to \infty$ in \eqref{NMT} with the division by $n$
implies
\begin{align}
h(\mu) \ge h(\xi)-(\beta+\delta) h(\frac{\beta}{\beta+\delta}).
\label{ZMY}
\end{align}
Here, we employ the relations \eqref{NHI2} and \eqref{NMT5}.
The difference between LHS and RHS shows the asymptotic difference 
of the amounts of the asymmetry activation
caused by the decoherence.
We consider the case when $\tau=0$, in which 
two free parameters describe other parameters as
\begin{align}
\begin{split}
\alpha&=\xi \kappa,~
\beta=\xi(1-\kappa),\\
\gamma&= (1-\xi)\kappa,~
\delta=(1-\xi)(1-\kappa).\label{NMA}
\end{split}
\end{align}
Then, since $\frac{\beta}{\beta+\delta}=\xi$ and 
$D 
=4 (1-\xi)\xi(1-\kappa)^2+ (2\xi-1)^2$, we have 
\begin{align}
\begin{split}
& h(\mu)= h(\frac{1-\sqrt{\frac{(1-\kappa)(1-\xi)\xi(1-\kappa)^2}{4 (1-\xi)\xi(1-\kappa)^2+ (2\xi-1)^2}}}{2}), 
\\
&h(\xi)-(\beta+\delta) h(\frac{\beta}{\beta+\delta})=
\kappa h(\xi).
\end{split}
\end{align}
In this case, 
$h(\mu)$ and $h(\xi)-(\beta+\delta) h(\frac{\beta}{\beta+\delta})$ are plotted as
Fig. \ref{fig:asymmetry}.

\begin{figure}[htbp]
\centering
\includegraphics[width=.95\linewidth]{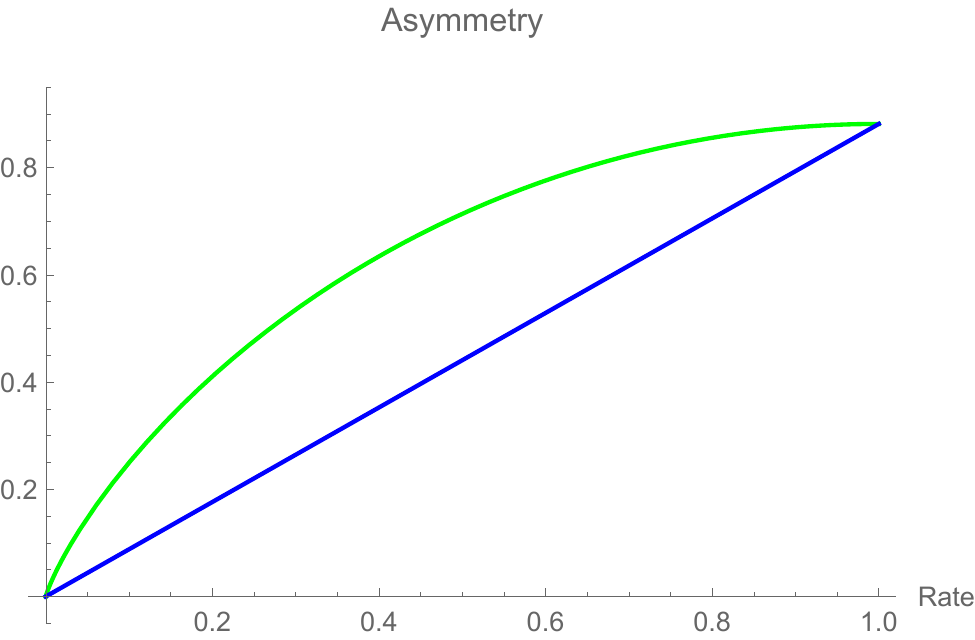}
\caption{Comparison between 
$h(\mu)$ and $h(\xi)-(\beta+\delta) h(\frac{\beta}{\beta+\delta})$.
We consider the case with \eqref{NMA}. 
The base of the logarithm is $2$. We set $\xi$ to be 0.3.
The horizontal line shows $\kappa$, which runs from $0$ to $1$.
The green line shows $h(\mu)$, and the blue line shows 
$h(\xi)-(\beta+\delta) h(\frac{\beta}{\beta+\delta})$.
}
\label{fig:asymmetry}
\end{figure}

\section{Asymptotics analysis on $ \log M({\Xi_{n,m|k,l}},\ep)$}\label{S2-4}
Next, to address the asymptotic behavior of $ \log M({\Xi_{n,m|k,l}},\ep)$,
we apply the discussion in Section \ref{S15} to our setting.
Since the state $\av_\pi[\Xi_{n,m|k,l}] $ is characterized as \eqref{ALO},
$H_s^{\ep}(\av_\pi[\Xi_{n,m|k,l}]) $ equals $\ol{H}_s^{\ve}(\av_\pi[\Xi_{n,m|k,l}])$
which is defined in \eqref{eq:intro:XAL}. Hence, 
\begin{align}
\begin{split}
 &H_s^{\ve}(\av_\pi[\Xi_{n,m|k,l}]) \\
 = &
 \max \bigl\{ \lambda \mid P_{n,m,k,l}[\{x \mid
  \log \dim \clV_{(n-x,x)}\\
&\hspace{10ex}  -\log p(x \midd n,m,k,l) \le \lambda\}] \le \ep \bigr\}.
\label{ZYS}
\end{split}
\end{align}

To discuss Type I limit, we denote the cdf of 
$B_{\xi,k-l}*B_{1-\xi,l} $ by $F[B_{\xi,k-l}*B_{1-\xi,l}]$. 
By recalling the dimension formula \eqref{eq:hook} of $\dim \clV_{(n-x,x)}$, 
$\log \dim \clV_{(n-x,x)}$ is approximated to $x \log n+o(\log n)$ 
with a fixed value $x$ by \eqref{eq:intro:binom}.
When $n$ is sufficiently large,
$\log \dim \clV_{(n-x,x)}-\log p(x \midd n,m,k,l) $
is monotonically increasing for $x$.
We choose $x_n(\epsilon)$ as $F[P_{n,m,k,l} ](x_n(\epsilon))=\epsilon$.
Hence, the combination of \eqref{ZYS} and Theorem \ref{thm:intro:I} implies
\begin{align}\label{AL1-BC}
\begin{split}
& H_s^{\ve}(\av_\pi[\Xi_{n,m|k,l}])  \\
 =& \log \dim \clV_{(n-x_n(\ep),x_n(\ep))} -\log p(x_n(\ep) \midd n,m,k,l)
\\
 =& x_n(\ep) \log n+o(\log n) -\log (q(x_n(\ep) \midd \xi;k,l)\\
 &+ O(1/n))
\\
 =& F[B_{\xi,k-l}*B_{1-\xi,l} ]^{-1}(\ep) \log n + o(\log n).
\end{split}
\end{align}
The combination of \eqref{EE9}, \eqref{EE10}, and 
\eqref{AL1-BC} yields
\begin{align}\label{AL1-B}
 \begin{split}
&\log M(\Xi_{n,m|k,l},\ep) \\
=& F[B_{\xi,k-l}*B_{1-\xi,l} ]^{-1}(\ep) \log n + o(\log n).
\end{split}
\end{align}

In Type II limit, 
when the assumption \eqref{eq:intro:II:asmp} does not hold,
the analysis on $ \log M(\Xi_{n,m|k,l},\ep) $ is the same as 
the case in Section \ref{S2-15}.
Therefore, we assume the assumption \eqref{eq:intro:II:asmp} in the following.
Then, as shown in Appendix \ref{NVR2}, we have
\begin{align}\label{AL2-BC}
\begin{split}
& H_s^{\ve}(\av_\pi[\Xi_{n,m|k,l}])  \\
=&n h(\mu)+h'(\mu) (x_n(\ep)-n\mu)\\
& +O((\frac{x_n(\ep)-n\mu}{\sqrt{n}})^2)+O(\log n).
\end{split}
\end{align}
Theorem \ref{thm:intro:CLT} implies that
\begin{align}
& H_s^{\ve}(\av_\pi[\Xi_{n,m|k,l}])  \nonumber \\
 =& h(\mu+ \frac{\Phi^{-1}(\ep)}{\sigma\sqrt{n}}+o(\frac{1}{\sqrt{n}}) ) n+O(\log n)
\nonumber\\
=& n h(\mu)  +\sqrt{n} h'(\mu)\frac{\Phi^{-1}(\ep)}{\sigma} 
 +o(\sqrt{n}). 
 \label{NMT5N}
\end{align}

The combination of \eqref{EE9}, \eqref{EE10}, 
\eqref{NMT5N}
yields
\begin{align}\label{AL2-B}
\begin{split}
& \log M(\Xi_{n,m|k,l},\ep) \\
 =&
 n h(\mu) +\sqrt{n} h'(\mu)\frac{\Phi^{-1}(\ep)}{\sigma}+ o(\sqrt{n}).
\end{split}
\end{align}
These relations \eqref{AL1-B} and \eqref{AL2-B} show 
the asymptotic behavior of the degree of the asymmetry.
The asymptotic expansion \eqref{NMT5N} has Gaussian distribution in the second order term, which is similar to 
the asymptotic distillation of the coherence \cite{HFW}.

\section{Another example of asymmetry activation}\label{S6}
To consider another simple example of asymmetry activation, 
we choose
$(\mathbb{C}^{n})^{\otimes n}$, $\SU(n)$, and 
the $n$-tensor product representation of $\SU(n)$
as ${\cal H}_1$, $G_1$, and $f_1$, respectively.
Then, the $n$-tensor anti-symmetric subspace ${\cal A}_{n,n}$
of $\SU(n)$ is a one-dimensional space.
We choose the unique state on 
the $n$-tensor anti-symmetric subspace of $\SU(n)$
as the invariant state $\rho_1$.
Given $d \ge n$,
we choose 
$(\mathbb{C}^{d})^{\otimes n}$, $\SU(d)$, and 
the $2$-tensor product representation of $\SU(d)$
as ${\cal H}_2$, $G_2$, and $f_2$, respectively.

We choose $ \SU(nd)$ and 
the $n$-tensor product representation of $\SU(nd)$
as $G_3$ and $f_3$, respectively. 
We choose the state $\rho_2$ as a state on 
the anti-symmetric subspace ${\cal A}_{d,n}$ of 
$(\mathbb{C}^{d})^{\otimes n}$.
Since the dimension of ${\cal A}_{d,n}$ is ${d \choose n}$,
$D(\rho_2\|\av_{f_2}[\rho_2])=\log {d \choose n}- S(\rho_2)$.
In this case, the vector of the support of $\rho_1\otimes \rho_2$ is 
invariant for the exchange of the first and second system.
Hence, the vector of the support belongs to 
the symmetric subspace ${\cal S}_{nd,n}$ of $(\mathbb{C}^{nd})^{\otimes n}$, whose dimension is
${nd+n-1 \choose n}$.
Thus, we have
$D(\rho_1\otimes \rho_2\|\av_{f_3}[\rho_1\otimes \rho_2])
=\log {nd+n-1 \choose n}- S(\rho_2)$.
Therefore, the amount of the asymmetry activation is 
\begin{align}
&\log {nd+n-1 \choose n}-\log {d \choose n}\nonumber \\
=&\sum_{j=0}^{n-1}\log \frac{nd+n-1-j}{d-j}\nonumber \\
=&\sum_{j=0}^{n-1}\log (n+\frac{(n-1)(j+1)}{d-j}).
\end{align}
Given a fixed integer $n$,
the amount of the asymmetry activation
takes the maximum $\log {n^2+n-1 \choose n}$ 
when $d=n$.
In this case, this value is approximated to be
$n\log n +O(n)$.

\section{Conclusion and discussion}\label{S5}
We have shown that the coherence in the Dicke state enhances
the asymmetry for the permutation when a qubits state $\ket{1^l \, 0^{k-l}}$
is attached.
To clarify the merit of the coherence, we have discussed two types of limits, Type I and II limits.
In these limits,
the case with the Dicke state has 
a larger amount of degree than 
the case with the decohered state.
In particular, under Type I limit,
the amount of the degree in the former has a strictly larger order than 
that of the latter.
Under Type II limit,
both cases have different leading terms.
Their difference characterizes the effect of the existence of
the coherence.
This fact shows the importance of the coherence even in the 
asymmetry and the asymmetry activation for permutation.

Here, we emphasize the generality of the concept of
asymmetry activation.
Asymmetry activation may happen in various situations.
To explain this aspect, we have shown another example in Section \ref{S6}.
It is another interesting future study to discuss various examples of asymmetry activation.

Using this discussion, we have derived the inequality \eqref{ZMY}.
Since the inequality \eqref{ZMY} can be considered without this type of limit,
it is possible to directly derive this inequality without considering this type of limit.
This kind of derivation is an interesting future study.
Further, our analysis has revealed the physical importance of the value
$\mu$, whose importance has not been recognized until this study.

For our asymptotic derivations,
we have employed various asymptotic formulas for 
the distribution $P_{n,m,k,l}$.
Since the distribution $P_{n,m,k,l}$ has a highly complicated form,
this fact shows the usefulness of 
these asymptotic formulas.
Since the distribution $P_{n,m,k,l}$ is based on the 
the Schur-Weyl duality, which is a key structure in quantum information,
we can expect that 
these formulas can be used in other topics in quantum information.

\appendix
\section{Type I and Type II limits and
Limit theorems for binomial distributions}\label{S-binomial}
First, we summarize existing results for binomial distributions.
Given a real number $p \in (0,1)$, we define the binomial distribution 
as
\begin{align}
P_{n,p}(k):= {n \choose k}p^{k}(1-p)^{n-k}.
\end{align}
The central limit theorem addresses the case when 
$p$ is fixed and $n$ increases.
It states
\begin{align}
&\lim_{n\to \infty }\sum_{k= np+ x_0 \sqrt{n}}^{np+ x_1 \sqrt{n}}
P_{n,p}(k) \notag \\
=& 
\int_{x_0}^{x_1} \frac{1}{\sqrt{2\pi p(1-p)}}e^{-\frac{x^2}{2p(1-p)}}.
\end{align}
However, 
the above convergence is not uniform when $p$ is close to $0$ or $1$.
To cover the asymptotic behavior in this case,
we consider the case when $p$ is chosen to $\lambda/n$
and $n$ increases.
Then, we have the following relation
\begin{align}
\lim_{n\to \infty }
P_{n,\lambda/n}(k)= 
e^{-\lambda}\frac{\lambda^k}{k!},
\end{align}
which is called the law of small numbers.

In our model, $m=M+l$ expresses the total number of $1$ and
$n=N+m+k-l$ expresses the total number of $0$.
Hence, the ratio $\frac{m}{n+m}$ corresponds to the parameter $p$ 
in the binomial distribution.
Hence, Type I and Type II limits corresponds to 
the law of small number and
the central limit theorem, respectively.
Therefore,
although Type II limit works with
$\xi=\frac{m}{n}>0$, 
when $\xi=\frac{m}{n}$ is close to zero,
Type II limit does not work so that 
Type I is needed for this case.

\section{Single-shot bounds for cq-channels}\label{s:SS}

The aim of this section is to derive the inequalities 
\eqref{EE9} and \eqref{EE10}, which will be shown in Theorem \ref{TH1}.
We begin with a general setting of classical-quantum channel explained in Subsection \ref{S2-4}.

%%%%%%%%%%%%%%%%%%%%%%%%%%%%%%%%%%%%%%%%%%%%%%%%%%%%%%%%%%%%%%%%%%%%%%%%%%%%%%%%%%%%%%%%%%
\subsection{Single-shot bound for general cq-channel}\label{SS1}
We consider a pure state channel $\clX \ni x \mapsto W_x$ 
where $W_x$ is a state on a Hilbert space $\mathcal{H}$.
Let $M(\rho,\ep)$ be the number of distinguished states 
among $\{W_x\}_{x \in \clX}$.
We consider a joint system $\clH_\clX \otimes \clH$, 
where $\clH_\clX$ is a Hilbert space 
spanned by $\{ \ket{x} \}_{x \in \clX}$.
Dependently of a distribution $P$ on $\clX$ and a state $\rho$ on $\clH$,
we define two states $R[P]$ and $S[P,\rho]$ 
on the joint system $\clH_\clX \otimes \clH$ by
\begin{align}
\begin{split}
 R[P] &:= \sum_{x} P(x) \kb{x} \otimes W_x, \\
 S[P,\rho] &:= \Bigl( \sum_{x} P(x)\kb{x} \Bigr) \otimes \rho,
\end{split}
\end{align}
and a state $W_P$ on $\mathcal{H}$ by 
\begin{align}
 W_P:=\sum_{x} P(x) W_x.
\end{align}
Recall the information spectrum relative entropy $D_s^\delta$ 
in \eqref{eq:spectrum_rel_ent}:
\begin{equation}\label{EE1}
 D_s^\delta(\rho\|\rho'):= \max \bigl\{
 \lambda \mid \tr \rho \{ \rho \le e^\lambda \rho'\} \le \delta\bigr\},
\end{equation}
The hypothesis testing relative entropy $D_H^{\ep}$ is defined by~\cite{wang10} 
\begin{align}\label{EE2}
\begin{split}
 &D_H^{\ep}(\rho\|\rho') \\
 :=& 
 - \log \min \bigl\{ \tr {Q \rho'} \mid 0 \le Q \le I,\tr{Q\rho} \ge 1-\ep \bigr\}.
\end{split}
\end{align}
Nagaoka \cite{Nagaoka1} essentially showed the following inequality for any channel.
\begin{lem}\label{LL1}
For any state $\rho'$, we have
\begin{align}\label{EE3}
 M(\rho,\ep) \le \max_{P}D_H^{\ep}(R[P]\|S[P,\rho']).
\end{align}
\end{lem}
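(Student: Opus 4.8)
The plan is to run the standard Nagaoka \emph{meta-converse} reduction, which turns an optimal distinguishing code into a single feasible test for a joint hypothesis-testing problem. First I would fix a code attaining $M := M(\rho,\ep)$: distinct inputs $x_1, \dots, x_M \in \clX$ together with a decoding POVM $\{Y_i\}_{i=1}^M$ on $\clH$ satisfying $\sum_i Y_i \le I$, whose error probability $P_e := \frac{1}{M}\sum_i \tr W_{x_i}(I-Y_i)$ is at most $\ep$. For the distribution appearing on the right-hand side I would take $P$ uniform on $\{x_1,\dots,x_M\}$, so that $R[P] = \frac{1}{M} \sum_i \kb{x_i}\otimes W_{x_i}$ and $S[P,\rho'] = \bigl(\frac{1}{M}\sum_i \kb{x_i}\bigr)\otimes \rho'$.

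The key construction is the joint test $Q := \sum_{i} \kb{x_i}\otimes Y_i$ on $\clH_\clX \otimes \clH$, which reads the classical register and applies the matching decoder element. Since the $\kb{x_i}$ are mutually orthogonal projections and $\{Y_i\}$ is sub-normalized, one checks $0 \le Q \le I$. Using $\kb{x_i}\,\kb{x_j} = \delta_{ij}\kb{x_i}$ and $\tr W_{x_i}=1$, I would compute $\tr Q R[P] = \frac{1}{M}\sum_i \tr Y_i W_{x_i} = 1 - P_e \ge 1-\ep$, so $Q$ is feasible for the test discriminating $R[P]$ from $S[P,\rho']$. On the other hand, using $\sum_i Y_i \le I$ and $\tr\rho'=1$, I would get $\tr Q S[P,\rho'] = \frac{1}{M}\sum_i \tr Y_i\rho' \le \frac{1}{M}$.

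Feeding this single feasible $Q$ into the definition \eqref{EE2} then shows that the minimizing value defining $D_H^\ep(R[P]\|S[P,\rho'])$ is at most $\frac{1}{M}$, whence $D_H^\ep(R[P]\|S[P,\rho']) \ge -\log \frac{1}{M} = \log M$; maximizing over $P$ yields $\log M(\rho,\ep) \le \max_P D_H^\ep(R[P]\|S[P,\rho'])$, which is the assertion. Once the controlled test $Q$ is written down the remainder is pure bookkeeping, so I do not expect a genuine obstacle; the two points needing care are the direction of the optimization — a single feasible $Q$ only \emph{upper}-bounds the minimizing value and therefore \emph{lower}-bounds $D_H^\ep$, which is precisely the direction required — and the error convention, since if $M(\rho,\ep)$ is defined through the maximum error probability then the average error is a fortiori at most $\ep$ and the same $Q$ applies (repeated codewords being absorbed by taking $P$ uniform on the distinct values). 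The one genuinely conceptual step is recognizing that the classically-controlled decoder $Q$ simultaneously certifies the success probability against $R[P]$ and contributes exactly the factor $1/M$ against the product reference state $S[P,\rho']$.
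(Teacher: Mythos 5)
Your proof is correct, and it is essentially \emph{the} proof the paper relies on: the paper gives no argument for Lemma \ref{LL1}, attributing it to Nagaoka, and your reduction --- uniform $P$ on the codebook, the classically controlled test $Q=\sum_i \kb{x_i}\otimes Y_i$, the feasibility bound $\tr Q R[P]\ge 1-\ep$, and the bound $\tr Q S[P,\rho']\le 1/M$ from $\sum_i Y_i\le I$ --- is exactly the standard Nagaoka meta-converse that the citation points to. One caveat worth recording: as printed, \eqref{EE3} reads $M(\rho,\ep)\le \max_P D_H^{\ep}(R[P]\|S[P,\rho'])$, whereas the argument (yours and Nagaoka's) yields $\log M(\rho,\ep)\le \max_P D_H^{\ep}(R[P]\|S[P,\rho'])$; this is the intended statement, since the paper silently conflates $M$ with $\log M$ in this appendix (compare \eqref{converse} and \eqref{direct}, which carry the logarithm, with Theorems \ref{TH1} and \ref{TH2}, which do not), and it is the version you prove.
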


By \cite[Corollary 1]{beigi2014quantum}, 
we have the following inequality for any mixed state $\bar\rho$:
\begin{lem}\label{LL2}
The inequality
\begin{align}\label{EE4}
 M(\rho,\ep) \ge \max_{P}
 D_s^{\delta}(R[P]\|S[P, W_P])- \log \frac{1-\ep}{\ep-\delta} 
\end{align}
holds.
\end{lem}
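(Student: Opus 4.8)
The plan is to read Lemma~\ref{LL2} as the one-shot achievability (direct) counterpart of the converse Lemma~\ref{LL1}: whereas the converse bounds $M(\rho,\ep)$ from above by a hypothesis-testing divergence $D_H^\ep$, here I want a matching lower bound in terms of the information-spectrum divergence $D_s^\delta$, and I would obtain it by specializing the general one-shot coding bound of \cite[Corollary~1]{beigi2014quantum}. The conceptual point is that $D_s^{\delta}(R[P]\|S[P,W_P])$ is the information-spectrum analogue of the mutual information between the input $x$ (distributed by $P$) and the channel output: $R[P]$ is the genuine joint state of message and output, while $S[P,W_P]$ is the product of its marginals. Since $M(\rho,\ep)$ counts the states among $\{W_x\}$ that can be told apart with error $\ep$, it equals the number of messages transmissible in a single use of the channel $x\mapsto W_x$, so any channel-coding achievability bound lower-bounds $M(\rho,\ep)$ directly.

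First I would invoke \cite[Corollary~1]{beigi2014quantum}, which asserts that for the channel $x\mapsto W_x$, any input distribution $P$, and \emph{any} reference state $\bar\rho$, there exists a codebook (a subset of $\clX$) together with a decoding measurement achieving error probability $\ep$ and size $N$ with $\log N \ge D_s^{\delta}(R[P]\|S[P,\bar\rho]) - \log\frac{1-\ep}{\ep-\delta}$ for every slack $\delta\in(0,\ep)$. Because this holds for an arbitrary mixed reference, I would specialize it to $\bar\rho = W_P$, the average output state, which turns the divergence into the mutual-information form $D_s^{\delta}(R[P]\|S[P,W_P])$. As the input distribution $P$ is a free design parameter, taking the maximum over $P$ yields exactly \eqref{EE4}.

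To make the citation self-contained I would reproduce the mechanism underneath the corollary: draw a random codebook of $N$ codewords i.i.d.\ from $P$, decode by a square-root (pretty-good) measurement, and bound the average error, via a Hayashi--Nagaoka / collision-relative-entropy inequality, by (up to universal constants) $\tr R[P]\{R[P]\le e^{\lambda}S[P,W_P]\} + (N-1)\,e^{-\lambda}$. Choosing the threshold $\lambda = D_s^{\delta}(R[P]\|S[P,W_P])$ makes the first term at most $\delta$ by the definition \eqref{EE1}, and requiring the confusion term to be at most $\ep-\delta$ forces $N \le 1+(\ep-\delta)e^{\lambda}$; taking logarithms and carefully tracking the constants as in \cite{beigi2014quantum}, including the expurgation from average to maximal error, produces the stated penalty $\log\frac{1-\ep}{\ep-\delta}$.

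The main obstacle I expect is the decoder analysis rather than the algebra: controlling the confusion term with its $(N-1)$ factor needs a noncommutative union bound, and the clean route to the exact constant is precisely the collision-relative-entropy argument of \cite{beigi2014quantum}. Matching that quantitative output to the information-spectrum quantity $D_s^{\delta}$, and checking that the freedom to set $\bar\rho=W_P$ and to optimize over $P$ survives the expurgation step, is the delicate part, whereas identifying $R[P]$ and $S[P,W_P]$ with the joint state and the product of its marginals is routine.
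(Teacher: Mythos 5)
Your proposal is correct and matches the paper's own treatment: the paper proves Lemma~\ref{LL2} purely by invoking \cite[Corollary~1]{beigi2014quantum} (stated there for an arbitrary reference state $\bar\rho$), specialized to $\bar\rho = W_P$ and maximized over the free input distribution $P$, which is exactly your route. Your additional sketch of the random-coding/pretty-good-measurement mechanism behind that corollary is sound expository detail but not part of the paper's argument, which delegates all of it to the citation.
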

By \cite[(22)]{Nagaoka} and \cite[Lemma 12]{tomamichel12}, we also have:
\begin{lem}\label{LL3}
The inequality
\begin{align}\label{EE5}
     D_s^{\ep}(\rho\|\rho')
 \le D_H^{\ep}(\rho\|\rho')
 \le D_s^{\ep+\delta}(\rho\|\rho')-\log \delta
\end{align}
holds.
\end{lem}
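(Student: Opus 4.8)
The plan is to establish the two inequalities separately, in both cases using the Neyman--Pearson projection $\{\rho > e^\lambda \rho'\}$ as the bridge between the two divergences. Throughout I write $P_\lambda := \{\rho \le e^\lambda \rho'\}$ and $\bar P_\lambda := \{\rho > e^\lambda \rho'\} = I - P_\lambda$ for the spectral projections of $\rho - e^\lambda \rho'$, and I record the two elementary operator facts I will reuse. First, since $\bar P_\lambda(\rho - e^\lambda\rho')\bar P_\lambda$ is the positive part of $\rho - e^\lambda\rho'$, taking traces gives $\tr(\bar P_\lambda\rho) \ge e^\lambda \tr(\bar P_\lambda\rho')$. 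Second, for any $0 \le Q \le I$ one has $\tr\!\bigl(Q(\rho-e^\lambda\rho')\bigr) \le \tr\!\bigl((\rho-e^\lambda\rho')_+\bigr) \le \tr(\bar P_\lambda\rho)$, because the negative part contributes a nonnegative trace against $Q$ and $Q \le I$.

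For the left inequality $D_s^\ep \le D_H^\ep$, I would set $\lambda = D_s^\ep(\rho\|\rho')$, so that $\tr\rho\{\rho \le e^\lambda\rho'\} \le \ep$ by the definition \eqref{EE1}, and then test the feasibility of $Q = \bar P_\lambda$ in the minimization \eqref{EE2}. Indeed $\tr(\bar P_\lambda\rho) = 1 - \tr(P_\lambda\rho) \ge 1-\ep$, so $\bar P_\lambda$ is admissible, while the first operator fact gives $\tr(\bar P_\lambda\rho') \le e^{-\lambda}\tr(\bar P_\lambda\rho) \le e^{-\lambda}$. Since the minimum defining $D_H^\ep$ is at most this value, $D_H^\ep(\rho\|\rho') \ge -\log e^{-\lambda} = \lambda = D_s^\ep(\rho\|\rho')$.

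The right inequality is where the genuine work lies, and the main point is to exploit the \emph{value} of the optimal test rather than merely its feasibility. I would let $\lambda = D_H^\ep(\rho\|\rho')$ and take an optimal $Q^*$ attaining the minimum, so that $\tr(Q^*\rho) \ge 1-\ep$ and $\tr(Q^*\rho') = e^{-\lambda}$. The crucial choice is the shifted threshold $\nu := \lambda + \log\delta$: expanding
\begin{align}
1-\ep \le \tr(Q^*\rho) = \tr\!\bigl(Q^*(\rho - e^\nu\rho')\bigr) + e^\nu \tr(Q^*\rho'),
\end{align}
the second operator fact bounds the first summand by $\tr(\bar P_\nu\rho)$, while the second summand equals $e^{\lambda+\log\delta}e^{-\lambda} = \delta$ precisely because of the choice of $\nu$. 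Hence $1-\ep \le \tr(\bar P_\nu\rho) + \delta$, i.e.\ $\tr(P_\nu\rho) \le \ep + \delta$, so $\nu$ satisfies the defining condition of $D_s^{\ep+\delta}$ in \eqref{EE1} and therefore $D_s^{\ep+\delta}(\rho\|\rho') \ge \nu = \lambda + \log\delta$. Rearranging yields $D_H^\ep(\rho\|\rho') = \lambda \le D_s^{\ep+\delta}(\rho\|\rho') - \log\delta$, completing the argument. I expect the only delicate points to be the two non-commutative operator estimates collected at the outset and the bookkeeping that makes the type-II error of the optimal test collapse to exactly $\delta$; once these are in place the rest is routine.
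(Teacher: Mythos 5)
Your proof is correct, but note that the paper itself contains no proof of Lemma \ref{LL3} to compare against: the lemma is imported as a known fact, cited to \cite[(22)]{Nagaoka} and \cite[Lemma 12]{tomamichel12}. What you have written is a correct, self-contained reconstruction of essentially those cited arguments. The left inequality is the standard ``direct'' step: with $\lambda = D_s^{\ep}(\rho\|\rho')$, the Neyman--Pearson projection $\bar P_\lambda = \{\rho > e^{\lambda}\rho'\}$ is feasible in \eqref{EE2} because its type-I error is at most $\ep$, and your first operator fact $\tr (\bar P_\lambda \rho) \ge e^{\lambda}\tr( \bar P_\lambda \rho')$ bounds its type-II error by $e^{-\lambda}$, giving $D_H^{\ep} \ge D_s^{\ep}$. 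The right inequality is where you correctly identify the key device, which is also the engine of the cited proofs: take an optimal test $Q^*$ with $\tr (Q^*\rho) \ge 1-\ep$ and $\tr( Q^*\rho') = e^{-\lambda}$, shift the threshold to $\nu = \lambda + \log\delta$, and use the bound $\tr\bigl(Q^*(\rho - e^{\nu}\rho')\bigr) \le \tr\bigl((\rho - e^{\nu}\rho')_+\bigr) \le \tr (\bar P_\nu \rho)$, valid for any $0 \le Q^* \le I$; the choice of $\nu$ makes the residual term $e^{\nu}\tr (Q^*\rho')$ equal exactly $\delta$, so that $\tr \rho\{\rho \le e^{\nu}\rho'\} \le \ep+\delta$ and hence $D_s^{\ep+\delta}(\rho\|\rho') \ge \lambda + \log\delta$, which rearranges to the claim. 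Two minor points worth flagging for completeness: the optimal $Q^*$ exists because the feasible set in \eqref{EE2} is compact in finite dimensions, and if the maximum defining $D_s^{\ep}$ is attained only as a supremum, the left inequality still follows by running your argument at feasible $\lambda' < \lambda$ and letting $\lambda' \uparrow \lambda$. Neither issue affects the validity of your proof.
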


%%%%%%%%%%%%%%%%%%%%%%%%%%%%%%%%%%%%%%%%%%%%%%%%%%%%%%%%%%%%%%%%%%%%%%%%%%%%%%%%%%%%%%%%%
\subsection{Single-shot bound for pure state channel}\label{SS2}
In the case when the states $W_x$ are pure states, we have the following theorem.
\begin{thm}\label{TH1}
When all states $W_x$ are pure states, we have the following relations.
\begin{align}
 M(\rho,\ep) & \ge \max_{P} H_s^{\ep- \delta_1-\delta_2}(W_P)
                 - \log \frac{1-\ep}{\delta_1\delta_2},  
 \label{EE6} \\
 M(\rho,\ep) & \le \max_{P} H_s^{\ep+\delta_1+2\delta_2}(W_P) 
                 + \log \frac{1}{\delta_1 \delta_2^2}.
 \label{EE7}
\end{align}
\end{thm}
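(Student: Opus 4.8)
The plan is to reduce Theorem \ref{TH1} to the general single-shot bounds of Subsection \ref{SS1} (Lemmas \ref{LL1}--\ref{LL3}) by exploiting the special structure available when every $W_x$ is pure. The key simplification is that for pure states $W_x=\kb{\psi_x}$, the information-spectrum relative entropy $D_s^\delta(R[P]\|S[P,W_P])$ can be rewritten in terms of the entropic quantity $H_s^\ep(W_P)$ evaluated on the single averaged state $W_P$ alone. Concretely, I would first observe that on the classical-quantum block structure, the projection $\{R[P]\le e^\lambda S[P,W_P]\}$ decomposes blockwise over $x$, and that for a pure state $W_x$ the operator inequality $P(x)\kb{\psi_x}\le e^\lambda P(x)\kb{\psi_x}^{1/2}\cdots$ collapses: testing $\kb{\psi_x}$ against $W_P$ reduces to comparing the overlap $\bra{\psi_x}W_P\ket{\psi_x}$ with the threshold. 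This is precisely where the $-\log \rho$ spectrum of $W_P$ enters, linking $D_s$ to $H_s^\ep(W_P)$.

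First I would establish the lower bound \eqref{EE6}. Starting from Lemma \ref{LL2}, I pick $\bar\rho=W_P$ and bound $D_s^\delta(R[P]\|S[P,W_P])$ from below by $H_s^{\ep'}(W_P)$ for an appropriately shifted error parameter, absorbing the shift into the two slack terms $\delta_1,\delta_2$. The idea is to split the single error budget $\ep$ into the testing error governing $M(\rho,\ep)$ and the spectral-slice error controlling $H_s$; each split costs a $\log(1/\delta_i)$ penalty, which is why the bound carries $-\log\frac{1-\ep}{\delta_1\delta_2}$. For the upper bound \eqref{EE7}, I would start from Lemma \ref{LL1}, convert $D_H^\ep$ to $D_s$ via the right inequality of Lemma \ref{LL3} (paying one $-\log\delta$), and then convert the resulting $D_s^{\ep+\delta}(R[P]\|S[P,W_P])$ into $H_s$ of $W_P$, again at the cost of a further error shift and a $\log(1/\delta)$ penalty. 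Tracking the two independent shifts $\delta_1,\delta_2$ through both conversions produces the asymmetric slack $\ep+\delta_1+2\delta_2$ and the penalty $\log\frac{1}{\delta_1\delta_2^2}$.

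The main obstacle I anticipate is the exact bookkeeping in the pure-state reduction from $D_s^\delta(R[P]\|S[P,W_P])$ to $H_s^{\ep}(W_P)$: one must verify that the eigenprojection $\{R[P]\le e^\lambda S[P,W_P]\}$, which a priori acts on the joint space $\clH_\clX\otimes\clH$, can be replaced — without loss in the trace inequality defining $D_s$ — by the spectral projection $\{W_P\ge e^{-\lambda}\}$ of the single averaged state. This step uses purity crucially, since $\tr W_x\{\cdots\}=\bra{\psi_x}\{\cdots\}\ket{\psi_x}$ turns operator thresholds into scalar overlaps, but one still has to argue that the per-$x$ contributions assemble into a clean statement about $W_P$ rather than about each $W_x$ separately. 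The asymmetry $\delta_2$ versus $2\delta_2$ between \eqref{EE6} and \eqref{EE7} strongly suggests that the upper-bound direction requires an extra smoothing or truncation argument (one $\delta_2$ for the conversion of Lemma \ref{LL3}, a second for the pure-state spectral reduction), and getting those two shifts to stack correctly is the delicate part. Once this reduction lemma is in place, both \eqref{EE6} and \eqref{EE7} follow by optimizing over $P$ and collecting the logarithmic penalty terms.
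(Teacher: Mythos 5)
Your scaffolding matches the paper's: the lower bound \eqref{EE6} is obtained by chaining Lemma \ref{LL2}, the second inequality of \eqref{EE5}, and a pure-state reduction to $H_s(W_P)$; the upper bound \eqref{EE7} by chaining Lemma \ref{LL1} (with $\rho'=W_P$), \eqref{EE5}, and a second pure-state reduction. But that chain is the easy part. The entire mathematical content of Theorem \ref{TH1} sits in the two reduction lemmas you explicitly defer as "the main obstacle," and the heuristic you offer for them does not work. For a pure state $W_x=\kb{\psi_x}$, the projection $\{W_x \le e^\lambda W_P\}$ is the projection onto the nonnegative eigenspace of $e^\lambda W_P - W_x$; it is \emph{not} determined by the scalar overlap $\bra{\psi_x}W_P\ket{\psi_x}$, and the blockwise identity $\tr R[P]\{R[P]\le e^\lambda S[P,W_P]\}=\sum_x P(x)\tr W_x\{W_x\le e^\lambda W_P\}$ (which is correct) does not by itself assemble into a statement about $\tr W_P\{W_P\ge e^{-\lambda}\}$. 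So your plan, as written, stalls exactly at the step where a proof is required.

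What the paper supplies at that point is two genuinely nontrivial lemmas. For the direction needed in \eqref{EE6}, it proves $D_H^{\ep}(R[P]\|S[P,W_P]) \ge H_s^{\ep}(W_P)$ (Lemma \ref{L1}) by \emph{constructing} a test $T=\sum_x \kb{x}\otimes W_x''$, where $W_x''$ is the normalized compression $\{I>e^\lambda W_P\}W_x\{I>e^\lambda W_P\}$: purity makes $W_x''$ a rank-one projection supported where $W_P<e^{-\lambda}$, hence $\tr W_P W_x''\le e^{-\lambda}$, which bounds the type-II error, while the type-I error equals $\tr W_P\{I\ge e^\lambda W_P\}$. For the direction needed in \eqref{EE7}, it proves $D_s^{\ep}(R[P]\|S[P,W_P]) \le H_s^{\ep+2\delta}(W_P)+2\log\frac{1}{\delta}$ (Lemma \ref{L2}), and this requires Winter's gentle measurement lemma to compare the tilted projection $\{W_x>e^{\lambda+\lambda'}W_P\}$ with $\{I>e^\lambda W_P\}$, at a cost of $2e^{-\lambda'/2}$ in probability with $\lambda'=2\log\frac{1}{\delta}$; this is precisely the origin of the $2\delta_2$ shift and the $\log\frac{1}{\delta_1\delta_2^2}$ penalty in \eqref{EE7}. (Your bookkeeping guess is also slightly off: in the upper bound the Lemma \ref{LL3} conversion costs $\delta_1$, and the pure-state reduction alone costs the full $2\delta_2$; it is not split one-and-one.) Without these two constructions — the explicit compressed test and the gentle-measurement comparison — your proposal is an outline of the reduction the paper performs, not a proof of it.
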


To show Theorem \ref{TH1}, we prepare the following Lemmas \ref{L1} and \ref{L2},
using the fact that $W_x$ is a pure state.
Although these two lemmas were essentially proved in \cite[Appendix II]{Ha10-1},
it is difficult to extract them from the reference, 
and we present them here with proofs.

\begin{lem}\label{L1}
The relation
\begin{align}\label{EE8}
 D_H^{\ep}(R[P]\|S[P,W_P]) \ge H_s^{\ep} (W_P)
\end{align}
holds.
\end{lem}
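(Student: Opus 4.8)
The claim is $D_H^{\ep}(R[P]\|S[P,W_P]) \ge H_s^{\ep}(W_P)$, relating the hypothesis-testing relative entropy between the cq-state $R[P]$ and the product state $S[P,W_P]$ to the spectral entropy $H_s^{\ep}(W_P)$. The plan is to exhibit a near-optimal test in the definition \eqref{EE2} of $D_H^{\ep}$ and bound the resulting expression from below. Recall that $H_s^{\ep}(W_P)$ is the largest $\lambda$ with $\tr W_P\{W_P \ge e^{-\lambda}\} \le \ep$; set $\lambda := H_s^\ep(W_P)$ and let $\Pi := \{W_P < e^{-\lambda}\}$ be the spectral projection onto the small-eigenvalue part of $W_P$. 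The key structural fact I would use is that each $W_x = \kb{\psi_x}$ is a \emph{pure} state, so $W_x = W_x^2$ and $\tr(A W_x) = \bra{\psi_x} A \ket{\psi_x}$ behaves well under projections.

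First I would construct a candidate test $Q$ on the joint system $\clH_\clX \otimes \clH$ of the block-diagonal form $Q := \sum_x \kb{x} \otimes Q_x$, where each $Q_x$ is a suitable local test acting on $\clH$. The natural choice is to take $Q_x$ related to the projector $\Pi$ (or $I - \Pi$) so that $Q$ succeeds on $R[P]$ with probability at least $1-\ep$ while having small overlap with $S[P,W_P]$. Concretely, because $S[P,W_P] = (\sum_x P(x)\kb{x}) \otimes W_P$ is a product across the classical register, we get $\tr(Q\, S[P,W_P]) = \sum_x P(x) \tr(Q_x W_P)$, and the block-diagonal structure similarly splits the $R[P]$-constraint into $\sum_x P(x) \tr(Q_x W_x) = \sum_x P(x) \bra{\psi_x} Q_x \ket{\psi_x} \ge 1-\ep$. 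The second step is to choose $Q_x$ so that the $W_P$-overlap $\tr(Q_x W_P)$ is controlled by $e^{-\lambda} = e^{-H_s^\ep(W_P)}$ on the relevant spectral range, which is exactly where the definition of $H_s^\ep$ enters: on the support where $W_P \ge e^{-\lambda}$ the test costs little relative entropy, and the spectral inequality $\tr W_P\{W_P \ge e^{-\lambda}\} \le \ep$ guarantees the error constraint is met. Taking $-\log$ of the minimized $\tr(Q\,S[P,W_P])$ then yields the bound $\ge \lambda$.

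The main obstacle I expect is matching the single spectral threshold of $H_s^\ep(W_P)$, which is defined in terms of the \emph{averaged} output state $W_P$, against the constraint $\sum_x P(x)\bra{\psi_x} Q_x \ket{\psi_x} \ge 1-\ep$, which is phrased in terms of the \emph{individual} pure outputs $\ket{\psi_x}$. The reconciling identity is $\sum_x P(x) \bra{\psi_x} \Pi \ket{\psi_x} = \tr(\Pi W_P) = \tr W_P\{W_P < e^{-\lambda}\}$, so that choosing the local test adapted to $\Pi$ converts the per-$x$ success probability directly into a spectral quantity of $W_P$; the purity of $W_x$ is what makes $\tr(\Pi W_x) = \bra{\psi_x}\Pi\ket{\psi_x}$ an honest probability and lets the averaging commute cleanly. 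Once this identification is in place, the bound on $\tr(Q\,S[P,W_P])$ follows from the operator inequality $\Pi \le e^{\lambda} W_P$ on the complementary range (or from $\{W_P \ge e^{-\lambda}\} \ge e^{\lambda} W_P \{W_P \ge e^{-\lambda}\}$ type estimates), and assembling these gives $D_H^\ep(R[P]\|S[P,W_P]) \ge \lambda = H_s^\ep(W_P)$ as required.
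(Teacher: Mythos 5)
Your setup is right---the block-diagonal test $Q=\sum_x \kb{x}\otimes Q_x$, the threshold $\lambda=H_s^{\ep}(W_P)$ with spectral projection $\Pi=\{W_P< e^{-\lambda}\}$, and the splitting $\tr (Q R[P])=\sum_xP(x)\tr (Q_xW_x)$, $\tr (Q S[P,W_P])=\sum_xP(x)\tr (Q_xW_P)$---but the proposal stalls at exactly the step that carries the proof, and the choice you gesture at does not work. Reading your ``local test adapted to $\Pi$'' literally as $Q_x=\Pi$ (which is what your reconciling identity $\sum_xP(x)\bra{\psi_x}\Pi\ket{\psi_x}=\tr(\Pi W_P)$ computes), the test ignores the classical register, and since $R[P]$ and $S[P,W_P]$ have the same marginal $W_P$ on $\clH$, you get $\tr (Q S[P,W_P])=\tr(\Pi W_P)\ge 1-\ep$ as well; the resulting bound is only $D_H^{\ep}\ge -\log\tr(\Pi W_P)\le -\log(1-\ep)$, essentially zero rather than $\lambda$. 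The operator inequality you invoke to control the overlap, $\{W_P\ge e^{-\lambda}\}\le e^{\lambda}W_P$, cannot repair this: it bounds a projection \emph{by} $W_P$ and is the tool for the opposite direction (it appears as \eqref{AO2} in the proof of Lemma \ref{L2}, which gives the upper bound), not for showing that a feasible test has small overlap with $W_P$.

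The missing idea is the actual construction of the paper: for each $x$, take $Q_x$ to be the \emph{rank-one} projection onto the normalized vector $\Pi\ket{\psi_x}$ (and $Q_x=0$ when $\Pi\ket{\psi_x}=0$); this is the operator $W_x''$ in the paper's proof, and it genuinely depends on $x$. Purity then enters in two essential ways, neither of which is the triviality $\tr(\Pi W_x)=\bra{\psi_x}\Pi\ket{\psi_x}$ you cite: (i) because $W_x$ is rank one, $\tr (W_xQ_x)=\tr (W_x\Pi)$, so the acceptance probability still averages to $\tr(\Pi W_P)\ge 1-\ep$ and the test is feasible; (ii) because $Q_x$ is a rank-one projection supported inside $\Pi$, where every eigenvalue of $W_P$ is below $e^{-\lambda}$, one has $\tr (W_PQ_x)=\tr(\Pi W_P\Pi\, Q_x)\le e^{-\lambda}\tr (Q_x)= e^{-\lambda}$ for every $x$, hence $\tr (Q S[P,W_P])\le e^{-\lambda}$. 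It is the rank-one property, not mere support inside $\Pi$, that converts ``small eigenvalues of $W_P$ on $\Pi$'' into a small overlap; any high-rank choice such as $\Pi$ itself destroys the bound, as above. With (i) and (ii), the definition \eqref{EE2} gives $D_H^{\ep}(R[P]\|S[P,W_P])\ge \lambda = H_s^{\ep}(W_P)$, which is the paper's argument.
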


\begin{proof}
We define the projection $W_x''$ as
\begin{align}
& W_x'' \nonumber \\
 :=&
 \begin{cases}
 \frac{1}{\tr W_x \{ I > e^{\lambda} W_P \}}
 \{ I > e^{\lambda} W_P \}W_x\{ I > e^{\lambda} W_P \}\\
    \hspace{10ex} \text{ when } \tr W_x \{ I \ge e^{\lambda} W_P \} \neq 0 \\
 0 \\
  \hspace{10ex} \text{ when } \tr W_x \{ I \ge e^{\lambda} W_P \}   =  0 .
 \end{cases}
\end{align}
Hence, we have 
\begin{align}
 \tr W_x  \{ I > e^{\lambda} W_P \} = \tr W_x  W_x'',
\end{align}
and thus
\begin{align}
& \tr W_P \{ I \ge e^{\lambda} W_P \} \nonumber\\
 =& \sum_{x}P(x) \tr W_x  \{ I \ge e^{\lambda} W_P \} 
\nonumber\\
 =& \sum_{x}P(x) \tr W_x  (I-W_x'').\label{NU1}
\end{align}
Since the support of $ W_x''$ is included in the support of 
$\{ I > e^{\lambda} W_P \} $ and $W_x''$ is the zero matrix or a rank-one projection,  
$W_x'' W_P W_x''$ is the zero matrix or a rank-one matrix
whose eigenvalue is not greater than $e^{-\lambda}$.
Hence, 
\begin{align}\label{NU2}
 \tr W_P W_x'' \le e^{-\lambda}.
\end{align}
We define the projection $T$ as
\begin{align}
 T:= \sum_{x} \kb{x} \otimes W_x''.
\end{align}
Then, using \eqref{NU1} and \eqref{NU2}, we have
\begin{align*}
 \tr S[P,W_P] T& \le e^{-\lambda}, \\
 \tr R[P] (I-T) &= \tr W_P \{ I \ge e^{\lambda} W_P \}. 
\end{align*}
\end{proof}

\begin{lem}\label{L2}
We have
\begin{align}
 &D_s^{\ep}\Big(R[P]\Big\| S\Big[P,\sum_x P(x) W_x\Big]\Big) 
\nonumber\\
 \le & 
 H_s^{\ep+2 \delta}\Bigl(\sum_x P(x) W_x\Bigr) +2 \log \frac{1}{\delta}.
\label{NMP}
\end{align}
\end{lem}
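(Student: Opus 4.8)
The plan is to relate the information-spectrum relative entropy $D_s^{\ep}(R[P]\|S[P,W_P])$, which lives on the joint system $\clH_{\clX}\otimes\clH$, back to the single-system quantity $H_s^{\ep+2\delta}(W_P)$. First I would unpack the definition of the left-hand side. By \eqref{EE1}, $D_s^{\ep}(R[P]\|S[P,W_P])$ is the largest $\lambda$ such that $\tr R[P]\{R[P]\le e^{\lambda}S[P,W_P]\}\le \ep$. Because both $R[P]$ and $S[P,W_P]$ are block-diagonal in the classical register $\{\ket{x}\}$, with blocks $P(x)W_x$ and $P(x)W_P$ respectively, the comparison operator $\{R[P]\le e^{\lambda}S[P,W_P]\}$ splits as a direct sum of the block comparisons $\{W_x\le e^{\lambda}W_P\}$. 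This reduces the joint-system spectral quantity to a $P$-average of per-$x$ spectral projections, and I would record the resulting identity $\tr R[P]\{R[P]\le e^{\lambda}S[P,W_P]\}=\sum_x P(x)\tr W_x\{W_x\le e^{\lambda}W_P\}$.

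The main obstacle is that $W_x\{W_x\le e^{\lambda}W_P\}$ is a statement about the relation between the pure state $W_x$ and the ensemble average $W_P$, whereas the target quantity $H_s^{\ep+2\delta}(W_P)$ only involves $W_P$ against the identity. Bridging these requires exploiting that each $W_x$ is \emph{pure} and applying the same projection-and-pinching technique used in Lemma \ref{L1}. Concretely, for a pure $W_x=\kb{\psi_x}$, one has $\tr W_x\{W_x\le e^{\lambda}W_P\}$ controlled by how much weight $\ket{\psi_x}$ places on the low-eigenvalue subspace $\{W_P\le e^{-\lambda'}\}$ of $W_P$; the operator inequality $W_x\le e^{\lambda}W_P$ for a rank-one $W_x$ forces $\tr W_P W_x''$ to be small exactly as in \eqref{NU2}. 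I expect to need a two-step thresholding: one threshold $\delta$ to pass from the eigenvalues of $W_P$ against $I$ to the comparison $W_x$ versus $W_P$, and a second $\delta$ to control the cross terms, which together produce the $\ep+2\delta$ in the error budget and the $2\log\frac{1}{\delta}$ penalty.

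To make this precise I would introduce, for the value $\lambda$ achieving the supremum on the left, the companion threshold $\lambda':=\lambda-\log\frac{1}{\delta}$ and split $\{W_x\le e^{\lambda}W_P\}$ using the spectral decomposition of $W_P$ into the part where $W_P\le e^{-\lambda''}I$ (which feeds directly into $H_s$) and the complementary high-eigenvalue part, bounding the latter's contribution by $\delta$ via a Markov-type estimate on $\tr W_P\{W_P\ge e^{-\lambda''}\}$. Averaging over $x$ and re-collecting the two discarded $\delta$-pieces yields $\tr W_P\{-\log W_P\le \lambda+2\log\frac{1}{\delta}\}\le \ep+2\delta$, which by the definition of $H_s^{\ep+2\delta}(W_P)$ gives $\lambda\le H_s^{\ep+2\delta}(W_P)+2\log\frac{1}{\delta}$. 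Taking $\lambda$ to be the optimizer on the left-hand side then establishes \eqref{NMP}. Throughout I would lean on the fact, already used in Lemma \ref{L1}, that purity of $W_x$ makes $W_x''$ a rank-one projection, so that the operator-monotone manipulations stay exact rather than merely inequalities up to dimension factors.
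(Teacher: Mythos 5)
Your opening step is correct and is exactly the paper's: $R[P]$ and $S[P,W_P]$ are block-diagonal in the classical register, so $\tr R[P]\{R[P]\le e^{\Lambda}S[P,W_P]\}=\sum_x P(x)\tr W_x\{W_x\le e^{\Lambda}W_P\}$, and you also correctly locate where purity enters: since $W_x'=\{W_x> e^{\lambda+\lambda'}W_P\}$ is rank one or zero, one gets $\tr W_x'\,(e^{\lambda+\lambda'}W_P)\le\tr W_x'W_x\le 1$, which is the paper's \eqref{AO}--\eqref{AO1} and the analogue of \eqref{NU2} that you cite.

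However, there is a genuine gap at the crux. After the block reduction you must compare $\tr W_x\{W_x> e^{\lambda+\lambda'}W_P\}$ with $\tr W_x\{I> e^{\lambda}W_P\}$; these are overlaps with two \emph{non-commuting} projections, and the entire content of the lemma is to pass from one to the other at an additive cost of $2\delta$. Your sketch disposes of this with ``a Markov-type estimate on $\tr W_P\{W_P\ge e^{-\lambda''}\}$'' plus unspecified ``operator-monotone manipulations,'' but neither mechanism does the job: $\tr W_P\{W_P\ge e^{-\lambda''}\}$ is precisely the quantity appearing in the definition of $H_s$ --- it is what you are trying to bound by $\ep+2\delta$, not something you may assume is $\le\delta$. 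The paper closes this gap with Winter's gentle measurement lemma \cite{Winter}: from the purity bound and the operator inequality $\{I\le e^{\lambda}W_P\}\le e^{\lambda}W_P$ (\eqref{AO2}--\eqref{AO3}) one gets $\tr W_x'\{I\le e^{\lambda}W_P\}\le e^{-\lambda'}$, and then the trace-norm perturbation bound $\bigl\|W_x'-\{I>e^{\lambda}W_P\}W_x'\{I>e^{\lambda}W_P\}\bigr\|_1\le 2e^{-\lambda'/2}$ yields $\tr W_xW_x'\le\tr W_x\{I>e^{\lambda}W_P\}+2\delta$ once $\lambda'=2\log\frac{1}{\delta}$. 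That single step is the source of both the $2\delta$ and the $2\log\frac{1}{\delta}$ in \eqref{NMP}, and it is absent from your proposal; without it (or an equivalent Cauchy--Schwarz-type pinching estimate) the argument does not close, since the rank-one structure alone gives no control of the cross terms between the two projections. A smaller symptom of the same looseness: your intermediate claim $\tr W_P\{-\log W_P\le\lambda+2\log\frac{1}{\delta}\}\le\ep+2\delta$ has the threshold shifted the wrong way; what the argument actually delivers is $\tr W_P\{-\log W_P\le\lambda-2\log\frac{1}{\delta}\}\le\ep+2\delta$, which is exactly what is needed to conclude $\lambda\le H_s^{\ep+2\delta}(W_P)+2\log\frac{1}{\delta}$.
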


\begin{proof}
In this proof, we set $W_P:= \sum_x P(x) W_x $.
We choose $\lambda':=2 \log \frac{1}{\delta}$ and 
assume that
\begin{align}
 \lambda= H_s^{\ep-\delta}\Big(  \sum_x P(x) W_x\Big).
\end{align}

We define the non-negative matrix $W_x'$ as
\begin{align}
 W_x':= \{ W_x > e^{\lambda+\lambda'} W_P\} 
\end{align}
Since $W_x$ is a rank-one projection,
$\{ W_x > e^{\lambda+\lambda'} W_P\}$ is a rank-one projection or $0$, 
which implies $\tr  W_x'  W_x \le 1$.
Since $\tr  W_x' ( W_x - e^{\lambda+\lambda'} W_P) \ge 0$, we have
\begin{align}
 \tr W_x' (e^{\lambda+\lambda'} W_P)\le \tr  W_x'  W_x \le 1\label{AO}.
\end{align}
Then, we have 
\begin{align}\label{AO1}
 \tr W_x' (e^{\lambda} W_P) \le e^{-\lambda'}.
\end{align}
Also we have
\begin{align}\label{AO2}
 \{ I \le e^{\lambda} W_P \} \le e^{\lambda} W_P.
\end{align}
The combination of \eqref{AO2} and \eqref{AO} implies
\begin{align}\label{AO3}
 \tr W_x' \{ I\le e^{\lambda} W_P \} \le 
 \tr W_x' (e^{\lambda} W_P) \le e^{-\lambda'}.
\end{align}

Now, we have
\begin{align}
\begin{split}
& \bigl\| W_x' - \{ I > e^{\lambda} W_P\}   W_x' \{I > e^{\lambda} W_P \} \bigr\|_1 \\
=&\Big\| \{ I > e^{\lambda} W_P\}W_x' \{I \le e^{\lambda} W_P \} 
\\
&\hspace{10ex}+
         \{ I \le e^{\lambda} W_P\}W_x' \Big\|_1 \\
\le & \left\| \{ I > e^{\lambda} W_P\}W_x' \{I \le e^{\lambda} W_P \} \right\|_1 \\
&+ 
     \left\| \{ I \le e^{\lambda} W_P\}W_x' \right\|_1 \\
\stackrel{(a)}{\le} &2 \sqrt{\tr \{I \le e^{\lambda} W_P \}W_x'}
 \stackrel{(b)}{\le} 2 e^{-\lambda'/2},
\end{split}
\end{align}
where $(a)$ follows from Winter's gentle measurement lemma 
\cite{Winter}\footnote{Its detail is also available from \cite[Exercise 6.8 and Solution to Exercise 6.8]{Springer}.}, and $(b)$ follows from \eqref{AO3}.
Thus, 
\begin{align}
&     \tr W_x  \{ W_x > e^{\lambda+\lambda'} W_P\}
=   \tr W_x W_x' \nonumber \\
 \le &\tr W_x \{ I > e^{\lambda} W_P \}W_x' \{I \ge e^{\lambda} W_P \}
     + 2 e^{-\lambda'/2} \nonumber \\
\le &\tr W_x \{ I > e^{\lambda} W_P \}+ 2 e^{-\lambda'/2}
 \nonumber \\
 =&   \tr  W_x \{ I > e^{\lambda} W_P \}+ 2 \delta.
\end{align}
Then, we have 
\begin{align}
\begin{split}
& \tr R[P] \Big\{ R[P] \le e^{\lambda+\lambda'} S\Big[P,\sum_x P(x) W_x\Big]  \Big\} \\
&=   \sum_{x}P(x) \tr W_x  \{ W_x \le e^{\lambda+\lambda'} W_P\}\\
 &=1- \sum_{x}P(x) \tr W_x  \{ W_x   > e^{\lambda+\lambda'} W_P\} \\
&\ge 
  1- \Big(\sum_{x}P(x) \tr  W_x \{ I > e^{\lambda} W_P \}\Big)- 2 \delta \\
 &=   \Big(\sum_{x}P(x) \tr  W_x \{ I \le e^{\lambda} W_P \}\Big)- 2 \delta \\
&=\tr \Big( \bar{W} \{ I \le e^{\lambda} W_P \}\Big)- 2 \delta,
\end{split}
\end{align}
which implies \eqref{NMP}.
\end{proof}

\begin{proof}[{Proof of Theorem \ref{TH1}}]
We will show Theorem \ref{TH1} by 
combining Lemmas \ref{L1} and \ref{L2} to Lemmas in Appendix \ref{SS1}.
The inequality \eqref{EE6} is shown as follows:
\begin{align}
\begin{split}
&  M(\rho,\ep)\\
& \stackrel{(a)}{\ge} \max_{P} D_s^{\ep- \delta_1}(R[P]\|S[P,W_P])
  - \log \frac{1-\ep}{\delta_1} \\
& \stackrel{(b)}{\ge} \max_{P} D_H^{\ep- \delta_1-\delta_2}(R[P]\|S[P,W_P])
  - \log \frac{1-\ep}{\delta_1\delta_2} \\
& \stackrel{(c)}{\ge} \max_{P} H_s^{\ep- \delta_1-\delta_2}(W_P)
  - \log \frac{1-\ep}{\delta_1\delta_2},
\end{split}
\end{align}
where steps $(a)$, $(b)$ and $(c)$ follow from \eqref{EE4} of Lemma \ref{LL2}, 
{}from the second equation of \eqref{EE5} of Lemma \ref{LL3}, 
and from \eqref{NMP} of Lemma \ref{L2}, respectively.
The inequality  \eqref{EE7} is shown as follows:
\begin{align}
\begin{split}
&  M(\rho,\ep) \\
& \stackrel{(a)}{\le} \max_{P} D_H^{\ep}(R[P]\|S[P,W_P])\\
&  \stackrel{(b)}{\le} \max_{P} D_s^{\ep+\delta_1}(R[P]\|S[P,W_P]) - \log \delta_1 \\
& \stackrel{(c)}{\le} \max_{P} H_s^{\ep+\delta_1+2\delta_2}(W_P) 
  + \log \frac{1}{\delta_1 \delta_2^2}.
\end{split}
\end{align}
In the step $(a)$, we applied \eqref{EE3} of Lemma \ref{LL1} to the case with $\rho'=W_P$.
The steps $(b)$ and $(c)$ follow from the second inequality in \eqref{EE5} 
of Lemma \ref{LL3}, and from \eqref{EE8} of Lemma \ref{L1}, respectively.
\end{proof}

%%%%%%%%%%%%%%%%%%%%%%%%%%%%%%%%%%%%%%%%%%%%%%%%%%%%%%%%%%%%%%%%%%%%%%%%%%%%%%%%%%%%%%%%%
\subsection{Single-shot bound for pure state covariant channel}\label{SS3}

Now, we consider the group covariant case, i.e., 
the case when the set of states $\{W_x\}_{x \in \clX}$ 
is given as $\{ f(g) \rho f(g)^\dagger \}_{g \in G}$, where
$G$ is a finite group, $f$ is its unitary representation, and $\rho$ is a pure state. 
In this case, Theorem \ref{TH1} is simplified as follows.
The following theorem shows \eqref{EE9} and \eqref{EE10}.

\begin{thm}\label{TH2}
When $\{W_x\}_{x \in \clX}$ is given as the above defined set 
$\{ f(g) \rho f(g)^\dagger \}_{g \in G}$, we have the following relations.
\begin{align}
 M(\rho,\ep) &\ge H_s^{\ep-\delta_1-\delta_2}(\av_f[\rho]) 
                  - \log \frac{1}{\delta_1 \delta_2}, \\
 M(\rho,\ep) &\le H_s^{\ep+\delta_1+2\delta_2}(\av_f[\rho]) 
                  + \log \frac{1}{\delta_1 \delta_2^2}.
 \label{SS:EE10}
\end{align}
\end{thm}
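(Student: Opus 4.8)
The plan is to deduce Theorem~\ref{TH2} from Theorem~\ref{TH1} by instantiating the general pure-state channel $\{W_x\}_{x\in\clX}$ as the orbit $\{f(g)\rho f(g)^\dagger\}_{g\in G}$ (so $\clX=G$), but to handle the two bounds asymmetrically. The lower bound is immediate: in \eqref{EE6} the right-hand side is a maximum over input distributions $P$, so I would simply take $P$ uniform on $G$. Then $W_P=\sum_{g}\frac{1}{|G|}f(g)\rho f(g)^\dagger=\av_f[\rho]$, and \eqref{EE6} gives $M(\rho,\ep)\ge H_s^{\ep-\delta_1-\delta_2}(\av_f[\rho])-\log\frac{1-\ep}{\delta_1\delta_2}$; weakening $1-\ep\le 1$ produces the stated constant $-\log\frac{1}{\delta_1\delta_2}$.

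The upper bound is the real content, and the tempting move of substituting the uniform $P$ into \eqref{EE7} does \emph{not} work: $\max_P H_s^{\ep'}(W_P)$ can be strictly larger than $H_s^{\ep'}(\av_f[\rho])$, because $H_s^{\ep'}$ is not monotone under the twirl $W_P\mapsto\av_f[W_P]=\av_f[\rho]$. (Already for $G=\mathbb{Z}_2$ swapping two orthogonal states with $\rho=\kb{0}$, a skewed $P$ yields spectrum $(p,1-p)$ while $\av_f[\rho]=\tfrac12 I$, and for $\ep'$ near $1$ one has $H_s^{\ep'}\big((p,1-p)\big)=-\log(1-p)>\log 2=H_s^{\ep'}(\tfrac12 I)$.) Instead I would re-run the upper-bound chain of Theorem~\ref{TH1} with the \emph{invariant} reference $\rho'=\sigma:=\av_f[\rho]$ in Nagaoka's Lemma~\ref{LL1}, obtaining $M(\rho,\ep)\le\max_P D_H^\ep(R[P]\|S[P,\sigma])$, and then apply \eqref{EE5} of Lemma~\ref{LL3} to pass, inside the maximum, to $D_s^{\ep+\delta_1}(R[P]\|S[P,\sigma])-\log\delta_1$.

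The covariance of $\sigma$ is exactly what collapses the maximum over $P$ and converts the estimate into $H_s^{\ep'}(\av_f[\rho])$. Since $R[P]-e^\lambda S[P,\sigma]$ is block diagonal in the classical register, $\tr R[P]\{R[P]\le e^\lambda S[P,\sigma]\}=\sum_g P(g)\,\tr W_g\{W_g\le e^\lambda\sigma\}$, and the invariance $f(g)\sigma f(g)^\dagger=\sigma$ forces $\{W_g\le e^\lambda\sigma\}=f(g)\{\rho\le e^\lambda\sigma\}f(g)^\dagger$, so every summand equals $\tr\rho\{\rho\le e^\lambda\sigma\}$; hence $D_s^{\ep+\delta_1}(R[P]\|S[P,\sigma])=D_s^{\ep+\delta_1}(\rho\|\sigma)$ for all $P$ and the maximum disappears. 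I would then run the gentle-measurement argument of Lemma~\ref{L2} with $\sigma$ in place of $W_P$ — the pure-state facts it uses ($W_g$ rank one, $\{W_g>e^{\lambda+\lambda'}\sigma\}(W_g-e^{\lambda+\lambda'}\sigma)\ge 0$) hold verbatim for any reference — to reach $D_s^{\ep+\delta_1}(\rho\|\sigma)\le H_s^{\ep+\delta_1+2\delta_2}(\sigma)+2\log\frac{1}{\delta_2}$, and combining the constants gives $M(\rho,\ep)\le H_s^{\ep+\delta_1+2\delta_2}(\av_f[\rho])+\log\frac{1}{\delta_1\delta_2^2}$. The single extra ingredient is the identity $\tr\rho\{\sigma\ge e^{-\lambda}\}=\tr\sigma\{\sigma\ge e^{-\lambda}\}$, which follows from invariance of $\{\sigma\ge e^{-\lambda}\}$ together with $\sigma=\frac{1}{|G|}\sum_g f(g)\rho f(g)^\dagger$; it is what turns the cross term thrown off by the argument into the quantity defining $H_s^{\ep'}(\sigma)$. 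The main obstacle is therefore conceptual rather than computational: one must recognize that the upper bound cannot be read off from \eqref{EE7} and instead commit to the invariant reference $\av_f[\rho]$, then check that these two covariance identities survive the smoothing bookkeeping of Lemma~\ref{L2} so that the $P$-dependence is genuinely eliminated and the terminal quantity is $H_s^{\ep'}(\av_f[\rho])$ and not a residual cross term.
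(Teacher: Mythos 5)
Your proposal is correct, and for the upper bound --- the substantive half --- it follows essentially the same route as the paper's own proof: apply Lemma~\ref{LL1} with $\rho'=\av_f[\rho]$, pass from $D_H$ to $D_s$ via the second inequality of \eqref{EE5}, use the covariance identity (the paper's \eqref{EE11}) to eliminate the dependence on $P$, and finish with the gentle-measurement bound of Lemma~\ref{L2}. Your $\bbZ_2$ example correctly pinpoints why \eqref{EE7} cannot simply be instantiated at the uniform distribution, a point the paper leaves implicit. One simplification you overlooked: once the covariance identity has collapsed the maximum, you may simply take $P=P_U$, for which $W_{P_U}=\av_f[\rho]$ and hence $S[P_U,\av_f[\rho]]=S[P_U,W_{P_U}]$, so Lemma~\ref{L2} applies verbatim --- there is no need to re-run its proof with a general invariant reference (though your check that the argument survives, including the invariance identity $\tr\rho\{\av_f[\rho]\ge e^{-\lambda}\}=\tr\av_f[\rho]\{\av_f[\rho]\ge e^{-\lambda}\}$, is sound). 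For the lower bound you take a genuinely different and slightly cleaner route: you instantiate \eqref{EE6} of Theorem~\ref{TH1} at $P=P_U$ and weaken $1-\ep\le 1$ to get the stated constant, whereas the paper instead starts from Korzekwa's covariant-specific bound \eqref{SS:direct} and re-runs the \eqref{EE6} chain (Lemma~\ref{LL3} followed by Lemma~\ref{L1}) on $D_s^{\ep-\delta_1}(\rho\|\av_f[\rho])$. Both give the claimed inequality; your version reuses Theorem~\ref{TH1} directly and avoids the external citation, while the paper's version makes explicit that the result also follows from the slightly stronger single-shot estimate \eqref{direct} tailored to the group-covariant setting.
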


\begin{proof}
Since 
$\tr \rho \{ \rho \le e^\lambda \av_f[\rho] \}= 
\tr f(g)\rho f(g)^\dagger\{ f(g)\rho f(g)^\dagger\le e^\lambda \av_f[\rho] \}$, we have
\begin{align}\label{EE11}
\begin{split}
& D_s^{\ep+\delta_1}(R[P]  \|S[P,\av_f[\rho]]) \\
=&
 D_s^{\ep+\delta_1}(R[P_U]\|S[P_U,\av_f[\rho]]),
\end{split}
\end{align}
where $P_U$ is the uniform distribution on $G$.
In the following step $(a)$, we apply  \eqref{EE3} of Lemma \ref{LL1} to 
the case with $\rho'=\av_f[\rho]$:
\begin{align}
\begin{split}
& M(\rho,\ep) \\
&\stackrel{(a)}{\le} \max_{P}D_H^{\ep}(R[P]\|S[P,\av_f[\rho]])\\
& \stackrel{(b)}{\le} \max_{P} D_s^{\ep+\delta_1}(R[P]\|S[P,\av_f[\rho]]) 
                   - \log \delta_1 \\
&\stackrel{(c)}{=}   D_s^{\ep+\delta_1}(R[P_U]\|S[P_U,\av_f[\rho]]) - \log \delta_1 \\
& \stackrel{(d)}{\le} \max_{P} H_s^{\ep+\delta_1+2\delta_2}(\av_f[\rho]) 
                   + \log \frac{1}{\delta_1 \delta_2^2},
\end{split}
\end{align}
where steps $(b)$, $(c)$ and $(d)$ follow from 
the second inequality of \eqref{EE5} of Lemma \ref{LL3},
{}from \eqref{EE11}, and from \eqref{EE8} of Lemma \ref{L1}, respectively.

By \cite[Theorem 2]{Korzekwa}, we have the following evaluation:
\begin{equation}\label{SS:direct}
 \log M(\rho,\ep) \ge
 D_s^{\ep-\delta}(  \rho\| \av_f[\rho] )) - \log \frac{1}{\delta} .
\end{equation}
Applying to \eqref{SS:direct} the same discussion as the proof of \eqref{EE6},
we obtain \eqref{SS:EE10}.
\end{proof}

\section{Derivations of \eqref{NHI1} and \eqref{NHI2}}\label{NVR}
We derive the formulas \eqref{NHI1} and \eqref{NHI2}.
For this aim, we recall the following formula when
$x/n$ is fixed and $n \to \infty$;
\begin{align}
& \log \tbinom{n}{x} 
 =n h(\tfrac{x}{n})+\tfrac{1}{2}\log \tfrac{n}{2\pi x(n-x)}+o(1)\nonumber \\
  =&n h(\tfrac{x}{n})-\tfrac{1}{2}\log n
     -\tfrac{1}{2}\log \bigl(\tfrac{2\pi x}{n}(1-\tfrac{x}{n})\bigr)+o(1),\label{NBY}
\end{align}
Under the Type I limit, using \eqref{NBY}, we have
\begin{align}
&\log \binom{m-l}{n-k}\nonumber \\
=
&(n-k) h(\frac{n\xi -l}{n-k})-\frac{1}{2}\log (n-k)
\nonumber \\
&-\tfrac{1}{2}\log \bigl( 2\pi \frac{n\xi -l}{n-k}
(1-\frac{n\xi -l}{n-k})\bigr)+o(1) \nonumber \\
= &(n-k) h(\frac{\xi -l/n}{1-k/n})-\frac{1}{2}\log (n-k)
\nonumber \\
&-\tfrac{1}{2}\log \bigl( 2\pi \frac{\xi -l/n}{1-k/n}
(1-\frac{\xi -l/n}{1-k/n})\bigr)+o(1) \nonumber \\
= &(n-k) h((\xi -l/n)(1+k/n)+o(1/n))
\nonumber \\
&
-\frac{1}{2}\log (n-k) 
-\tfrac{1}{2}\log \bigl( 2\pi \xi(1-\xi)\bigr)+o(1)\nonumber  \\
= &(n-k) h( \xi +\frac{\xi k-l}{n}
+o(1/n))
-\frac{1}{2}\log (n-k) 
\nonumber \\
&-\tfrac{1}{2}\log \bigl( 2\pi \xi(1-\xi)\bigr)+o(1) \nonumber \\
= &(n-k) h( \xi)+
(1-\frac{k}{n}) h'( \xi)(\xi k-l)
\nonumber \\
&-\frac{1}{2}\log n (1-\frac{k}{n}) -\tfrac{1}{2}\log \bigl( 2\pi 
(\xi  (1-\xi)\bigr)+o(1) \nonumber \\
= &(n-k) h( \xi)
-\frac{1}{2}\log n
+\frac{1}{2} \cdot \frac{k}{n} 
\nonumber \\
&-\tfrac{1}{2}\log \bigl( 2\pi \xi  (1-\xi)\bigr)
+h'( \xi)(\xi k-l)
+o(1) .
\end{align}
Therefore, we obtain \eqref{NHI1} as follows.
\begin{align}
&\log \binom{m}{n}-\log \binom{m-l}{n-k}\nonumber \\
=&
nh(\xi)-\frac{1}{2}\log n
-\tfrac{1}{2}\log \bigl( 2\pi \xi(1-\xi)\bigr)+o(1) \nonumber \\
&-\Big( (n-k) h( \xi)
-\frac{1}{2}\log n
-\tfrac{1}{2}\log \bigl( 2\pi \xi  (1-\xi)\bigr)
\nonumber \\
&
+h'( \xi)(\xi k-l)
+o(1) \Big)\nonumber \\
=& k h( \xi)
- h'( \xi)(\xi k-l)
+o(1) .
\end{align}

In Type II limit, using \eqref{NBY}, we obtain \eqref{NHI2} as
\begin{align}
&\log \binom{m}{n}-\log \binom{m-l}{n-k} \nonumber\\
=&
nh(\xi)-\frac{1}{2}\log n
-\tfrac{1}{2}\log \bigl( 2\pi \xi(1-\xi)\bigr)+o(1) \nonumber\\
&- \Big(n(\beta+\delta) h(\frac{\beta}{\beta+\delta})-\frac{1}{2}\log n(\beta+\delta)
\nonumber \\
&
-\tfrac{1}{2}\log \bigl( 2\pi \frac{\beta}{\beta+\delta}(1-\frac{\beta}{\beta+\delta})\bigr)+o(1) \Big) \nonumber\\
=&
n(h(\xi)-(\beta+\delta) h(\frac{\beta}{\beta+\delta}))
+\frac{1}{2}\log (\beta+\delta)
\nonumber \\
&
-\tfrac{1}{2}\log \bigl( 2\pi \xi(1-\xi)\bigr)  \nonumber\\
&+\tfrac{1}{2}\log \bigl( 2\pi \frac{\beta}{\beta+\delta}(1-\frac{\beta}{\beta+\delta})\bigr)+o(1) .
\end{align}

\section{Derivations of \eqref{NBY4} and \eqref{AL2-BC}}\label{NVR2}
We derive the formulas \eqref{NBY4} and \eqref{AL2-BC}.
In Type II limit, using \eqref{NBY}, we obtain 
\eqref{NBY4} as
\begin{align}
& S(\av_{\pi}[\Xi_{n,m|k,l}]) 
\nonumber \\
=&\sum_{x=0}^k p(x) \bigl(\log \dim \clV_{(n-x,x)} - \log p(x)\bigr) \nonumber\\
\stackrel{(a)}{=}&\sum_{x=0}^k p(x) \log \dim \clV_{(n-x,x)} 
 +O(\log k) \nonumber\\
\stackrel{(b)}{=}&\sum_{x=0}^k p(x) \log \dim \binom{n}{x}\frac{n-2x+1}{n-x+1}
 +O(\log k) \nonumber \\
\stackrel{(c)}{=}&\sum_{x=0}^k p(x) nh(\frac{x}{n})
 +O(\log n) \nonumber\\
 =&\sum_{x=0}^k p(x) n h(\mu+ \frac{x-n\mu}{n})
 +O(\log n) \nonumber\\
 =&\sum_{x=0}^k p(x) 
 \Big(n h(\mu)+h'(\mu) (x-n\mu)\nonumber \\
&
 +O((\frac{x-n\mu}{\sqrt{n}})^2)\Big)
 +O(\log n) ,\label{NBY2}
\end{align}
where each step is derived as follows.
$(a)$ follows from the relation 
$-\sum_{x=0}^k p(x) \log p(x) \le \log k$, 
$(b)$ follows from the relation
$\dim \clV_{(n-x,x)}=\binom{n}{x}\frac{n-2x+1}{n-x+1}$ in \eqref{eq:hook}.
$(c)$ follows from the relation \eqref{NBY}.

Next, we show \eqref{AL2-BC}.
In Type II limit, 
to evaluate the value $H_s^{\ve}(\av_\pi[\Xi_{n,m|k,l}])$, 
we introduce the following quantities.
\begin{align}
 &H_s^{\ve,1}(\av_\pi[\Xi_{n,m|k,l}]) \nonumber \\
  := &
 \max \Bigl\{ \lambda \Big|
 \nonumber \\
&
P_{n,m,k,l}[\{x \mid
  \log \dim \clV_{(n-x,x)} \le \lambda\}] \le \ep \Bigr\}.
\label{ZYS1} \\
 & H_s^{\ve,2}(\av_\pi[\Xi_{n,m|k,l}]) \nonumber \\
:= &
 \max \Bigl\{ \lambda \mid \nonumber \\
&
P_{n,m,k,l}[\{x \Big|
  -\log p(x \midd n,m,k,l) \le \lambda\}] \le \ep \Bigr\}.
\label{ZYS2}
\end{align}
We have
\begin{align}
 & H_s^{\ve_1,1}(\av_\pi[\Xi_{n,m|k,l}]) \le
 H_s^{\ve_1}(\av_\pi[\Xi_{n,m|k,l}]) 
\nonumber \\
 \le &
 H_s^{\ve_1+\ve_2,1}(\av_\pi[\Xi_{n,m|k,l}]) +
 H_s^{\ve_2,2}(\av_\pi[\Xi_{n,m|k,l}]) .
\end{align}
Also, we have
\begin{align}
0&\le H_s^{\ve,2}(\av_\pi[\Xi_{n,m|k,l}]) 
\nonumber \\
&\le -\log \frac{\ve}{k}
=-\log \ep+\log k .
\end{align}
Thus,
\begin{align}
& H_s^{\ve_1,1}(\av_\pi[\Xi_{n,m|k,l}]) 
 \le 
 H_s^{\ve_1}(\av_\pi[\Xi_{n,m|k,l}]) \nonumber \\
\le &
 H_s^{\ve_1+\ve_2,1}(\av_\pi[\Xi_{n,m|k,l}]) -\log \ep_2+\log k .\label{BFAI}
\end{align}
Also, we denote by $\Phi$ and $\tilde{\Phi}$
the cdfs of the standard Gaussian distribution and 
the Rayleigh distribution, respectively.

By \cite[Lemma 4.7.1]{As}, 
we have under the limit $n \to \infty$ with fixed ratio $t=x/n$ that 
\begin{align*}
 \log \binom{n}{x} &= h(t) n + O(\log n), \\
 h(t)&:=-t \log t -(1-t) \log(1-t).
\end{align*}
Then, we can approximate $\log \dim \clV_{(n-x,x)}$ to $h(\frac{x}{n}) n+O(\log n)$ 
when $x$ behaves linearly with respect to $n$.
Hence,
\begin{align}\label{AL2-BC6}
& H_s^{\ve,1}(\av_\pi[\Xi_{n,m|k,l}])  
 = \log \dim \clV_{(n-x_n(\ep),x_n(\ep))}
\nonumber \\
=& h(\frac{x_n(\ep)}{n}) n+O(\log n)
\nonumber\\
=&n h(\mu)+h'(\mu) (x_n(\ep)-n\mu)\nonumber \\
&
 +O((\frac{x_n(\ep)-n\mu}{\sqrt{n}})^2)+O(\log n).
\end{align}
Combining \eqref{BFAI} and \eqref{AL2-BC6}, we have
\eqref{AL2-BC}, i.e., 
\begin{align}
& H_s^{\ve}(\av_\pi[\Xi_{n,m|k,l}])  \nonumber \\
=&n h(\mu)+h'(\mu) (x_n(\ep)-n\mu)\nonumber \\
&
 +O((\frac{x_n(\ep)-n\mu}{\sqrt{n}})^2)+O(\log n).
\end{align}

%%%%%%%%%%%%%%%%%%%%%%%%%%%%%%%%%%%%%%%%%%%%%%%%%%%%%%%%%%%%%%%%%%%%%%%%%%%%%%%%%%%%%%%%%%
\section{von Neumann entropy $S(\av_f[\psi])$
in special case $\alpha \gamma=0$}\label{ss:II:sp}
The aim of this appendix is to show \eqref{NHT}
by assuming
\eqref{eq:intro:II:asmp} and $\alpha \gamma=0$.
In this appendix, we directly derive \eqref{NHT} without use of \eqref{NBY4}.
For this aim,
to address a more detailed analysis than Section \ref{S2-2},
we consider Type II limit in the case $\alpha \gamma=0$,
corresponding to the case $K L=0$. 
We use Proposition \ref{prp:ProPH}, which 
discusses the tight exponential evaluation of the tail probability.
Using it, we derive
the asymptotic behavior of the von Neumann entropy $S(\av_f[\psi])$.

Before taking the limit $n \to \infty$, 
the parameters concerned are $(n,m,k,l)$ satisfying $K:=k-l=0$ or $L:=l=0$.
Let us focus on the former case $k=l$, and also assume $m \le n-m$.

Then, we prepare the following proposition, which was shown in \cite{HHY1}.
\begin{prp}\label{prp:k=l}
Under the assumption $m \le n-m$ and $k=l$, we have 
\begin{align}
 p(x) = \frac{\binom{n}{x}}{\binom{n}{m}} \frac{\binom{l}{x}}{\binom{m}{x}}
        \frac{n-2x+1}{n-x+1} \binom{n-l-x}{m-l}
\label{eq:ac=0:p}
\end{align} 
for $x \le l \wedge m$, and $p(x)=0$ otherwise.
\end{prp}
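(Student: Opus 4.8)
The plan is to reduce the computation of $p(x)$ to a single squared Clebsch--Gordan coefficient. When $k=l$ we have $k-l=0$, so $\ket{1^l\,0^{k-l}}=\ket{1^l}$, and since $N=n-m$ and $M=m-l$ the whole state becomes
\begin{align}
\ket{\Xi_{n,m|k,l}}=\ket{1^l}\otimes\ket{\Xi_{n-l,m-l}}.
\end{align}
Identifying $\ket{1}$ with the spin-down qubit, the block of the first $l$ qubits carries the extremal (lowest-weight) vector $\ket{j_A,-j_A}$ with $j_A=l/2$, while the Dicke factor is the weight vector $\ket{j_B,m_B}$ with $j_B=(n-l)/2$ and $m_B=\frac{n+l-2m}{2}$. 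Both blocks thus lie inside their respective symmetric (maximal-spin) subspaces $V_A$ and $V_B$.

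First I would use that $\sfP_{(n-x,x)}$ projects onto the total-spin sector $j=\frac n2-x$. Because $\ket{\Xi_{n,m|k,l}}$ lies in the product $V_A\otimes V_B$, which decomposes multiplicity-freely into total spin $J$ with $|j_A-j_B|\le J\le j_A+j_B$ by the Clebsch--Gordan rule, the $\SU(2)$-equivariant projector $\sfP_{(n-x,x)}$ restricts on this subspace to the ordinary total-spin-$J$ projection; no further multiplicity copy of $\clV_{(n-x,x)}$ meets the state. Hence, for $J=\frac n2-x$ and $M=\frac{n-2m}2$,
\begin{align}
p(x)=\bigl|\langle j_A,-j_A;\,j_B,m_B\,|\,J,M\rangle\bigr|^2,
\end{align}
a single Clebsch--Gordan coefficient. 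The triangle inequality $J\ge|j_A-j_B|$ forces $x\le l$, and the weight bookkeeping forces $x\le m$, which already produces the stated support $x\le l\wedge m$ and the vanishing outside.

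Next I would evaluate this coefficient. Since the first projection is extremal, $m_A=-j_A$, the summation index in Racah's formula is pinned to the single value $k=j_A+j_B-J$ (every other term carries a negative factorial argument), so the coefficient is closed-form with no sum. Substituting $j_A=l/2$, $j_B=(n-l)/2$, $J=\frac n2-x$, $M=\frac{n-2m}2$ and reading off the arguments $j_A+j_B-J=x$, $J+j_A-j_B=l-x$, $J+M=n-x-m$, $J-M=m-x$, $j_B-m_B=m-l$, $j_B+m_B=n-m$, one obtains
\begin{align}
p(x)=(n-2x+1)\frac{(n-l-x)!\,l!\,(m-x)!\,(n-m)!}{(n-x+1)!\,x!\,(l-x)!\,(n-x-m)!\,(m-l)!},
\end{align}
and regrouping these factorials into binomial coefficients reproduces \eqref{eq:ac=0:p}.

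The main obstacle is the bookkeeping: rigorously justifying that the ambient Schur--Weyl projector restricts to the within-block total-spin projector, and then carrying the single-term Racah expression through the factorial-to-binomial simplification while keeping the domain of validity correct. Both steps are routine once the reduction to a stretched Clebsch--Gordan coefficient is in place; the case $l=0$ (where $p(x)=\delta_{x,0}$) and the case $l=1$ (where $p(0)=m/n$ and $p(1)=(n-m)/n$) serve as quick consistency checks.
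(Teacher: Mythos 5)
Your proposal is correct, but it is worth noting that the paper itself contains no proof of this proposition: it is imported verbatim from the reference \cite{HHY1}, where \eqref{eq:ac=0:p} arises as the $k=l$ specialization of general closed-form expressions for $p(x \midd n,m,k,l)$ written in terms of Hahn and Racah ($\hg{3}{2}$- and $\hg{4}{3}$-hypergeometric) orthogonal polynomials. Your route is genuinely different and self-contained: you observe that for $k=l$ both tensor factors $\ket{1^l}$ and $\ket{\Xi_{n-l,m-l}}$ lie in their maximal-spin symmetric subspaces, so that the Schur--Weyl projector $\sfP_{(n-x,x)}$ (being exactly the total-spin-$(\tfrac{n}{2}-x)$ isotypic projector for the diagonal $\SU(2)$ action) restricts to the multiplicity-free Clebsch--Gordan decomposition of $V_A \otimes V_B$, reducing $p(x)$ to a single squared stretched coefficient $\abs{\langle j_A,-j_A;j_B,m_B|J,M\rangle}^2$; the extremal weight $m_A=-j_A$ pins Racah's sum to the single index $j_A+j_B-J$, and your factorial expression indeed regroups exactly into \eqref{eq:ac=0:p}, with the triangle and weight constraints (the latter using $m \le n-m$) giving the support $x \le l \wedge m$. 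What your argument buys is an elementary, representation-theoretically transparent derivation that needs no hypergeometric machinery; what it does not buy is generality, since for $k \neq l$ the attached block $\ket{1^l\,0^{k-l}}$ is no longer an extremal vector of a single spin-$j$ irrep (it has components in many spin sectors of $(\bbC^2)^{\otimes k}$), the multiplicity-free reduction fails, and one genuinely needs the Hahn/Racah-polynomial analysis of \cite{HHY1} that the paper relies on for the general distribution. Your consistency checks ($l=0$ and $l=1$) are both verified by the stated formula, which adds confidence to the bookkeeping.
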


Below we assume $l>0$, $M:=m-l>0$ and $N:=n-m-k+l>0$
because of \eqref{eq:intro:II:asmp}.
In this section, we consider the Type II limit 
with $\gamma=0$, i.e., the limit $n \to \infty$ fixed ratios 
\begin{align*}
& \gamma=\tfrac{k-l}{n}=0, \ \alpha=\tfrac{l}{n}>0, \
 \beta=\tfrac{M}{n}>0, 
\nonumber \\
&
  \delta=\tfrac{N}{n}>0, \
\xi=\alpha+\beta=\tfrac{m}{n} \le 1/2.
\end{align*} 
The quantities are then given by 
\begin{align}\label{eq:ac=0:munu}
\begin{split}
& D = 1-4 \alpha(1-\xi), \\
&\mu = \frac{1-\sqrt{1-4\alpha(1-\xi)}}{2}, \\
& \nu = \frac{1+\sqrt{1-4\alpha(1-\xi)}}{2}, \quad 
 \sigma = \sqrt{\frac{\alpha \beta \delta}{D}}.
\end{split}
\end{align}
Note that the assumption $0<\xi \le 1/2$ and $0<\alpha<1/2$ yields $D, \sigma >0$ and 
\begin{align}\label{eq:ac=0:mhn}
 \mu < \tfrac{1}{2} < \nu.
\end{align}
Also, by Theorem \ref{thm:intro:CLT}, the expectation and the variance asymptotically behave as
\begin{align}
& \bbE[X] = n \mu + \phi + o(1),  \label{eq:ac=0:NJ1} 
 \\
& \phi := \frac{\sigma^2-\mu}{1-2\mu} = 
 \frac{1}{\sqrt{D}} (\frac{\alpha \beta \delta}{D}-\frac{1-\sqrt{D}}{2}), 
 \nonumber \\
 &\lim_{n \to \infty} \bbV \Bigl[\frac{X}{\sqrt{n}}\Bigr] = 
 \lim_{n \to \infty} \bbE \Bigl[\Big(\frac{X-n\mu}{\sqrt{n}}\Big)^2\Bigr] = 
 \sigma^2 = \frac{\alpha \beta \delta}{D}. 
 \label{eq:ac=0:NJ2}
\end{align}

Then, we show the following proposition to address the asymptotic expansion of 
$S(\av_f[\psi])$ by using the formulas \eqref{eq:ac=0:NJ1} and \eqref{eq:ac=0:NJ2}.
This proposition is the precise statement of \eqref{NHT}.
\begin{prp}\label{prp:ProPH}
In Type II limit with the condition $\alpha \beta \delta >0$, $\gamma=0$ 
and $0< \xi \le 1/2$, we have
\begin{align}\label{eq:ac=0:NT1}
 S(\av_f[\psi]) = n C_1+ C_2+C_3\phi+C_4\sigma^2+o(1),
\end{align}
where
\begin{align*}
C_1 :=& h(\xi) + \xi h\Bigl(\frac{\mu}{\xi}\Bigr)
             -\alpha h\Bigl(\frac{\mu}{\alpha}\Bigr)\\
             &-(1-\alpha-\mu) h\Bigl(\frac{\beta}{1-\alpha-\mu}\Bigr), \\
C_2 :=& \frac{1}{2}\log \frac{\alpha \beta}{\xi^2(1-\xi)}
        \Bigl(1-\frac{\beta}{1-\alpha-\mu}\Bigr), \\
C_3 :=& h'\Bigl(\frac{\mu}{\xi}\Bigr)
       -h'\Bigl(\frac{\mu}{\alpha}\Bigr)+h\Bigl(\frac{\beta}{1-\alpha-\mu}\Bigr)\\
      & -\frac{\beta}{1-\alpha-\mu} h'\Bigl(\frac{\beta}{1-\alpha-\mu}\Bigr), \\
C_4 :=& \frac{1}{2\xi} h''\Bigl(\frac{\mu}{\xi}\Bigr)
       -\frac{1}{2\alpha} h''\Bigl(\frac{\mu}{\alpha}\Bigr)
       \\&-\frac{\beta^2}{2(1-\alpha-\mu)^3} h''\Bigl(\frac{\beta}{1-\alpha-\mu}\Bigr).
\end{align*} 
\end{prp}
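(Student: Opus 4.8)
The plan is to evaluate $S(\av_f[\psi])$ directly from its defining sum \eqref{eq:intro:vNav} after inserting the closed-form pmf of Proposition \ref{prp:k=l}. The first move is a cancellation: since $\dim\clV_{(n-x,x)} = \binom{n}{x}\tfrac{n-2x+1}{n-x+1}$ by \eqref{eq:hook} and the pmf \eqref{eq:ac=0:p} carries exactly the same factors $\binom{n}{x}$ and $\tfrac{n-2x+1}{n-x+1}$, the summand $g(x) := \log\dim\clV_{(n-x,x)} - \log p(x)$ collapses to $g(x) = \log\binom{n}{m} + \log\binom{m}{x} - \log\binom{l}{x} - \log\binom{n-l-x}{m-l}$, which no longer contains $p(x)$. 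The payoff is structural: $S(\av_f[\psi]) = \bbE[g(X)]$ is then a linear functional of $P_{n,m,k,l}$, determined entirely by the moments of $X$ supplied by Theorem \ref{thm:intro:CLT}, and the Shannon-entropy part $-\bbE[\log p(X)]$ is absorbed automatically by the cancellation.

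Next I would push each binomial through the sharp Stirling expansion \eqref{NBY}. The $x$-free term $\log\binom{n}{m}$ contributes $nh(\xi) - \tfrac12\log n$ plus a constant. For the three $x$-dependent terms I would expand the leading parts $\xi n\,h(\tfrac{x}{\xi n})$, $\alpha n\,h(\tfrac{x}{\alpha n})$ and $(n-l-x)\,h(\tfrac{m-l}{n-l-x})$ to second order in $\Delta := X - n\mu$ about $x = n\mu$, and take expectations. Theorem \ref{thm:intro:CLT} supplies $\bbE[\Delta] = \phi + o(1)$ and $\bbE[\Delta^2] = n\sigma^2 + o(n)$ (cf.\ \eqref{eq:ac=0:NJ1}, \eqref{eq:ac=0:NJ2}), while the bounded normalized higher moments \eqref{NVT} render the cubic and higher Taylor terms negligible.

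Collecting by order, the $O(n)$ parts sum to $n[h(\xi)+\xi h(\tfrac\mu\xi)-\alpha h(\tfrac\mu\alpha)-(1-\alpha-\mu)h(\tfrac{\beta}{1-\alpha-\mu})] = nC_1$; the four $\pm\tfrac12\log n$ Stirling corrections cancel (two negative from $\binom{n}{m}$ and $\binom{m}{x}$, two positive from $\binom{l}{x}$ and $\binom{n-l-x}{m-l}$), so no logarithm survives; the leftover constant Stirling terms combine into $C_2$. The first-order-in-$\Delta$ coefficients are $h'(\tfrac\mu\xi) - h'(\tfrac\mu\alpha) + h(s) - s\,h'(s)$ with $s := \tfrac{\beta}{1-\alpha-\mu}$, and multiplying by $\bbE[\Delta]=\phi$ reproduces $C_3\phi$; the second-order coefficients $\tfrac{1}{2\xi}h''(\tfrac\mu\xi) - \tfrac{1}{2\alpha}h''(\tfrac\mu\alpha) - \tfrac{\beta^2}{2(1-\alpha-\mu)^3}h''(s)$ times $\sigma^2$ reproduce $C_4\sigma^2$. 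The shapes $h - s\,h'$ and $\tfrac{\beta^2}{(1-\alpha-\mu)^3}h''$ arise precisely because in $(n-l-x)\,h(\tfrac{\beta n}{n-l-x})$ the outer factor also depends on $x$, so differentiating in $\Delta$ gives $-h(u)+u\,h'(u)$ and then $u^3 h''(u)/(\beta n)$ at $u = s$.

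The crux is to legitimate the termwise Taylor expansion, since $h''$ and $h'''$ diverge at the ends of the support (where $\tfrac{x}{\alpha n}$ or $\tfrac{x}{\xi n}$ approaches $0$). I would therefore split at a fixed window $W = \{|\tfrac{X}{n}-\mu|\le\epsilon_0\}$, with $\epsilon_0$ small enough that every ratio stays in a compact subinterval of $(0,1)$. On $W$ the third derivatives are $O(n^{-2})$, so the Taylor remainder obeys $\bbE[|\mathrm{rem}|\,\mathbf{1}_W] \le \tfrac16 \max_W|f'''|\,\bbE[|\Delta|^3] = O(n^{-2})\cdot O(n^{3/2}) = o(1)$ by \eqref{NVT}; on $W^c$ the exact summand is $O(n)$ while Proposition \ref{prp:5.6.2} gives an exponentially small probability, so $\bbE[g(X)\mathbf{1}_{W^c}] = o(1)$, and the degree-two approximant integrated against $W^c$ is likewise $o(1)$. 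Assembling the central estimate then yields \eqref{eq:ac=0:NT1}. The only delicate bookkeeping is the exact cancellation producing $C_2$; everything else is a routine second-order Laplace-type expansion anchored by the moment control of Theorem \ref{thm:intro:CLT} and the tail bound of Proposition \ref{prp:5.6.2}.
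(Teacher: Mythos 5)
Your proposal is correct and follows essentially the same route as the paper's own proof: the same cancellation of $\binom{n}{x}\tfrac{n-2x+1}{n-x+1}$ via Proposition \ref{prp:k=l}, the same Stirling expansion \eqref{NBY}, the same second-order Taylor expansion in $z=x-n\mu$ (the paper's Lemma \ref{lem:ac=0:C16}), with expectations evaluated by the moments of Theorem \ref{thm:intro:CLT}, the cubic remainder controlled by \eqref{NVT}, and the tail outside a window around $n\mu$ killed by the exponential bound of Proposition \ref{prp:5.6.2}. The only cosmetic difference is that the paper keeps the $o(1)$ correction terms $C_5 z/n$ and $C_6 z^2/n^2$ explicitly in its polynomial approximant $v(z)$, which your expansion subsumes into the error.
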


In fact, the leading coefficient $C_1$ equals $h(\mu)$ as follows.
\begin{lem}
We have $C_1 = h(\mu)=-\mu \log \mu - (1-\mu) \log(1-\mu)$.
\end{lem}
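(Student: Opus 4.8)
The plan is to expand every binary-entropy term through $h(x)=-x\log x-(1-x)\log(1-x)$ and then exploit a single quadratic fact about $\mu$. Since $\gamma=0$ forces $\alpha+\beta+\delta=1$, we have $\delta=1-\xi$, and the definition $\mu=\tfrac{1-\sqrt D}{2}$ with $D=1-4\alpha(1-\xi)$ from \eqref{eq:ac=0:munu} exhibits $\mu$ and $1-\mu$ as the two roots of $t^{2}-t+\alpha\delta$, so that
\begin{align}\label{plan:quad}
 \mu(1-\mu)=\alpha\delta,\qquad \mu+(1-\mu)=1.
\end{align}
Expanding $\xi h(\mu/\xi)$, $\alpha h(\mu/\alpha)$, and $(1-\alpha-\mu)h\bigl(\beta/(1-\alpha-\mu)\bigr)$ and collecting terms, the $\xi\log\xi$ and $\mu\log\mu$ contributions cancel in pairs, leaving
\begin{align}
\begin{split}
 C_1=&-\alpha\log\alpha-\delta\log\delta+\beta\log\beta\\
 &+(\alpha-\mu)\log(\alpha-\mu)+(\delta-\mu)\log(\delta-\mu)\\
 &-(\xi-\mu)\log(\xi-\mu)-(1-\alpha-\mu)\log(1-\alpha-\mu).
\end{split}
\end{align}
It then remains to identify this expression with $-\mu\log\mu-(1-\mu)\log(1-\mu)$.

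The key step is to read off a hidden product structure. From \eqref{plan:quad} I would derive
\begin{align}
 (\alpha-\mu)(\delta-\mu)=\alpha\delta-\mu(\alpha+\delta)+\mu^2=\mu(1-\alpha-\delta)=\mu\beta,
\end{align}
where the second equality invokes \eqref{plan:quad} and the third uses $1-\alpha-\delta=\beta$. This product relation says that the nonnegative array
\begin{align}
 P=\begin{pmatrix}\mu & \alpha-\mu\\ \delta-\mu & \beta\end{pmatrix}
\end{align}
satisfies $P_{11}P_{22}=P_{12}P_{21}$, i.e. it is rank one. Its row sums are $\alpha$ and $1-\alpha-\mu$, its column sums are $\delta$ and $\xi-\mu$, and its total mass is $1-\mu$; hence $P/(1-\mu)$ is a genuine product probability distribution whose two marginals are the normalized row and column vectors. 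Applying Shannon additivity $H(\text{joint})=H(\text{row})+H(\text{col})$ to $P/(1-\mu)$ and clearing the common factor together with the bookkeeping term $\log(1-\mu)$ produces exactly the identity that the displayed expression for $C_1$ equals $-\mu\log\mu-(1-\mu)\log(1-\mu)=h(\mu)$.

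I expect the only genuine obstacle to be spotting the factorization $(\alpha-\mu)(\delta-\mu)=\mu\beta$ and the resulting rank-one array; once these are in hand the rest is pure bookkeeping of the normalization constant $\log(1-\mu)$. I would also record the elementary sign checks $\mu\le\alpha$ and $\mu\le\delta$, both equivalent to $\alpha+\delta\le1$ (which holds since $\alpha+\delta=1-\beta$), so that the entries $\alpha-\mu$, $\delta-\mu$ and all arguments of the $h$'s are legitimate probabilities. A direct attack that tries to match the seven weighted $t\log t$ terms without the $2\times2$ table is much more error prone, precisely because the weights do not align with the underlying product relation.
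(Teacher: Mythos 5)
Your proof is correct, and it takes a genuinely different route from the paper's. The paper verifies $C_1-h(\mu)=0$ by brute force: it writes the difference as a sum of weighted logarithms (its equation \eqref{eq:ac=0:C13}), invokes the fact that $\mu$ is a root of the cubic $(1-t)(1-\xi-t)(\alpha-t)=t(\xi-t)(1-\alpha-t)$ to symmetrize the terms with weights $\mu$ and $1-\mu$, and then closes with three separate product identities, $\mu(1-\mu)=\alpha(1-\xi)$, $(\alpha-\mu)(1-\alpha-\mu)=\alpha(\xi-\alpha)$, and $(\xi-\mu)(1-\xi-\mu)=(1-\xi)(\xi-\alpha)$. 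You instead use a single algebraic input --- the quadratic $\mu(1-\mu)=\alpha\delta$ coming from \eqref{eq:ac=0:munu} --- to get the factorization $(\alpha-\mu)(\delta-\mu)=\mu\beta$, and then observe that the array
\begin{align*}
 P=\begin{pmatrix}\mu & \alpha-\mu\\ \delta-\mu & \beta\end{pmatrix}
\end{align*}
is rank one with total mass $1-\mu$, row marginals $(\alpha,\,1-\alpha-\mu)$ and column marginals $(\delta,\,\xi-\mu)$, so that $P/(1-\mu)$ is exactly the product of its marginals; Shannon additivity for product distributions then delivers the seven-term identity in one stroke. I checked the pieces: your expansion of $C_1$ (the $\xi\log\xi$ and $\mu\log\mu$ cancellations) is exactly right, the row/column sums and the factorization are correct, the positivity checks $\mu\le\alpha$, $\mu\le\delta$ follow since $\alpha$ and $\delta$ lie between the roots $\mu$ and $1-\mu$ of $t^2-t+\alpha\delta$ (the values $-\alpha\beta$ and $-\delta\beta$ of that quadratic are nonpositive), and under the standing assumption $\alpha\beta\delta>0$ all table entries are strictly positive so no degenerate logarithms occur. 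What your approach buys is an explanation rather than a verification: the identity $C_1=h(\mu)$ is precisely the vanishing of the mutual information of a rank-one $2\times2$ contingency table, whereas the paper's computation leaves the cancellation looking accidental. The only cosmetic gap is that the final bookkeeping (clearing the factor $1-\mu$ and the $\log(1-\mu)$ terms after applying $H(\text{joint})=H(\text{row})+H(\text{col})$) is asserted rather than displayed, but it is a mechanical two-line computation and does come out exactly to the required identity.
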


\begin{proof}
A direct calculation (regarding $\mu$ as an indeterminate) yields
\begin{align}\label{eq:ac=0:C13}
\begin{split}
 &C_1 -h(\mu) \\
 = 
&(1-\mu) \log(1-\mu)(1-\xi-\mu)(\alpha-\mu) 
\\&
+ \mu \log \mu(\xi-\mu)(1-\alpha-\mu) \\
&-(1-\alpha) \log(1-\alpha-\mu)(\alpha-\mu)\\
& - \xi \log(1-\xi-\mu)(\alpha-\mu) \\
&-\alpha \log \alpha-(1-\xi)\log(1-\xi)\\
&+(\xi-\alpha)\log(\xi-\alpha).
\end{split}
\end{align}
Since $\mu$ is a solution of the cubic equation 
$(1-t)(1-\xi-t)(\alpha-t)= t(\xi-t)(1-\alpha-t)$ for $t$, 
the first line of \eqref{eq:ac=0:C13} can be computed as 
\begin{align}
\begin{split}
& (1-\mu) \log(1-\mu)(1-\xi-\mu)(\alpha-\mu)\\
&+\mu \log \mu(\xi-\mu)(1-\alpha-\mu) \\
=&\tfrac{1}{2} \log(1-\mu)(1-\xi-\mu)(\alpha-\mu) \\
&+ 
  \tfrac{1}{2} \log \mu(\xi-\mu)(1-\alpha-\mu).
\end{split}
\end{align}
Then we can proceed as 
\begin{align}
\begin{split}
 &C_1 -h(\mu) \\
 = 
&\tfrac{1}{2} \log \mu(1-\mu) + 
 (\tfrac{1}{2}-\xi) \log (\xi-\mu)(1-\xi-\mu) \\
 &+ 
 (\alpha-\tfrac{1}{2}) \log (\alpha-\mu)(1-\alpha-\mu) 
-\alpha \log \alpha \\
&-(1-\xi)\log(1-\xi)+(\xi-\alpha)\log(\xi-\alpha) .
\end{split}
\end{align}
Then, using $\mu-\mu^2=\alpha(1-\xi)$, $(\alpha-\mu)(1-\alpha-\mu)=\alpha(\xi-\alpha)$
and $(\xi-\mu)(1-\xi-\mu)=(1-\xi)(\xi-\alpha)$, we can check $C_1-h(\mu)=0$.
\end{proof}

To show Proposition \ref{prp:ProPH}, we compute 
\begin{align*}
&\log \dim \clV_{(n-x,x)}-\log p(x)\\
=&\log  \binom{n}{m} \binom{m}{x} \binom{l}{x}^{-1} \binom{n-l-x}{m-l}^{-1},
\end{align*}
which follows from the combination of \eqref{eq:hook} and \eqref{eq:ac=0:p}.
If $x/n$ is fixed and $n \to \infty$, then 
\begin{align*}
& \log \tbinom{n}{x} 
 =n h(\tfrac{x}{n})+\tfrac{1}{2}\log \tfrac{n}{2\pi x(n-x)}+o(1)
\\
 =& n h(\tfrac{x}{n})-\tfrac{1}{2}\log n
     -\tfrac{1}{2}\log \bigl(\tfrac{2\pi x}{n}(1-\tfrac{x}{n})\bigr)+o(1),
\end{align*}
where $h(t) := -t \log t - (1-t) \log(1-t)$ is the binary entropy.
Then, a direct computation yields 
\begin{align}\label{eq:ac=0:Sdfn}
\begin{split}
& \log \dim \clV_{(n-x,x)} - \log p(x)\\
=&n h(\tfrac{m}{n})+m h(\tfrac{x}{m})-l h(\tfrac{x}{l})-(n-l-x) h(\tfrac{m-l}{n-l-x}) 
\\
& -\tfrac{1}{2}\log\tfrac{n m}{l(n-l-x)} 
%&\phantom{M}
 -\tfrac{1}{2}\log\tfrac{m}{n}(1-\tfrac{m}{n})\tfrac{x}{m}(1-\tfrac{x}{m}) 
 \\
 &+\tfrac{1}{2}\log\tfrac{x}{l}(1-\tfrac{x}{l})\tfrac{m-l}{n-l-x}(1-\tfrac{m-l}{n-l-x})
 +o(1). 
\end{split}
\end{align}

Hereafter we use the variables and ratios 
\begin{align}
 z:=x-n \mu, \quad 
 \xi=\tfrac{m}{n}, \quad \alpha=\tfrac{l}{n}, \quad \beta=\tfrac{m-l}{n}.
\end{align} 
We can then rewrite \eqref{eq:ac=0:Sdfn} as
\begin{align}
&  \log \dim \clV_{(n-x,x)} - \log p(x) \nonumber\\
\begin{split}
=& n h(\xi)+ n \xi h(\tfrac{\mu}{\xi}+\tfrac{z}{n \xi})
  -n \alpha h(\tfrac{\mu}{\alpha}+\tfrac{z}{\alpha n}) \\
&  -n(1-\alpha-\mu-\tfrac{z}{n}) h(\tfrac{\beta}{1-\alpha-\mu-\frac{z}{n}}) \\
&
  -\tfrac{1}{2}\log\tfrac{\xi}{\alpha(1-\alpha-\mu-\frac{z}{n})} \\
&  -\tfrac{1}{2}\log\xi(1-\xi)(\tfrac{\mu}{\alpha}+\tfrac{z}{n \alpha})
                           (1-\tfrac{\mu}{\alpha}-\tfrac{z}{n \alpha}) \\
&  +\tfrac{1}{2}\log(\tfrac{\mu}{\alpha}+\tfrac{z}{n \alpha})
                 (1-\tfrac{\mu}{\alpha}-\tfrac{z}{n \alpha}) \\
&\cdot  \tfrac{\beta}{1-\alpha-\mu-\frac{z}{n}}(1-\tfrac{\beta}{1-\alpha-\mu-\frac{z}{n}})
\end{split}
\nonumber\\
\begin{split}\label{eq:ac=0:vNS} 
=& n h(\xi)+n \xi h(\tfrac{\mu}{\xi}+\tfrac{z}{n \xi})
  -n \alpha h(\tfrac{\mu}{\alpha}+\tfrac{z}{n \alpha})\\
&  -n(1-\alpha-\mu-\tfrac{z}{n}) h(\tfrac{\beta}{1-\alpha-\mu-\frac{z}{n}}) \\
&
 +\tfrac{1}{2}\log\tfrac{\alpha \beta}{\xi^2(1-\xi)}
 +\tfrac{1}{2}\log(1-\tfrac{\beta}{1-\alpha-\mu-\frac{z}{n}}).
\end{split}
\end{align}

\begin{lem}\label{lem:ac=0:C16}
In the limit  $z \to 0$ and $n \to \infty$, we have 
\begin{align}\label{eq:ac=0:v(z)}
\begin{split}
&\log \dim \clV_{(n-x,x)}-\log p(x) = v(z) + O\Bigl(\frac{z^3}{n^2}\Bigr), \\
&v(z) := C_1 n + C_2 + C_3 z + C_5 \frac{z}{n} + 
         C_4 \frac{z^2}{n}   + C_6 \frac{z^2}{n^2},
\end{split}
\end{align}
where the coefficients $C_1$ through $C_4$ are given 
as in Proposition \ref{prp:ProPH}, and the rest are given by 
\begin{align}\label{eq:ac=0:C1-6}
\begin{split}
 C_5 &:=-\frac{1}{2} \frac{\beta}{(1-\alpha-\mu)(1-\alpha-\mu-\beta)}, 
 \\
 C_6 &:= \frac{1}{4} \Bigl(\frac{1}{(1-\alpha-\mu)^2}
                         -\frac{1}{(1-\alpha-\mu-\beta)^2}\Bigr).
\end{split}
\end{align}
The formula actually holds for any smooth function $h(t)$.
\end{lem}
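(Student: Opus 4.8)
The plan is to regard the right-hand side of \eqref{eq:ac=0:vNS} as a function of the single small quantity $z/n$ and to Taylor-expand each of its six summands about $z=0$, collecting the results by order. Since the statement is asserted to hold for an arbitrary smooth $h$, the derivation is a purely formal expansion; the only analytic input required is that the three arguments $\mu/\xi$, $\mu/\alpha$ and $\beta/(1-\alpha-\mu)$ remain in a compact subinterval of $(0,1)$ under the standing hypotheses $\alpha\beta\delta>0$ and $0<\xi\le 1/2$, so that $h$ and its first three derivatives are bounded on a neighbourhood and all Taylor remainders are uniformly $O(z^3/n^2)$. First I would set aside the two summands independent of $z$, namely the leading $n h(\xi)$ and the constant $\tfrac12\log\tfrac{\alpha\beta}{\xi^2(1-\xi)}$, which feed directly into $C_1 n$ and into $C_2$.

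Next come the two shifted-entropy summands $n\xi\,h(\tfrac{\mu}{\xi}+\tfrac{z}{n\xi})$ and $-n\alpha\,h(\tfrac{\mu}{\alpha}+\tfrac{z}{n\alpha})$, each of the form $n c\,h(\tfrac{\mu}{c}+\tfrac{z}{nc})$ with $c=\xi$ or $c=\alpha$. Expanding $h$ about $\mu/c$ and multiplying by $nc$ gives
\begin{align*}
n c\, h\Bigl(\tfrac{\mu}{c}\Bigr) + h'\Bigl(\tfrac{\mu}{c}\Bigr) z + \frac{1}{2c} h''\Bigl(\tfrac{\mu}{c}\Bigr)\frac{z^2}{n} + O\Bigl(\frac{z^3}{n^2}\Bigr),
\end{align*}
so these two summands contribute $\xi h(\mu/\xi)-\alpha h(\mu/\alpha)$ to $C_1$, $h'(\mu/\xi)-h'(\mu/\alpha)$ to $C_3$, and $\tfrac{1}{2\xi}h''(\mu/\xi)-\tfrac{1}{2\alpha}h''(\mu/\alpha)$ to $C_4$, with nothing at orders $z/n$ or $z^2/n^2$.

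The only delicate summand is the remaining bulk term, in which $z$ appears in both the prefactor and the argument. I would write it as $n F(1-\alpha-\mu-\tfrac{z}{n})$ with $F(s):=-s\,h(\beta/s)$ and record
\begin{align*}
F'(s) = -h\Bigl(\tfrac{\beta}{s}\Bigr) + \tfrac{\beta}{s} h'\Bigl(\tfrac{\beta}{s}\Bigr), \quad F''(s) = -\frac{\beta^2}{s^3} h''\Bigl(\tfrac{\beta}{s}\Bigr),
\end{align*}
the cancellation producing the clean $F''$ being the key simplification. Expanding $F$ about $s=1-\alpha-\mu$ and multiplying by $n$ then yields the remaining parts $-(1-\alpha-\mu)h(\tfrac{\beta}{1-\alpha-\mu})$ of $C_1$, the expression $h(\tfrac{\beta}{1-\alpha-\mu})-\tfrac{\beta}{1-\alpha-\mu}h'(\tfrac{\beta}{1-\alpha-\mu})$ of $C_3$, and $-\tfrac{\beta^2}{2(1-\alpha-\mu)^3}h''(\tfrac{\beta}{1-\alpha-\mu})$ of $C_4$. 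Finally I would expand the logarithmic correction $\tfrac12\log(1-\tfrac{\beta}{1-\alpha-\mu-z/n})$ about $z=0$ using $G(s):=\tfrac12\log(1-\beta/s)$, with
\begin{align*}
G'(s) = \tfrac{1}{2}\Bigl(\tfrac{1}{s-\beta} - \tfrac{1}{s}\Bigr), \quad G''(s) = \tfrac{1}{2}\Bigl(\tfrac{1}{s^2} - \tfrac{1}{(s-\beta)^2}\Bigr);
\end{align*}
its value at $s=1-\alpha-\mu$ completes $C_2$, while $-G'$ and $\tfrac12 G''$ there give exactly the coefficients $C_5$ of $z/n$ and $C_6$ of $z^2/n^2$ in \eqref{eq:ac=0:C1-6}.

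The main obstacle is bookkeeping rather than conceptual: one must expand all summands consistently to third order, check that no summand contributes to an order already assigned elsewhere (in particular that the bulk terms supply nothing at orders $z/n$ or $z^2/n^2$, and the logarithmic term nothing at order $z$), and verify that the several cubic remainders combine into the single claimed error $O(z^3/n^2)$. Establishing the uniformity of that remainder over the relevant range of $z$ is where the analytic hypotheses enter, since it requires the third derivatives of $h$, $F$ and $G$ to stay bounded on the pertinent compact arguments, which is guaranteed by $\alpha\beta\delta>0$ and $0<\xi\le1/2$.
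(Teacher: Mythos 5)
Your proposal is correct and follows essentially the same route as the paper's proof: Taylor-expanding each summand of \eqref{eq:ac=0:vNS} about $z=0$ to second order with cubic remainders and collecting coefficients, your auxiliary functions $F(s)=-s\,h(\beta/s)$ and $G(s)=\tfrac{1}{2}\log(1-\beta/s)$ being merely a tidier packaging of the expansions the paper writes out directly. The resulting contributions to $C_1,\dotsc,C_6$ and the error bookkeeping match the paper's computation exactly.
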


\begin{proof}
By the Taylor expansion of the binary entropy $h$, 
for $\zeta = \xi,\alpha$,
we have
\begin{align*}
&n \xi h(\tfrac{\mu}{\zeta}+\tfrac{z}{n \zeta}) \\
 =& \xi h(\tfrac{\mu}{\zeta})n + z h'(\tfrac{\mu}{\zeta}) + 
   \tfrac{1}{2\zeta}h''(\tfrac{\mu}{\zeta}) \tfrac{z^2}{n} + O(\tfrac{z^3}{n^2}), 
\\
&n(1-\alpha-\mu-\tfrac{z}{n}) h(\tfrac{\beta}{1-\alpha-\mu-\frac{z}{n}})\\
 =&(1-\alpha-\mu-\tfrac{z}{n}) h(\tfrac{\beta}{1-\alpha-\mu}) n \\
&
  - \bigl(h(\tfrac{\beta}{1-\alpha-\mu})
           -\tfrac{\beta}{1-\alpha-\mu}h'(\tfrac{\beta}{1-\alpha-\mu})\bigr)z 
\\
& +\tfrac{1}{2}\tfrac{\beta^2}{(1-\alpha-\mu)^3} 
  h''(\tfrac{\beta}{1-\alpha-\mu})\tfrac{z^2}{n} + O(\tfrac{z^3}{n^2}).
\end{align*}
We also have 
\begin{align*}
 & \log(1-\tfrac{\beta}{1-\alpha-\mu-\frac{z}{n}}) \\
 = 
& \log(1-\tfrac{\beta}{1-\alpha-\mu}) - 
  \tfrac{\beta}{(1-\alpha-\mu)(1-\alpha-\mu-\beta)}\tfrac{z}{n} \\
&+\tfrac{1}{2}\bigl(\tfrac{1}{(1-\alpha-\mu)^2}-\tfrac{1}{(1-\alpha-\mu-\beta)^2}\bigr)
  \tfrac{z^2}{n^2} + O(\tfrac{z^3}{n^3}).
\end{align*}
Replacing the terms in \eqref{eq:ac=0:vNS} by these expressions,
we obtain \eqref{eq:ac=0:C1-6}.
\end{proof}

Now we can show Proposition \ref{prp:ProPH}.

\begin{proof}[{Proof of Proposition \ref{prp:ProPH}}]
Let us denote $p_n(x) := p(x \midd n,m,k,l)$.
By Lemma \ref{lem:ac=0:C16}, for $\ep>0$,
there exists a constant $C_7$ such that
any $z \in \bigl( (\mu -\ep)n, (\mu+\ep)n \bigr)$ satisfies
\begin{align}\label{eq:ac=0:NN1}
 \abs{\bigl(\log \dim \clV_{(n-x,x)}-\log p_n(x)\bigr)-v(z)} 
 < C_7 \frac{\abs{z}^3}{n^2}.
\end{align}
Since the values $\max_x(\log \dim \clV_{(n-x,x)}-\log p_n(x))$ and $\max_x v(x-n\mu)$
behave linearly for $n$
and Proposition \ref{prp:5.6.2} guarantees that
the probability $\sum_{x: \, \abs{x-n \mu} > \ep} p_n(x)$
goes to zero exponentially,
we have
%\eqref{eq:ac=0:MK3} and \eqref{eq:ac=0:MK4} 
\begin{align}
0=&\lim_{n \to \infty} \Bigl( \sum_{x: \, \abs{x-n \mu} > \ep}
 \bigl( \log \dim \clV_{(n-x,x)}\nonumber\\
 &-\log p_n(x) \bigr) p_n(x)\Bigr) , 
 \label{eq:ac=0:VB2} \\
0=&\lim_{n \to \infty} \Bigl(\sum_{x: \, \abs{x-n \mu} > \ep}
 v(x-n\mu) p_n(x) \Bigr) .
 \label{eq:ac=0:VB1}
\end{align}
Since \eqref{eq:ac=0:NJ1}, \eqref{eq:ac=0:NJ2} and \eqref{eq:ac=0:v(z)} imply
\begin{align*}
 0=&\lim_{n\to \infty} \Bigl(\bbE[v(X-n\mu)]\\
 &-(n C_1+ C_2+C_3\phi+C_4\sigma^2)\Bigr).
\end{align*}
\eqref{eq:ac=0:VB1} implies 
\begin{align}
0= &
\lim_{n \to \infty} \Bigl( 
  \Bigl(\sum_{x: \, \abs{x-n \mu} \le \ep} v(x-n\mu) p_n(x)\Bigr)\nonumber \\
  &  -(n C_1+ C_2+C_3\phi+C_4\sigma^2)\Bigr) .
\label{eq:ac=0:VB3}\end{align}
Moreover, we have 
\begin{align}\label{eq:ac=0:VB4}
\begin{split}
&\sum_{x: \, \abs{x-n \mu} \le \ep}
 \Bigl( \bigl(\log \dim \clV_{(n-x,x)}-\log p_n(x) \bigr)\\
& - v(x-n\mu) \Bigr) p_n(x) \\
\stackrel{(a)}{\le} &
 \sum_{x: \, \abs{x-n \mu} \le \ep} 
 C_7 \, \tfrac{\abs{x-n \mu}^3}{n^2} \, p_n(x) \\
 \le & C_7 \, \bbE\Big[\abs{\tfrac{X-n\mu}{\sqrt{n}}}^3\Bigr] \, \frac{1}{\sqrt{n}}
 \xrightarrow[ \ n \to \infty \ ]{(b)} 0, 
\end{split}
\end{align}
where $(a)$ and $(b)$ follow from \eqref{eq:ac=0:NN1} and \eqref{NVT}, respectively.
The combination of \eqref{eq:ac=0:VB2}, \eqref{eq:ac=0:VB3}, 
and \eqref{eq:ac=0:VB4} implies \eqref{eq:ac=0:NT1}.
\end{proof}

%%%%%%%%%%%%%%%%%%%%%%%%%%%%%%%%%%%%%%%%%%%%%%%%%%%%%%%%%%%%%%%%%%%%%%%%%%%%%%%%%%%%%%%%%%
%%%%%%%%%%%%%%%%%%%%%%%%%%%%%%%%%%%%%%%%%%%%%%%%%%%%%%%%%%%%%%%%%%%%%%%%%%%%%%%%%%%%%%%%%%

\section*{Acknowledgement}
The author was supported in part by the National
Natural Science Foundation of China under Grant 62171212.
He is grateful to Professor Akito Hora and Professor Shintaro Yanagida
for helpful discussions on the topic of this paper.
In particular, 
he is thankful for Professor Shintaro Yanagida
to providing Figure \ref{fig:intro:vN} and helping his description of a part of this paper.

%\section*{References}
%\addcontentsline{toc}{section}{References}

\bibliographystyle{quantum}
\bibliography{final}

\end{document}